\documentclass[11pt,a4paper]{article}
\usepackage{amsfonts,amsmath,amssymb,amsthm}
\usepackage{latexsym,amscd}
\usepackage{amsbsy}
\usepackage{CJK}
\usepackage{indentfirst,mathrsfs}
\usepackage{latexsym,amscd}
\usepackage{amsbsy}
\usepackage[noadjust]{cite}
\usepackage{color}
\usepackage{multirow}
\usepackage{makecell}
\usepackage{float}
\usepackage{booktabs}
\usepackage[margin=2.5cm]{geometry}

\usepackage{scalerel,stackengine}
\stackMath
\newcommand\reallywidehat[1]{
	\savestack{\tmpbox}{\stretchto{
			\scaleto{
				\scalerel*[\widthof{\ensuremath{#1}}]{\kern-.6pt\bigwedge\kern-.6pt}
				{\rule[-\textheight/2]{1ex}{\textheight}}
			}{\textheight}
		}{0.5ex}}
	\stackon[1pt]{#1}{\tmpbox}
}
%\parskip

\numberwithin{equation}{section}
\newtheorem{theo}{Theorem}
\newtheorem{lem}{Lemma}
\newtheorem{coro}{Corollary}

\newtheorem{definition}{Definition}
\newtheorem{example}{Example}
\newtheorem{rem}{Remark}

%\newtheorem{alg}{Algorithm}
%\newtheorem{inp}{input}

%\numberwithin{equation}{section}
%\renewcommand{\theequation}{\arabic{section}.\arabic{equation}}

%\definecolor{red}{rgb}{1,0,0}
%\definecolor{blue}{rgb}{0,0,1}
%\definecolor{darkblue}{rgb}{0,0,0.4}
%\newcommand{\red}{\color{red}}
%\newcommand{\blue}{\color{blue}}
%\newcommand{\darkblue}{\color{darkblue}}

\title{Multilevel inserting constructions for constant dimension subspace codes}
\author{Gang Wang\textsuperscript{1}\and Xuan Gao\textsuperscript{1} \and Sihem Mesnager\textsuperscript{2,3}\and Fang-Wei Fu\textsuperscript{4}}
\date{\small\textsuperscript{1}College of Science, Civil Aviation University of China, 300300, Tianjin, China. E-mail: gwang06080923@mail.nankai.edu.cn\\
	\small\textsuperscript{2}Department of Mathematics, University of Paris VIII, F-93526 Saint-Denis, Laboratory Geometry, Analysis and Applications, LAGA, University Sorbonne Paris Nord, CNRS, UMR 7539, F-93430, Villetaneuse, France.  E-mail: smesnager@univ-paris8.fr \\
	\small\textsuperscript{3}Telecom Paris, Polytechnic Institute of Paris, 91120 Palaiseau, France.\\
	\small\textsuperscript{4}Chern Institute of Mathematics and LPMC, Nankai University, 300071, Tianjin, China. E-mail: fwfu@nankai.edu.cn.\\}

%\author{Deng-Ming Xu\textsuperscript{1}\and Gang Wang\textsuperscript{2} \and
	%Sihem Mesnager\textsuperscript{3,4} \and You Gao\textsuperscript{2} \and  Fang-Wei Fu\textsuperscript{5}}
%\date{\small\textsuperscript{1} Sino-European Institute of Aviation Engineering,
	% Civil Aviation University of China, 300300, Tianjin, China. E-mail: xudeng17@163.com\\
	%\small\textsuperscript{2}
	% College of Science, Civil Aviation University of China, 300300, Tianjin, China. E-mail: gwang06080923@mail.nankai.edu.cn;  gao$_{-}$you@263.net\\
	%\small\textsuperscript{3}Department of Mathematics, University of Paris VIII, F-93526 Saint-Denis, Laboratory Geometry, Analysis and Applications, LAGA, University Sorbonne Paris Nord, CNRS, UMR 7539, F-93430, Villetaneuse, France.  E-mail: smesnager@univ-paris8.fr \\ \small\textsuperscript{4}Telecom Paris, Polytechnic Institute of Paris, 91120 Palaiseau, France.\\
	%\small\textsuperscript{5} Chern Institute of Mathematics and LPMC, Nankai University, 300071, Tianjin, China. E-mail: fwfu@nankai.edu.cn.\\
	%\small D.-M Xu and G. Wang contributed equally to this work. }

\begin{document}
	
	\maketitle
	
	\begin{abstract}

		Subspace codes, especially constant dimension subspace codes (CDCs), represent an intriguing domain that can be used to conduct basic coding theory investigations. They have received widespread attention due to their applications in random network coding.  This paper presents inverse bilateral multilevel construction by introducing inverse bilateral identifying vectors and inverse bilateral Ferrers diagram rank-metric codes. By inserting the inverse bilateral multilevel construction into the double multilevel  construction and bilateral multilevel construction, an effective construction for CDCs is provided. Furthermore, via providing a new set of  bilateral identifying vectors, we give another efficient construction for CDCs. In this article, several CDCs are exhibited, equipped with the rank-metric, with larger sizes than the known ones in the existing literature. From a practical standpoint, our results could help in the pragmatic framework of constant-dimension-lifted rank-metric codes for applications in network coding.  The ratio of the new lower bound to the known upper bound for some CDCs is calculated, which is greater than 0.94548 for any prime power $q \geq 3.$
		
	\end{abstract}
	
	\noindent
	{\it Keywords.} Random network coding, constant dimension subspace codes, inverse bilateral multilevel construction, multilevel inserting construction, rank-metric codes, Ferrers diagrams. \\
	
	{\bf Mathematics Subject Classification: }  11R32, 12E10, 11G20,  11R11, 11R16.

	\section{Introduction}
	
	Throughout this paper, $q$ denotes a prime power and $\mathbb{F}_q$ is the finite field of order $q$.  Given a finite set $S$, its cardinality is denoted by $\vert S\vert$.  Let $\mathbb{F}_q^n$ denote the $n$-dimensional vector space over $\mathbb{F}_q$.  The set of all subspaces of  $\mathbb{F}_q^n$, called the projective space of order $n$ over  $\mathbb{F}_q$ denoted by $\mathcal{P}_q(n)$.  Given a nonnegative integer $k \leq n$,   the set of all $k$-dimensional 
	subspaces of  $\mathbb{F}_q^n$, forms the Grassmannian space (Grassmannian in short) over $\mathbb{F}_q$, which is denoted by $\mathcal{G}_q(n, k)$.

	It is well-known that, 
	$$\left|\mathcal{G}_q(n, k)\right|=\genfrac[]{0pt}{}{n}{k}_q \triangleq \prod_{i=0}^{k-1} \frac{q^{n-i}-1}{q^{k-i}-1},$$ 
	where $\genfrac[]{0pt}{}{n}{k}_q$ is $q$-ary Gaussian coefficient.
	
	A (subspace) code $\mathcal{C}$  is a nonempty collection of subspaces of $\mathbb{F}_q^n$, i.e., a nonempty
	subset of $\mathcal{P}_q(n)$.  We emphasise that, unlike classical coding theory in which each codeword is
	a vector, here, each codeword of $\mathcal{C}$ is itself an entire space of vectors. A code in
	which each codeword has the same dimension, i.e., a code contained within a
	single Grassmannian, is called a constant dimension code (CDC).
	
	As in classical coding theory, defining a distance measure between codewords is important. One possible distance measure between  two subspace $\mathcal{U}$, $\mathcal{V}$ of $\mathbb{F}_q^n$ is 
	$$
	d_S(\mathcal{U}, \mathcal{V})=\operatorname{dim}(\mathcal{U})+\operatorname{dim}(\mathcal{V})-2 \operatorname{dim}(\mathcal{U} \cap \mathcal{V}).
	$$

	Such subspace metric was initially introduced in the context of error- and erasure-correction in linear network coding (see \cite{16}).
	A subset $\mathcal{C}$ of $\mathcal{G}_q(n, k)$ is called an $(n, M, d, k)_q$-CDC, if $d_S(\mathcal{U}, \mathcal{V}) \geq d$ for all $\mathcal{U}$, $\mathcal{V} \in \mathcal{C}$ with $\mathcal{U} \neq \mathcal{V}$, and $\mathcal{C}$ has size $M$. Given $n, d, k$ and $q$, denote by $A_q(n, d, k)$ the maximum number of codewords among all $(n, d, k)_q$-CDCs. An $(n, d, k)_q$-CDC with $A_q(n, d, k)$ codewords is said to be optimal. One main problem for the CDCs is determining the maximum possible size $A_q(n, d, k)$ for given parameters $n$, $d$, $k$ and $q$.
	
	In recent years, several researchers have improved the construction of CDCs to increase the size of $A_q (n, d, k)$. K\"{o}tter and Kschischang \cite{16} introduced subspace codes in their research on random network coding and proposed that the lifted maximum rank distance (MRD) codes can produce asymptotically optimal CDCs. They also gave an upper bound of CDCs. Etzion and Silberstein \cite{4} provided the multilevel construction by introducing identifying vectors and Ferrers diagram rank-metric codes (FDRMCs), which generalized the lifted MRD codes. Trautman and Rosenthal \cite{28} introduced pending dots to improve the multilevel construction. Etzion and Silberstein in \cite{6}, and Silberstein and Trautman in \cite{27} constructed large subspace codes in $\mathcal{G}_q(n, k)$ of minimum subspace distance $d=4$ or $2k-2$ by pending dots and graph matching. In \cite{29}, Xu and Chen presented a construction which can also be seen as a generalization of the lifted MRD codes. Liu, Chang and Feng in \cite{18} generalized the construction of \cite{29} to parallel construction and showed parallel multilevel construction by combining the parallel and multilevel construction. In \cite{19}, Liu and Ji generalized parallel construction, named inverse multilevel construction, and pointed out the double multilevel construction by combining the inverse and multilevel construction. Recently, Yu and Ji \cite{30} put forward the bilateral multilevel construction, which generalized the multilevel construction.
	
	In addition, Heinlein \cite{11} presented new upper bounds of CDCs, which contain lifted MRD codes and improved the
	construction of  LMRD codes. 
	\begin{lem}(Theorem 1 in {\cite{11}})\label{lem1} For $n \geq 2k$, let $\mathcal{C}$ be an
		$(n, 2\delta, k)_q$-CDC which contains a lifted MRD code.\\
		(1) If $k < 2\delta$ and $n \geq 3\delta$, then
		$$
		\begin{gathered}
			\bar{A}_q(n, 2 \delta, k) \leq q^{(n-k)(k-\delta+1)} + {A}_q(n-k, 2(2\delta-k), \delta).
		\end{gathered}
		$$
		
		If additionally $k=\delta$, $n \equiv r~(mod~k)$, $0 \leq r < k$, and 
		$\genfrac[]{0pt}{}{r}{1}_q < k$, or $(n, 2\delta , k) \in \left\lbrace (6+3l, 4+2l, 3\notag \right . \\ \left.+l), (6l,4l,3l)~|~l\geq 1\right\rbrace$, 
	    	then the bound can be achieved.\\
		(2) If $k < 2\delta$ and $n < 3\delta$, then
		$$
		\begin{gathered}
			\bar{A}_q(n, 2 \delta, k) = q^{(n-k)(k-\delta+1)} + 1.
		\end{gathered}
		$$\\
		(3)If $2 \delta \leq k < 3\delta$, then
		$$
		\begin{gathered}
			\bar{A}_q(n, 2 \delta, k) \leq q^{(n-k)(k-\delta+1)} + {A}_q(n-k, 6\delta-2k), 2\delta)+q^{(k-2\delta +1)(n-k-\delta)} \frac{\genfrac[]{0pt}{}{n-k}{\delta}_q \genfrac[]{0pt}{}{k}{2\delta-1}_q}{\genfrac[]{0pt}{}{k-\delta}{\delta-1}_q}.
		\end{gathered}
		$$
		\end{lem}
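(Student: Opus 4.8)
The plan is to normalise the contained lifted MRD code $\mathcal{L}$ so that its identifying vector is $(\underbrace{1,\dots,1}_{k},\underbrace{0,\dots,0}_{n-k})$, and then to analyse the remaining codewords of $\mathcal{C}$ through their intersection with the tail space $U=\langle e_{k+1},\dots,e_n\rangle\cong\mathbb{F}_q^{\,n-k}$. First I would record the two elementary facts that drive the whole argument. A $k$-subspace $W$ satisfies $W\cap U=0$ precisely when $W$ is the row space of $[\,I_k\mid M\,]$ for a unique matrix $M\in\mathbb{F}_q^{k\times(n-k)}$, and for two such subspaces $d_S(W,W')=2\,\mathrm{rank}(M-M')$. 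Hence the codewords with trivial tail intersection form a rank-metric code of minimum rank distance at least $\delta$, which by the Singleton-type bound for rank-metric codes contains at most $q^{(n-k)(k-\delta+1)}$ elements; since $\mathcal{L}$ already attains this number, these codewords are exactly $\mathcal{L}$. In particular every $W\in\mathcal{R}:=\mathcal{C}\setminus\mathcal{L}$ meets $U$ nontrivially, and this isolates the first summand $q^{(n-k)(k-\delta+1)}$ occurring in all three estimates.

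The second summand should come from compressing $\mathcal{R}$ into a constant dimension code in the smaller Grassmannian. The key observation is that $d_S(W,W')\geq 2\delta$ means $\dim(W\cap W')\leq k-\delta$, and since $(W\cap U)\cap(W'\cap U)=W\cap W'\cap U\subseteq W\cap W'$ this forces $\dim\!\big((W\cap U)\cap(W'\cap U)\big)\leq k-\delta$ as well. Thus, if one can assign to each $W\in\mathcal{R}$ a $\delta$-dimensional subspace $\sigma(W)\subseteq U$ injectively, the images satisfy $d_S(\sigma(W),\sigma(W'))=2\delta-2\dim(\sigma(W)\cap\sigma(W'))\geq 2\delta-2(k-\delta)=2(2\delta-k)$, so $\{\sigma(W):W\in\mathcal{R}\}$ is an $(n-k,\,\cdot\,,\delta)_q$-CDC of minimum distance $2(2\delta-k)$ and $|\mathcal{R}|\leq A_q(n-k,2(2\delta-k),\delta)$, which is part~(1). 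For part~(2) the added hypothesis $n<3\delta$ trivialises this target: two $\delta$-subspaces $X,Y\subseteq\mathbb{F}_q^{\,n-k}$ with $\dim(X\cap Y)\leq k-\delta$ would have $\dim(X+Y)=2\delta-\dim(X\cap Y)\geq 3\delta-k>n-k$, which is impossible, so $A_q(n-k,2(2\delta-k),\delta)=1$; adjoining one subspace at distance $\geq 2\delta$ from the entire lifted MRD code supplies the matching lower bound, giving equality. The step I expect to be the genuine obstacle is precisely the construction of $\sigma$: a priori $\dim(W\cap U)$ can take any value in $\{1,\dots,k\}$ rather than being constant, so realising $\mathcal{R}$ inside a single Grassmannian $\mathcal{G}_q(n-k,\delta)$ with the distance preserved requires using the full lifted MRD code as a dense anticode, not merely one of its members.

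Part~(3), where $2\delta\leq k<3\delta$, is the most delicate and is where the remaining work concentrates, since here a single compression cannot capture $\mathcal{R}$ and I would instead stratify $\mathcal{R}$ by the tail dimension $\dim(W\cap U)$. The codewords with the extremal tail behaviour compress as above, but now into $2\delta$-subspaces, yielding an $(n-k,\,6\delta-2k,\,2\delta)_q$-CDC and hence the term $A_q(n-k,6\delta-2k,2\delta)$; note $6\delta-2k=2(3\delta-k)$ is exactly the distance the same intersection estimate produces for $2\delta$-dimensional images. The codewords of intermediate type must be counted by incidences: each determines a $\delta$-subspace of $U$ together with a $(2\delta-1)$-dimensional configuration inside the $k$-dimensional head, while the freedom remaining in the head coordinates contributes a rank-metric factor $q^{(k-2\delta+1)(n-k-\delta)}$. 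Evaluating the incidence count with the Gaussian-coefficient identity $|\mathcal{G}_q(m,j)|=\genfrac[]{0pt}{}{m}{j}_q$ then produces the quotient $\genfrac[]{0pt}{}{n-k}{\delta}_q\genfrac[]{0pt}{}{k}{2\delta-1}_q\big/\genfrac[]{0pt}{}{k-\delta}{\delta-1}_q$. The crux of~(3) is to verify that the extremal and intermediate families are counted without overlap and that the distance hypothesis survives each of the two compressions; once the incidence geometry is set up correctly, the Gaussian-binomial bookkeeping that assembles the three summands is routine.
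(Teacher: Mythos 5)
This lemma is not proved in the paper at all: it is quoted verbatim as Theorem~1 of Heinlein~\cite{11}, so your attempt has to be measured against the proof in that reference. Your opening reduction is correct and standard: the codewords meeting the tail space $U$ trivially are exactly the row spaces $\mathrm{rs}[I_k \mid M]$, they form a rank-metric code of minimum distance at least $\delta$ containing the MRD code $\mathcal{D}$, and the Singleton bound forces them to coincide with the lifted code $\mathcal{L}$. But the decisive step of the known proof is absent from yours: one must show that \emph{every} $W \in \mathcal{R} = \mathcal{C} \setminus \mathcal{L}$ satisfies $\dim(W \cap U) \geq \delta$, not merely $\dim(W \cap U) \geq 1$. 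You yourself flag this as ``the genuine obstacle'' and then leave it unresolved; without it the compression $\sigma(W) \subseteq W \cap U$ cannot even be defined, and all three parts collapse. The missing argument rests on the interpolation property of MRD codes: for a linear MRD code $\mathcal{D} \subseteq \mathbb{F}_q^{k \times (n-k)}$ of minimum rank distance $\delta$ and any full-rank $T \in \mathbb{F}_q^{(k-\delta+1) \times k}$, the map $A \mapsto TA$ is a bijection onto $\mathbb{F}_q^{(k-\delta+1) \times (n-k)}$. If $\dim(W \cap U) = t \leq \delta - 1$, write a generator matrix of $W$ as $\left(\begin{smallmatrix} X & Y \\ 0 & Z \end{smallmatrix}\right)$ with $X \in \mathbb{F}_q^{(k-t) \times k}$ of full row rank $k-t \geq k-\delta+1$; taking $T$ to be $k-\delta+1$ of these rows and choosing the unique $A \in \mathcal{D}$ for which $TA$ equals the corresponding rows of $Y$ gives $\dim\bigl(W \cap \mathrm{rs}[I_k \mid A]\bigr) \geq k-\delta+1$, hence $d_S\bigl(W, \mathrm{rs}[I_k \mid A]\bigr) \leq 2\delta-2 < 2\delta$, contradicting the minimum distance. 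Once this lemma is in place, your parts (1) and (2) do go through; note also that the injectivity of $\sigma$, which you merely postulate, is then automatic, since $\sigma(W)=\sigma(W')$ would force $\dim(W \cap W') \geq \delta > k-\delta$ when $k < 2\delta$.

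Part (3) remains unproved even granting the key lemma. Your first stratum is handled correctly in outline: compressing codewords with $\dim(W \cap U) \geq 2\delta$ to $2\delta$-subspaces gives pairwise distance at least $2\cdot 2\delta - 2(k-\delta) = 6\delta-2k$, with injectivity from $2\delta > k-\delta$ (i.e.\ $k<3\delta$), which accounts for the summand $A_q(n-k, 6\delta-2k, 2\delta)$. But your treatment of the stratum $\delta \leq \dim(W \cap U) < 2\delta$ is only a description of the answer: you assert that an incidence count ``produces the quotient'' $q^{(k-2\delta+1)(n-k-\delta)} \genfrac[]{0pt}{}{n-k}{\delta}_q \genfrac[]{0pt}{}{k}{2\delta-1}_q \big/ \genfrac[]{0pt}{}{k-\delta}{\delta-1}_q$ without specifying which incidence structure is being double-counted, why each $\delta$-subspace of $U$ is met by a controlled number of such codewords, or where the rank-metric factor $q^{(k-2\delta+1)(n-k-\delta)}$ arises. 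In Heinlein's proof this counting is the bulk of the work for part (3); as written, your part (3) reproduces the shape of the bound rather than deriving it.
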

		\noindent For more information on constructions and bounds for subspace code, the interested reader is referred to
	\cite{2,6,7,8,10,33,12,13,15,17,22,23,24,25,26}, the references therein. A helpful link for concrete subspace code is: http://subspacecodes.uni-bayreuth.de.
	
	This paper is based on the above constructions and is devoted to constructing large CDCs that contain a lifted MRD code as a subset. Multilevel construction plays an important role for CDCs, yet picking up the identifying vectors effectively is still an open and difficult problem. The inverse bilateral multilevel construction and the multilevel inserting construction help us reduce this difficulty. Applying the Theorem 2 and Theorem 3, we establish the new lower bounds for some CDCs (see Corollaries 2, 4, 5, 6). Some CDCs larger than the previously best-known codes are given (see Table 1).
	
	The structure of this paper is as follows. In Section 2, we introduce some basic definitions and previous results of CDCs.
	Section 3 presents the inverse bilateral multilevel construction by introducing inverse bilateral identifying vectors and inverse bilateral Ferrers diagram rank-metric codes (see Theorem 1).  Section 4 provides a new construction (see Theorem 2) by combining the double multilevel construction, the bilateral multilevel construction and the inverse bilateral multilevel construction. Moreover, by introducing a new set of bilateral identifying vectors, we propose another construction for CDCs. Finally, Section 5 concludes this contribution.

	\section{Preliminaries}
	
	This section will introduce some basic concepts, present some definitions,  fix our notation and discuss  some previous results concerning CDCs.
	
	A rank-metric code (RMC) $\mathcal{D}$ is a subspace of the matrix space $\mathbb{F}_q^{m \times n}$, and the rank distance on $\mathbb{F}_q^{m \times n}$ is defined by
	$$
	d_R(A, B)=\operatorname{rank}(A-B), \text { for } A, B \in \mathbb{F}_q^{m \times n} .
	$$
	An $(m \times n, M, \delta)_q$ RMC $\mathcal{D}$ is a subset of $\mathbb{F}_q^{m \times n}$ with cardinality $M$ and minimum rank distance
	$$
	\delta=\min _{A, B \in \mathcal{D}, A \neq B}\left\{d_R(A, B)\right\} .
	$$
	The cardinality of RMCs has a Singleton-like upper bound, that is $M \leq q^{\max \{m, n\}(\min \{m, n\}-\delta+1)}$. When the equality holds, we call $\mathcal{D}$ a maximum rank distance (MRD) code. Furthermore, if $\mathcal{D}$ is a $k$-dimensional $\mathbb{F}_q$-linear subspace of $\mathbb{F}_q^{m \times n}$, then it is said to be linear, and is denoted by an $[m \times n, k, \delta]_q$ RMC. Linear MRD codes exist for all feasible parameters (cf. \cite{3,9}).
	
We write ${I}_{k}$ to denote the $k \times k$ identity matrix.
	
	\begin{lem}{\cite{26}}\label{lem2} (\textbf{Lifted MRD codes}) Let $n \geq 2k$. The lifted MRD code
		$$
		\mathcal{C}=\left\{{rs}\left({I}_{k} \mid {A}\right): {A} \in \mathcal{D}\right\}
		$$
		is an $\left(n, q^{(n-k)(k-\delta+1)}, 2 \delta, k\right)_q$-CDC, where $\mathcal{D}$ is a $[k \times(n-k), \delta]_{q}$-MRD code.
	\end{lem}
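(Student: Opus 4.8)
The plan is to verify the three defining ingredients of an $(n, M, 2\delta, k)_q$-CDC in turn: that every element of $\mathcal{C}$ is a $k$-dimensional subspace, that $M = |\mathcal{C}| = q^{(n-k)(k-\delta+1)}$, and that any two distinct codewords are at subspace distance at least $2\delta$. First I would observe that for each $A \in \mathcal{D}$ the $k \times n$ matrix $(I_k \mid A)$ has rank $k$, since its leading $k \times k$ block is the identity and so its rows are linearly independent; hence $rs(I_k \mid A) \in \mathcal{G}_q(n,k)$ and $\mathcal{C} \subseteq \mathcal{G}_q(n,k)$ is indeed a constant dimension code.

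For the cardinality, I would note that $(I_k \mid A)$ is already in reduced row echelon form with its pivots in the first $k$ columns, so $rs(I_k \mid A) = rs(I_k \mid B)$ forces $(I_k \mid A) = (I_k \mid B)$ and hence $A = B$. Thus the lifting map $A \mapsto rs(I_k \mid A)$ is injective and $|\mathcal{C}| = |\mathcal{D}|$. Because $n \geq 2k$ gives $\max\{k, n-k\} = n-k$ and $\min\{k, n-k\} = k$, the Singleton equality defining an MRD code reads $|\mathcal{D}| = q^{(n-k)(k-\delta+1)}$, which is the asserted value of $M$.

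The heart of the proof is the distance bound, for which I would compute the intersection dimension of two codewords explicitly. Given $\mathcal{U} = rs(I_k \mid A)$ and $\mathcal{V} = rs(I_k \mid B)$ with $A \neq B$, a vector lies in $\mathcal{U} \cap \mathcal{V}$ precisely when $x(I_k \mid A) = y(I_k \mid B)$ for some row vectors $x, y \in \mathbb{F}_q^k$; comparing the first $k$ coordinates forces $x = y$, and comparing the last $n - k$ coordinates then forces $x(A - B) = 0$. Since the rows of $(I_k \mid A)$ are linearly independent, the assignment $x \mapsto x(I_k \mid A)$ is injective, so $\dim(\mathcal{U} \cap \mathcal{V})$ equals the dimension of the left kernel of $A - B$, namely $k - \operatorname{rank}(A - B)$. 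As $A$ and $B$ are distinct codewords of a code of minimum rank distance $\delta$, we have $\operatorname{rank}(A - B) \geq \delta$, so $\dim(\mathcal{U} \cap \mathcal{V}) \leq k - \delta$; substituting into $d_S(\mathcal{U}, \mathcal{V}) = 2k - 2\dim(\mathcal{U} \cap \mathcal{V})$ gives $d_S(\mathcal{U}, \mathcal{V}) \geq 2\delta$. The only genuine obstacle is establishing this identity $\dim(\mathcal{U} \cap \mathcal{V}) = k - \operatorname{rank}(A - B)$, which is exactly where the rank-metric structure of $\mathcal{D}$ enters; the remaining verifications are routine.
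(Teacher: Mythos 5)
Your proof is correct and complete: the identity block gives rank $k$, injectivity of the lifting map $A \mapsto rs(I_k \mid A)$ together with $n \geq 2k$ (so $\max\{k,n-k\} = n-k$, $\min\{k,n-k\} = k$) gives the cardinality, and the left-kernel computation $\dim(\mathcal{U} \cap \mathcal{V}) = k - \operatorname{rank}(A-B)$ gives the distance bound. There is, however, nothing in the paper to compare against directly: the paper states this lemma as a known result quoted from \cite{26} and offers no proof of its own. Your intersection argument is the standard one from the literature. It is worth noting that within the paper's own toolkit the distance bound also falls out immediately from Lemma~\ref{lem8}: each matrix $(I_k \mid A)$ is already in RREF, and every codeword has the same identifying vector $(\underbrace{1\cdots 1}_{k}0\cdots 0)$, so Lemma~\ref{lem8} gives $d_S(\mathcal{U},\mathcal{V}) = 2\, d_R(A,B) \geq 2\delta$ for distinct codewords (here the submatrices $C_U$, $C_V$ obtained by deleting the pivot columns are exactly $A$, $B$). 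Your explicit computation is in effect a self-contained proof of precisely this special case of Lemma~\ref{lem8}; the trade-off is that your route needs no external machinery, while the Lemma~\ref{lem8} route is a one-line specialization once the multilevel framework is set up.
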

	Given $n, k, \delta$ and $q$, denote by $\bar{A}_q(n, 2 \delta, k)$ the maximum number of codewords among all $(n, 2 \delta, k)_q$-CDCs containing a lifted MRD code as a subset. There are many known constructions of CDCs in the literature (cf. \cite{4,6,26,27,28}). Etzion and Silberstein in \cite{6} initially proposed the lower and upper bounds on determining $\bar{A}_q(n, 2 \delta, k)$.
	
	In $\mathbb{F}_q^{m \times n}$, the rank distribution of a RMC $\mathcal{C}$ is the $\min\{m,n\}+1$ vector whose $r$-th component is $a(q, m, n, \delta, r)(\mathcal{C})=|\{A \in \mathcal{C}: \operatorname{rank}(A)=r\}|$ with $0 \leq r \leq \min\{m,n\}$.
		
	\begin{lem}\cite{3, 9}\label{lem3} Let $\mathcal{C} \subset \mathbb{F}_q^{m \times n}$ be an MRD code with rank distance $\delta$. For $\delta \leq r \leq \min \{m, n\}$, its rank distribution is given by \\
		$$a(q, m, n, \delta, r)= \genfrac[]{0pt}{}{\min\{m,n\}}{r}_q \sum_{s=0}^{r-\delta}(-1)^s q^{\binom{s}{2}}
		\genfrac[]{0pt}{}{r}{s}_q
		\left(q^{\max \{m, n\}(r-s-\delta+1)}-1\right).$$
		
	\end{lem}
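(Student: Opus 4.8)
The plan is to reduce the statement to the classical fact that the rank distribution of an MRD code is completely determined by $q,m,n,\delta$, and to compute it by a $q$-analogue of the inclusion--exclusion argument that yields the weight distribution of MDS codes. Assume without loss of generality that $m\le n$, so that $\min\{m,n\}=m$, $\max\{m,n\}=n$, and (in the linear case) $\mathcal{C}$ has $\mathbb{F}_q$-dimension $n(m-\delta+1)$. Since the rank of a matrix is the dimension of its column space, I would organise the count around the lattice of subspaces of $\mathbb{F}_q^m$: for a subspace $U\subseteq\mathbb{F}_q^m$ set $f(U)=\lvert\{A\in\mathcal{C}:\mathrm{im}(A)\subseteq U\}\rvert$ and $g(U)=\lvert\{A\in\mathcal{C}:\mathrm{im}(A)=U\}\rvert$, so that $f(U)=\sum_{W\subseteq U}g(W)$ and $a(q,m,n,\delta,r)=\sum_{\dim U=r}g(U)$. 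Computing the cumulative quantities $f(U)$ first, then recovering the $g$'s, is the natural route.

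The heart of the argument is a key lemma: if $\dim U=u$ then $f(U)=q^{n(u-\delta+1)}$ whenever $u\ge\delta-1$, and $f(U)=1$ otherwise. I would prove this by squeezing. On one hand, the subcode $\mathcal{C}_U=\{A\in\mathcal{C}:\mathrm{im}(A)\subseteq U\}$, expressed inside $\mathbb{F}_q^{u\times n}$ after choosing a basis of $U$, has minimum rank distance at least $\delta$ (its nonzero members are nonzero codewords of $\mathcal{C}$), so the Singleton-like bound gives $\lvert\mathcal{C}_U\rvert\le q^{n(u-\delta+1)}$ because $u\le n$. On the other hand, composing codewords with the quotient $\mathbb{F}_q^m\to\mathbb{F}_q^m/U$ defines a linear map $\psi_U\colon\mathcal{C}\to\mathbb{F}_q^{(m-u)\times n}$ whose kernel is exactly $\mathcal{C}_U$, so $\lvert\mathrm{im}\,\psi_U\rvert\le q^{n(m-u)}$. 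Combining $\lvert\mathcal{C}\rvert=\lvert\mathcal{C}_U\rvert\cdot\lvert\mathrm{im}\,\psi_U\rvert$ with $\lvert\mathcal{C}\rvert=q^{n(m-\delta+1)}$ forces both inequalities to be equalities, so $\lvert\mathcal{C}_U\rvert=q^{n(u-\delta+1)}$; the degenerate range $u<\delta$ admits only the zero codeword, giving $f(U)=1$. This step is where the MRD hypothesis is used in full, and I expect it to be the main obstacle: one must verify that restricting the codomain to $U$ preserves the minimum distance and that the two Singleton bounds meet exactly, so that $f$ is genuinely constant on subspaces of a fixed dimension.

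With the key lemma in hand I would recover $g$ by Möbius inversion over the subspace lattice of $\mathbb{F}_q^m$, whose Möbius function for a pair $W'\subseteq W$ of codimension $s$ is $(-1)^s q^{\binom{s}{2}}$. Summing $g(U)=\sum_{W'\subseteq U}(-1)^{\dim U-\dim W'}q^{\binom{\dim U-\dim W'}{2}}f(W')$ over all $U$ of dimension $r$, and using that $\mathbb{F}_q^m$ has $\genfrac[]{0pt}{}{m}{r}_q$ subspaces of dimension $r$, each containing $\genfrac[]{0pt}{}{r}{s}_q$ subspaces of codimension $s$ on which $f$ takes the constant value $q^{n(r-s-\delta+1)}$, yields
$$a(q,m,n,\delta,r)=\genfrac[]{0pt}{}{m}{r}_q\sum_{s=0}^{r}(-1)^s q^{\binom{s}{2}}\genfrac[]{0pt}{}{r}{s}_q\,q^{n(r-s-\delta+1)}.$$
Finally I would reconcile this with the stated closed form, whose sum stops at $s=r-\delta$ and carries an extra $-1$ inside the bracket. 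The passage between the two is exactly the $q$-binomial identity $\sum_{s=0}^{r}(-1)^s q^{\binom{s}{2}}\genfrac[]{0pt}{}{r}{s}_q=0$ (the specialisation $x=1$ of $\prod_{i=0}^{r-1}(1-q^ix)=\sum_{s=0}^{r}(-1)^s q^{\binom{s}{2}}\genfrac[]{0pt}{}{r}{s}_q x^s$), which lets me peel off the low-dimensional subspaces where $f\equiv1$ and absorb them into the $-1$ while truncating the sum. For a possibly non-linear MRD code the same computation applies to the distance distribution, which coincides with the rank distribution for MRD codes, so the formula is unchanged.
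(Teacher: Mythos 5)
The paper never proves this lemma: it is imported verbatim from Delsarte \cite{3} and Gabidulin \cite{9}, so there is no internal argument to compare against, and your proposal has to be judged as a self-contained replacement for that citation. Judged that way, it is correct for linear MRD codes, and it is the standard counting proof: the squeeze $\lvert\mathcal{C}\rvert=\lvert\mathcal{C}_U\rvert\cdot\lvert\mathrm{im}\,\psi_U\rvert\le q^{n(u-\delta+1)}\cdot q^{n(m-u)}=q^{n(m-\delta+1)}$ forcing $f(U)=q^{n(u-\delta+1)}$ for $\dim U=u\ge\delta-1$, then M\"obius inversion over the subspace lattice with $\mu(W',W)=(-1)^sq^{\binom{s}{2}}$, then the $q$-binomial identity $\sum_{s=0}^{r}(-1)^sq^{\binom{s}{2}}\genfrac[]{0pt}{}{r}{s}_q=0$ to reach the truncated closed form. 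This is closer in spirit to the anticode/orthogonal-array style of proof than to Delsarte's original association-scheme derivation, and it does deliver the formula.

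Two caveats. First, a presentational slip: your displayed intermediate formula, with the sum running to $s=r$ and the value $q^{n(r-s-\delta+1)}$ throughout, is false as written --- for $s>r-\delta+1$ the correct value of $f$ is $1$, and your display would even contain negative powers of $q$. The correct computation splits the sum at $s=r-\delta+1$, uses the identity to rewrite the tail $\sum_{s=r-\delta+2}^{r}$ as $-\sum_{s=0}^{r-\delta+1}$, and observes that the $s=r-\delta+1$ term of $q^{n(r-s-\delta+1)}-1$ vanishes; this is exactly what your prose (``peel off \dots absorb into the $-1$'') describes, so the idea is present, but the display should never have been asserted as an identity. Second, and more substantively, the closing sentence on nonlinear MRD codes is not a proof: the factorization $\lvert\mathcal{C}\rvert=\lvert\ker\psi_U\rvert\cdot\lvert\mathrm{im}\,\psi_U\rvert$ uses that $\psi_U$ is a homomorphism on $\mathcal{C}$, i.e.\ linearity, and the claim that the distance distribution coincides with the rank distribution is precisely distance-invariance of MRD codes, which needs its own argument (for instance: $\psi_W$ is injective, hence bijective, when $\dim W=\delta-1$, so after translating any codeword to $0$ every fiber of $\psi_U$ has size exactly $q^{n(u-\delta+1)}$). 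Indeed the stated rank-distribution formula is simply false for a nonlinear MRD code that does not contain the zero matrix, so some such hypothesis is unavoidable. Since the paper only ever applies the lemma to linear MRD codes, the linear case you prove carefully is all that is needed.
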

	
	Liu, Chang and Feng defined rank-metric codes with given ranks in \cite{18}, which is useful for the construction of CDCs.
	
	\begin{definition} \cite{18}\label{definition1} Let $K \subset\{0,1, \cdots, n\}$ and $\delta$ be a positive integer. We say $\mathcal{D} \subset \mathbb{F}_q^{m \times n}$ is an $(m \times n, M, \delta, K)_q$ rank-metric code with given ranks (GRMC) if it satisfies:\\
		(1) ${rank}({D}) \in K$ for any ${D} \in \mathcal{D}$;\\
		(2) $d_R\left({D}_1, {D}_2\right)={rank}\left({D}_1-{D}_2\right) \geq \delta$ for any ${D}_1, {D}_2 \in \mathcal{D}$, and ${D}_1 \neq {D}_2$;\\
		(3) $|\mathcal{D}|=M$.
	\end{definition}
	
	When $K=\left\lbrace 0,1,\cdot\cdot\cdot,n\right\rbrace $, a GRMC is simply an ordinary RMC that may not be linear. When $K = \left\lbrace t\right\rbrace$ for $0 \leq
	t \leq n$, a GRMC is called constant-rank code (cf. \cite{33}).
	Usually, $K$ is selected as a set of consecutive integers. Let $\left[t_1, t_2\right]$ denote the set of all integers $x$ such that $t_1 \leq x \leq t_2$.
	Given $m, n, K$ and $\delta$, denote by $A_q^G(m \times n, \delta, K)$ the maximum number of codewords among all $(m \times n, \delta, K)_q$-GRMCs. 
	 A lower bound for $A_q^G(m \times n, \delta, [t_1, t_2])$  is as follows.
	
	\begin{lem}\cite{18}\label{lem4} Let $1 \leq \delta \leq n \leq m$. Let $t_1$ be a nonnegative integer and $t_2$ be a positive integer such that $t_1 \leq t_2 \leq n$. Then
		
		 $A_q^G\left(m \times n, \delta,\left[t_1, t_2\right]\right) \geq$
	    $
		\begin{cases}\sum_{i=t_1}^{t_2} a(q, m, n, \delta, i), &   {if}\ t_2 \geq \delta ; \\ \max _{\max\{1, t_1 \leq d<\delta\}}\left\{\left[\frac{\sum_{i=\max \left\{1, t_1\right\}}^{t_2} a(q, m, n, d, i)}{q^{m(\delta-d)}-1}\right]\right\}, &  { otherwise. }\end{cases}
		$
	\end{lem}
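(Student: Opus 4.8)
The plan is to realise, in each of the two cases, a GRMC of the asserted size as an explicitly chosen subset of codewords of a linear MRD code, and then to count the retained codewords using the rank distribution of Lemma~\ref{lem3}. The structural input I would rely on is the existence of a nested family of linear MRD codes (Gabidulin codes) in $\mathbb{F}_q^{m\times n}$: for each integer $e$ with $1\le e\le n$ there is a linear MRD code $\mathcal{D}_e$ of $\mathbb{F}_q$-dimension $m(n-e+1)$ and minimum rank distance $e$, and these may be chosen so that $\mathcal{D}_{e'}\subseteq\mathcal{D}_e$ whenever $e\le e'$. Since $\mathcal{D}_e$ has minimum distance $e$, its only codeword of rank strictly below $e$ is the zero matrix, so for $r\ge e$ the number of its codewords of rank $r$ is exactly $a(q,m,n,e,r)$, while for $1\le r<e$ this count is $0$ and for $r=0$ it is $1$.

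For the first case $t_2\ge\delta$, I would take $\mathcal{D}=\{D\in\mathcal{D}_\delta:\operatorname{rank}(D)\in[t_1,t_2]\}$. Every pair of distinct codewords of $\mathcal{D}_\delta$ already has rank distance at least $\delta$, so the same holds for $\mathcal{D}$, and by construction each codeword of $\mathcal{D}$ has rank in $[t_1,t_2]$; hence $\mathcal{D}$ is an $(m\times n,|\mathcal{D}|,\delta,[t_1,t_2])_q$-GRMC. Counting the retained codewords by rank via Lemma~\ref{lem3} gives $|\mathcal{D}|=\sum_{i=t_1}^{t_2}a(q,m,n,\delta,i)$, which is the first branch of the bound.

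The harder case is $t_2<\delta$. Here I would fix an integer $d$ with $\max\{1,t_1\}\le d<\delta$, work inside the larger code $\mathcal{D}_d\supseteq\mathcal{D}_\delta$, and set $S=\{D\in\mathcal{D}_d:\operatorname{rank}(D)\in[\max\{1,t_1\},t_2]\}$, so that $|S|=\sum_{i=\max\{1,t_1\}}^{t_2}a(q,m,n,d,i)$. The decisive observation is that inside any single additive coset $D_0+\mathcal{D}_\delta$ of $\mathcal{D}_\delta$ in $\mathcal{D}_d$, two distinct elements differ by a nonzero element of $\mathcal{D}_\delta$ and hence have rank distance at least $\delta$; thus the intersection of $S$ with one coset is automatically a GRMC of minimum distance $\delta$ whose codewords all have rank in $[\max\{1,t_1\},t_2]\subseteq[t_1,t_2]$. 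There are $|\mathcal{D}_d/\mathcal{D}_\delta|=q^{m(\delta-d)}$ cosets, but the zero coset $\mathcal{D}_\delta$ meets $S$ in nothing: its only codeword of rank below $\delta$ is the zero matrix, whose rank $0$ is excluded from $[\max\{1,t_1\},t_2]$. Hence $S$ is distributed among only $q^{m(\delta-d)}-1$ cosets, and a pigeonhole argument produces a coset meeting $S$ in at least $|S|/(q^{m(\delta-d)}-1)$ codewords; since this count is an integer, the displayed rounded bound follows, and optimising over the admissible $d$ gives the second branch.

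I expect the main obstacle to lie in the bookkeeping of the second case rather than in any single hard inequality. One must check that the zero coset genuinely contributes nothing, since this is exactly what replaces $q^{m(\delta-d)}$ by $q^{m(\delta-d)}-1$ in the denominator; one must confirm that passing to a single coset of $\mathcal{D}_\delta$ simultaneously upgrades the minimum distance from $d$ to $\delta$ while preserving the rank window; and one must verify that the range $\max\{1,t_1\}\le d<\delta$ is precisely the regime in which $\mathcal{D}_\delta\subsetneq\mathcal{D}_d$ and the counted set $S$ consists of codewords of rank in $[t_1,t_2]$. Once the coset partition is set up correctly, the pigeonhole step and the application of Lemma~\ref{lem3} are routine.
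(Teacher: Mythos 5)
Your proof is correct, and it takes the same route as the source: this paper only quotes the lemma from \cite{18} without proof, and your argument---restricting a distance-$\delta$ MRD code to the rank window $[t_1,t_2]$ when $t_2\ge\delta$, and otherwise working inside nested Gabidulin codes $\mathcal{D}_\delta\subseteq\mathcal{D}_d$, observing that the zero coset of $\mathcal{D}_\delta$ misses the rank window so that only $q^{m(\delta-d)}-1$ cosets carry codewords, and then applying pigeonhole---is exactly the argument of \cite{18}. It is also the same coset/subcode mechanism that reappears in this paper as Lemma~\ref{lem14} and in the proofs of Lemmas~\ref{lem15} and~\ref{lem16}, so no gap remains to be filled.
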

	
	According to Definition 1, Liu, Chang and Feng in \cite{18} provided the parallel construction and it is a generalization of Lemma 2.
	\begin{lem}{\cite{18}}\label{lem5} (\textbf{Parallel  construction}). Let $n \geq 2 k \geq 2 \delta$. If there exists a $(k \times(n-k), M, \delta,[0, k-\delta])_q$-GRMC, then there exists an $(n, q^{(n-k)(k-\delta+1)}+M,2 \delta,k)_q$-CDC, which contains a lifted MRD code $(n, q^{(n-k)(k-\delta+1)}, 2 \delta, k)_q$-CDC as a subset.
	\end{lem}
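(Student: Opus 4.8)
The plan is to realise the code explicitly as a union of two families of generator (row‑space) matrices and then verify the distance and the size directly. Let $\mathcal{D}$ be a $[k\times(n-k),\delta]_q$-MRD code and let $\mathcal{D}'$ be the hypothesised $(k\times(n-k),M,\delta,[0,k-\delta])_q$-GRMC. Following the idea behind Lemma~\ref{lem2}, I would set
\[
\mathcal{C}_1=\{\,\mathrm{rs}(I_k\mid A):A\in\mathcal{D}\,\},\qquad
\mathcal{C}_2=\{\,\mathrm{rs}(B\mid I_k):B\in\mathcal{D}'\,\},
\]
and take $\mathcal{C}=\mathcal{C}_1\cup\mathcal{C}_2\subseteq\mathcal{G}_q(n,k)$. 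By Lemma~\ref{lem2} the family $\mathcal{C}_1$ is exactly the lifted MRD code, an $(n,q^{(n-k)(k-\delta+1)},2\delta,k)_q$-CDC contained in $\mathcal{C}$, so the task reduces to bounding the distances that involve $\mathcal{C}_2$ and to counting. Throughout I would use that for full‑rank $X,Y\in\mathbb{F}_q^{k\times n}$ with $U=\mathrm{rs}(X)$, $V=\mathrm{rs}(Y)$ one has $d_S(U,V)=2\bigl(\operatorname{rank}\left(\begin{smallmatrix}X\\ Y\end{smallmatrix}\right)-k\bigr)$, so that every distance computation becomes a rank computation on a stacked $2k\times n$ matrix.

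The three cases to treat are as follows. For two codewords of $\mathcal{C}_1$, row‑reducing $\left(\begin{smallmatrix}I_k & A\\ I_k & A'\end{smallmatrix}\right)$ gives distance $2\,d_R(A,A')\ge 2\delta$ (this is the content of Lemma~\ref{lem2}); symmetrically, for two codewords of $\mathcal{C}_2$, clearing the first block of $\left(\begin{smallmatrix}B & I_k\\ B' & I_k\end{smallmatrix}\right)$ yields distance $2\,d_R(B,B')\ge 2\delta$, which uses only the minimum rank distance $\delta$ of $\mathcal{D}'$. The crucial case is the cross distance between $\mathrm{rs}(I_k\mid A)$ and $\mathrm{rs}(B\mid I_k)$. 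Splitting the $n$ columns into blocks of sizes $k$, $n-2k$, $k$, write $A=(A_2\mid A_3)$ and $B=(B_1\mid B_2)$ with $A_3,B_1\in\mathbb{F}_q^{k\times k}$, so the stacked matrix and its reduction are
\[
\begin{pmatrix}I_k & A_2 & A_3\\ B_1 & B_2 & I_k\end{pmatrix}
\longrightarrow
\begin{pmatrix}I_k & A_2 & A_3\\ 0 & B_2-B_1A_2 & I_k-B_1A_3\end{pmatrix}.
\]
Hence the rank is $k+\operatorname{rank}(B_2-B_1A_2\mid I_k-B_1A_3)\ge k+\operatorname{rank}(I_k-B_1A_3)$, and the cross distance is at least $2\,\operatorname{rank}(I_k-B_1A_3)$.

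The main obstacle — and the one place the hypothesis on $\mathcal{D}'$ is used — is to turn the rank restriction into the bound $\operatorname{rank}(I_k-B_1A_3)\ge\delta$. Here $B_1$ is a column submatrix of $B\in\mathcal{D}'$, so $\operatorname{rank}(B_1)\le\operatorname{rank}(B)\le k-\delta$; thus $\operatorname{rank}(B_1A_3)\le k-\delta$ and, by $\operatorname{rank}(I_k-B_1A_3)\ge k-\operatorname{rank}(B_1A_3)$, we get $\operatorname{rank}(I_k-B_1A_3)\ge\delta$. This forces the cross distance to be at least $2\delta$ and, in particular, to be positive, so no element of $\mathcal{C}_1$ coincides with one of $\mathcal{C}_2$. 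Finally, for the count, the within‑family distance bounds show that $A\mapsto\mathrm{rs}(I_k\mid A)$ and $B\mapsto\mathrm{rs}(B\mid I_k)$ are injective on $\mathcal{D}$ and $\mathcal{D}'$ respectively, and disjointness of $\mathcal{C}_1$ and $\mathcal{C}_2$ then gives $|\mathcal{C}|=|\mathcal{D}|+|\mathcal{D}'|=q^{(n-k)(k-\delta+1)}+M$. Combining the three cases shows the minimum subspace distance of $\mathcal{C}$ is at least $2\delta$, completing the verification that $\mathcal{C}$ is the required CDC containing the lifted MRD code.
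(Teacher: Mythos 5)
Your proof is correct: the paper itself imports this lemma from \cite{18} without proof, and your construction $\mathcal{C}_1\cup\mathcal{C}_2$ with the second family $\{\mathrm{rs}(B\mid I_k)\}$, together with the row-reduction distance analysis and the key estimate $\operatorname{rank}(I_k-B_1A_3)\ge k-\operatorname{rank}(B_1A_3)\ge\delta$ coming from $\operatorname{rank}(B)\le k-\delta$, is exactly the standard parallel construction argument, matching how the paper uses the lemma (e.g.\ the set $\mathcal{C}_2=\{\mathrm{rs}(A\mid I_k):\operatorname{rank}(A)\le k-\delta\}$ in the proof of Corollary 3).
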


	Below, we present the definitions of reduced row echelon form (RREF) and reduced row inverse echelon form (RRIEF).
		\begin{definition} \cite{4,19}\label{definition2}	A matrix is in RREF  (respectively RRIEF ) if:
		\begin{itemize}
			\item The leading coefficient of a row is always to the right (respectively left) of the leading coefficient of the previous row.
			\item All leading coefficients are ones.
			\item Every leading coefficient is the only nonzero entry in its column.
		\end{itemize}
	\end{definition} 
	
	A $k$-dimensional subspace $\mathcal{U}$ of $\mathbb{F}_q^n$ can be represented by a $k \times n$ generator matrix $U$ whose rows form a basis of $\mathcal{U}$. There exists a unique generator matrix $E(\mathcal{U})$ (respectively $\widehat{E}(\mathcal{U}))$ in RREF (respectively RRIEF).
	The\textit{ identifying} (respectively \textit{inverse identifying}) \textit{vector} $v(\mathcal{U})$ (respectively $\hat{v}(\mathcal{U}))$ of $\mathcal{U}$ is a binary vector of length $n$ and weight $k$, where the $k$ ones of positions are exactly the pivots of $E(\mathcal{U})$ (respectively $\hat{E}(\mathcal{U}))$.
	
		The \textit{echelon Ferrers form} (repectively\textit{ inverse echelon Ferrers form}) of an identifying vector $v$ (respectively an inverse identifying vector $\hat{v}$) of length $n$ and weight $k$, $EF(v)$ (respectively $\widehat{EF}(\hat{v})$) is the matrix in RREF (respectively RRIEF) with leading entries (of rows) in the columns indexed by the nonzero entries of $v$ (respectively $\hat{v}$) and $\bullet$ (called a dot) in all entries which do not have terminals zeros or ones. 
	
		\begin{example}\label{example1} 
			Consider the generator matrix of subspace  $\mathcal{U} \in \mathcal{G}_2(7,3)$ with
			RREF and RRIEF
			
			$$
			\begin{aligned}
				& E(\mathcal{U})=\left[\begin{array}{lllllll}
					\mathbf{1} & 1 & 0 & 0 & 0 & 0 & 1 \\
					0 & 0 & \mathbf{1} & 0 &1 & 0 & 1 \\
					0 & 0 & 0 &\mathbf{1}& 0 & 1 & 0
				\end{array}\right],
			\end{aligned}
		\begin{aligned}
			&\widehat{E}(\mathcal{U})=\left[\begin{array}{lllllll}
				1 &0 & 0 & 0 & 1 & 0 & \mathbf{1} \\
				1 & 0 & 0 & 0 & 1& \mathbf{1} & 0 \\
				1 & 0 & 0 & \mathbf{1} & 0 & 0 & 0
			\end{array}\right].
		\end{aligned}
			$$
		Its identifying vector is $v=(1011000)$, and inverse identifying vector $\hat{v}=(0001011)$. The echelon Ferrers form $E F(v)$ and the inverse echelon Ferrers form $\widehat{E F}(\hat{v})$ are as follows:
		$$
		\begin{aligned}
			& E F(v)=\left[\begin{array}{lllllll}
				\mathbf{1} & \bullet & 0 & 0 & \bullet & \bullet & \bullet \\
				0 & 0 & \mathbf{1} & 0 &\bullet & \bullet & \bullet \\
				0 & 0 & 0 &\mathbf{1}& \bullet & \bullet & \bullet
			\end{array}\right],
		\end{aligned}
		\begin{aligned}
			&\widehat{E F}(\hat{v})=\left[\begin{array}{lllllll}
				\bullet &\bullet & \bullet & 0 & \bullet & 0 & \mathbf{1} \\
				\bullet & \bullet & \bullet & 0 & \bullet & \mathbf{1} & 0 \\
				\bullet & \bullet & \bullet & \mathbf{1} & 0 & 0 & 0
			\end{array}\right].
		\end{aligned}
		$$
	\end{example}
	
	Based on the above, the authors provided the definition of a matrix in reduced row bilateral echelon form (RRBEF) in \cite{30}.

	\begin{definition} \cite{30}\label{definition3}
		An $m\times n$ matrix $\overline{E}(\mathcal{U})$ is said to be in RRBEF if its $m_1\times n_1$ submatrix in the upper left-hand corner is in RREF, its  $m_2\times n_2$ submatrix in the lower right-hand corner is in RRIEF, and the middle part is an $(m_1+m_2)\times (n-n_1-n_2)$ submatrix, where $m_1, m_2, n_1, n_2,m,n$ are positive integers, $m_1+m_2=m$ and $n_1+n_2 \leq n.$
	\end{definition}
	
	A \textit{bilateral identifying vector} is a binary vector of length $n$ and weight $k$, whose the subvector consisting of the first $n_1$ coordinates is a normal identifying vector and whose subvector consisting of the last $n_2$ coordinates is an inverse identifying vector, where $n_1, n_2, k,n$ are positive integers and $n_1+n_2 \leq n$, denoted by $\bar{v}=$ $\left(x_1, \dots, x_{n_1}, \bar{x}_{n_1+1}, \dots, \bar{x}_{n-n_2}, \hat{x}_{n-n_2+1}, \dots, \hat{x}_n\right)$. The ordered triple $(n_1, n-n_1-n_2, n_2)$
	is designated as the type of this bilateral identifying vector. 
	The \textit{bilateral echelon Ferrers form} of the bilateral identifying vector $\bar{v}$ is the $k \times n$ matrix in RRBEF such that the $k_1 \times n_1$ submatrix in the upper left corner is the echelon Ferrers form of $v_1=(x_1, \dots, x_{n_1})$, the $k_2 \times n_2$ submatrix in the lower right corner is the inverse echelon Ferrers form of $\hat{v}_2=\left(\hat{x}_{n-n_2+1}, \dots, \hat{x}_n\right)$, the $k \times (n-n_1-n_2)$ in the middle is filled with dots and the rest is 0, where $k_1,k_2$ are positive integers and $k_1+k_2=k.$

		\begin{example}\label{example3} 
		Consider the subspace $\mathcal{U} \in \mathcal{G}_2(14,6)$ and assume that a generator matrix in RRBEF of $\mathcal{U}$ is as follows:
		
		$$
		\begin{aligned}
			& \overline{E}(\mathcal{U})=\left[\begin{array}{llllllllllllll}
				\mathbf{1} & 0 & 1 & 0 & 0 & 0 & 1 & 0 & 0 & 0 & 0 & 0 & 0 & 0 \\
				0 & \mathbf{1} & 1 & 0 & 1 & 0 & 1 & 0 & 0 & 0 & 0 & 0 & 0 & 0 \\
				0 & 0 & 0 &\mathbf{1} & 1 & 1 & 0 & 0 & 0 & 0 & 0 & 0 & 0 & 0 \\
				0 & 0 & 0 & 0 & 0 & 0 & 0 & 1 & 1 & 0 & 0 & 0 & 1 & \mathbf{1} \\
				0 & 0 & 0 & 0 & 0 & 0 & 1 & 0 & 1 & 1 & 0 & \mathbf{1} & 0 & 0 \\
				0 & 0 & 0 & 0 & 0 & 0 & 1& 0 & 0 & 0 & \mathbf{1} & 0 & 0 & 0
			\end{array}\right].
		\end{aligned}
		$$
		Its bilateral identifying vector is $\bar{v}=(110100 \bar{0} \bar{0} \hat{0} \hat{0} \hat{1} \hat{1} \hat{0} \hat{1})$,  and the bilateral echelon Ferrers form $\overline{E F}(\bar{v})$ is as follows:
		$$
		\begin{aligned}
			& \overline{E F}(\bar{v})=\left[\begin{array}{llllllllllllll}
				\mathbf{1} & 0 & \bullet & 0 & \bullet & \bullet & \bullet & \bullet & 0 & 0 & 0 & 0 & 0 & 0 \\
				0 & \mathbf{1} & \bullet & 0 & \bullet & \bullet & \bullet & \bullet & 0 & 0 & 0 & 0 & 0 & 0 \\
				0 & 0 & 0 &\mathbf{1} & \bullet & \bullet & \bullet & \bullet & 0 & 0 & 0 & 0 & 0 & 0 \\
				0 & 0 & 0 & 0 & 0 & 0 & \bullet & \bullet & \bullet & \bullet & 0 & 0 & \bullet & \mathbf{1} \\
				0 & 0 & 0 & 0 & 0 & 0 & \bullet & \bullet & \bullet & \bullet & 0 & \mathbf{1} & 0 & 0 \\
				0 & 0 & 0 & 0 & 0 & 0 & \bullet& \bullet & \bullet & \bullet & \mathbf{1} & 0 & 0 & 0
			\end{array}\right].
		\end{aligned}
		$$

	\end{example}

	An $m \times n$ \textit{Ferrers diagram} $\mathcal{F}$ is an $m × n$ array of dots and empty entries such that 
	 all dots are shifted to the right,
	 the number of dots in each row is less than or equal to the number of dots in the previous row,
	 and the ﬁrst row has $n$ dots and the rightmost column has $m$ dots.

	For a given $m\times n$ Ferrers diagram $\mathcal{F}$, we denote by $|\mathcal{F}|$ the number of dots in $\mathcal{F}$, and we call it a full Ferrers diagram if the number of dots in $\mathcal{F}$ is $mn$.
	In general, the number of dots in the $i$-th column of $\mathcal{F}$ is denoted by $\gamma_i$ and the number of dots in the $i$-th row of $\mathcal{F}$ is denoted by $\rho_i$. There is a unique $m \times n$ Ferrers diagram $\mathcal{F}$  such that the $i$-th column of $\mathcal{F}$ has cardinality $\gamma_i$ for any $0 \leq i \leq n-1$. On this condition, we write $\mathcal{F}=\left[\gamma_0, \gamma_1, \cdots, \gamma_{n-1}\right]$, its inverse Ferrers diagram $\hat{\mathcal{F}}=\left[\gamma_{n-1}, \gamma_{n-2}, \cdots, \gamma_0\right]$, and its transposed Ferrers diagram $\mathcal{F}^t=$ $\left[\rho_{m-1}, \rho_{m-2}, \cdots, \rho_0\right]$.
	
	\begin{example}\label{example2} Let $\mathcal{F}=[2,3,4,5]$, then\\
		
		\centering 
		$
		\mathcal{F}=\begin{matrix}
			&\bullet& \bullet& \bullet&\bullet\\
			&\bullet&\bullet&\bullet&\bullet\\
			& &\bullet&\bullet&\bullet\\
			& & &\bullet&\bullet\\
			& & & &\bullet
		\end{matrix},$ \enspace    
		$
		\mathcal{\hat{F}}=\begin{matrix}
			&\bullet&\bullet&\bullet&\bullet\\
			&\bullet&\bullet&\bullet&\bullet\\
			&\bullet&\bullet&\bullet\\
			&\bullet&\bullet\\
			&\bullet
		\end{matrix},$ \enspace    
		$
		\mathcal{F}^t=\begin{matrix}
			&\bullet&\bullet&\bullet&\bullet&\bullet\\
			&&\bullet&\bullet&\bullet&\bullet\\
			&&&\bullet&\bullet&\bullet\\
			&&&&\bullet&\bullet
		\end{matrix} .
		$
	\end{example}

		Let $\mathcal{F}$ be an $m \times n$ Ferrers diagram, an $[\mathcal{F}, k, \delta]_q$ \textit{Ferrers diagram rank-metric code (FDRMC)} is abbreviated as $[\mathcal{F}, k, \delta]_q$ code, is an $[m \times n, k, \delta]_q$ RMC in which for each $m \times n$ matrix, all entries not in $\mathcal{F}$ are zero. Note that if there exists an $[\mathcal{F}, k, \delta]_q$ code, then so does an $[\hat{\mathcal{F}}, k, \delta]_q$ code and an $\left[\mathcal{F}^t, k, \delta\right]_q$ code. Etzion and Silberstein gave a Singleton-like upper bound on FDRMCs in \cite{4}.
		
		\begin{lem}{\cite{4}}\label{lem6} Let $\delta$ be a positive integer. Let $v_i$, $0 \leq i \leq \delta-1$, be the number of dots in a Ferrers diagram $\mathcal{F}$ which are not contained in the first $i$ rows and the rightmost $\delta-1-i$ columns. Then for any $[\mathcal{F}, k, \delta]_q$ code, $k \leq \min _{i \in\{0,1, \cdots, \delta-1\}} v_i$.
		\end{lem}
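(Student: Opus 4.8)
The plan is to prove, for each fixed $i \in \{0,1,\dots,\delta-1\}$, the inequality $k \le v_i$ separately, and then take the minimum. The whole argument rests on exhibiting, for each $i$, an injective $\mathbb{F}_q$-linear map from the code into a coordinate space of dimension $v_i$.

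First I would set up the projection. Since an $[\mathcal{F},k,\delta]_q$ code $\mathcal{C}$ is by definition a $k$-dimensional $\mathbb{F}_q$-subspace of $\mathbb{F}_q^{m\times n}$, linearity makes the minimum-rank-distance condition equivalent to the statement that every nonzero $A \in \mathcal{C}$ satisfies $\operatorname{rank}(A) \ge \delta$. Fix $i$ and let $R_i$ denote the retained region consisting of the rows indexed $i+1,\dots,m$ together with the columns indexed $1,\dots,n-(\delta-1-i)$; by the very definition of $v_i$, the number of dots of $\mathcal{F}$ lying in $R_i$ is exactly $v_i$. Define $\pi_i:\mathcal{C}\to\mathbb{F}_q^{v_i}$ to be the map reading off the entries of a codeword at the dot positions inside $R_i$. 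This map is clearly $\mathbb{F}_q$-linear and its target has dimension $v_i$, so it suffices to show that $\pi_i$ is injective.

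Next I would establish injectivity via a rank bound. Suppose $A\in\ker\pi_i$. Then the support of $A$ is confined to the complement of $R_i$, that is, to the union of the first $i$ rows and the rightmost $\delta-1-i$ columns. Decompose $A = A_1 + A_2$, where $A_1$ retains the entries of $A$ in the first $i$ rows and is zero elsewhere, and $A_2 = A - A_1$. Then $A_1$ has at most $i$ nonzero rows, so $\operatorname{rank}(A_1)\le i$, while $A_2$ vanishes on the first $i$ rows and is therefore supported entirely on the rightmost $\delta-1-i$ columns, giving $\operatorname{rank}(A_2)\le \delta-1-i$. Subadditivity of rank yields $\operatorname{rank}(A)\le \operatorname{rank}(A_1)+\operatorname{rank}(A_2)\le \delta-1 < \delta$. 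Since every nonzero codeword has rank at least $\delta$, this forces $A=0$, so $\pi_i$ is injective and hence $k=\dim\mathcal{C}\le v_i$.

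Finally, because the inequality $k\le v_i$ holds for every $i\in\{0,\dots,\delta-1\}$, I would conclude $k\le\min_{i} v_i$. The only delicate point, and the step I expect to require the most care, is the bookkeeping that identifies the complement of $R_i$ with precisely the union of the first $i$ rows and the rightmost $\delta-1-i$ columns, so that the two rank bounds add up to $\delta-1$; everything else is routine linear algebra. One should also check the boundary cases $i=0$ and $i=\delta-1$, where one of the two strips is empty, to confirm that the rank bound degrades gracefully.
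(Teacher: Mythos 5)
Your proof is correct, and it is essentially the argument behind this lemma in the cited source \cite{4} (the paper itself states the lemma without reproducing a proof): project onto the dots outside the first $i$ rows and rightmost $\delta-1-i$ columns, and note that anything in the kernel splits as a sum of a matrix with at most $i$ nonzero rows and one with at most $\delta-1-i$ nonzero columns, hence has rank at most $\delta-1$ and must vanish. Your use of linearity (injectivity of the projection) in place of the pigeonhole formulation is an immaterial difference, and is justified since the paper defines an $[\mathcal{F},k,\delta]_q$ code to be linear.
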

		An FDRMC attaining the bound in Lemma 6 is referred to as optimal. Constructions for optimal FDRMCs can be discovered in \cite{1,5,8,20,21,31}. We are currently only mentioning the construction that relies on the subcodes of MRD codes for the future use.
		
		\begin{lem}{\cite{5}}\label{lem7} Suppose that $\mathcal{F}$ is an $m \times n$ $(m \geq n)$ Ferrers diagram and each of the rightmost $\delta-1$ columns of $\mathcal{F}$ has at least $n$ dots. Then there exists an optimal $[\mathcal{F}, \sum_{i=0}^{n-\delta} \gamma_i, \delta]_q$ code for any prime power $q$.
		\end{lem}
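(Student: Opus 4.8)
The plan is to show that the claimed size $\sum_{i=0}^{n-\delta}\gamma_i$ is exactly the Singleton-type bound of Lemma 6, and then to realize an optimal code as a support-restricted subcode of a Gabidulin MRD code. First I would evaluate the bound. Since the dots of column $j$ are top-aligned in rows $0,\dots,\gamma_j-1$, deleting the first $i$ rows and the rightmost $\delta-1-i$ columns leaves $v_i=\sum_{j=0}^{n-\delta+i}\max\{0,\gamma_j-i\}$, so in particular $v_0=\sum_{j=0}^{n-\delta}\gamma_j$. For $i\ge 1$ I would split this sum at $j=n-\delta$: each of the $i$ extra columns satisfies $\gamma_j\ge n>i$ by hypothesis and contributes $\gamma_j-i\ge n-i$, while each of the first $n-\delta+1$ terms drops by at most $i$ relative to $v_0$. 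This gives $v_i\ge v_0+i(\delta-1-i)\ge v_0$ for $0\le i\le\delta-1$, so $\min_i v_i=v_0=\sum_{i=0}^{n-\delta}\gamma_i$ and it suffices to construct an $[\mathcal F,\sum_{i=0}^{n-\delta}\gamma_i,\delta]_q$ code.

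Next I would fix a linear $[m\times n,\delta]_q$ MRD code $\mathcal M$, which exists because $m\ge n$, and set $\mathcal C=\mathcal M\cap\{A\in\mathbb F_q^{m\times n}:A\text{ is supported on }\mathcal F\}$. As an $\mathbb F_q$-subspace of $\mathcal M$, the code $\mathcal C$ automatically has minimum rank distance at least $\delta$, and by construction every codeword vanishes off $\mathcal F$; thus $\mathcal C$ is an $[\mathcal F,k,\delta']_q$ code with $\delta'\ge\delta$, and Lemma 6 gives $k\le\sum_{i=0}^{n-\delta}\gamma_i$. Everything then reduces to the reverse inequality $\dim_{\mathbb F_q}\mathcal C\ge\sum_{i=0}^{n-\delta}\gamma_i$. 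Realizing $\mathcal M$ as a Gabidulin code with evaluation points $g_0,\dots,g_{n-1}$, membership in $\mathcal C$ becomes the system of linear conditions $f(g_j)\in V_{\gamma_j}$, where $V_{\gamma_j}\subseteq\mathbb F_{q^m}$ is the coordinate subspace spanned by the top $\gamma_j$ rows; the message space has $\mathbb F_q$-dimension $m(n-\delta+1)$ and the $j$-th condition imposes $m-\gamma_j$ scalar equations.

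The hard part will be to choose the Gabidulin code — that is, its evaluation points together with the $\mathbb F_q$-basis of $\mathbb F_{q^m}$ — so that the $\delta-1$ support conditions coming from the rightmost columns hold automatically on the solution set of the $n-\delta+1$ conditions coming from the leftmost columns. Granting this redundancy, $\dim_{\mathbb F_q}\mathcal C$ equals the dimension of the leftmost-column solution set, which is at least $m(n-\delta+1)-\sum_{j=0}^{n-\delta}(m-\gamma_j)=\sum_{i=0}^{n-\delta}\gamma_i$, and combined with Lemma 6 this makes $\mathcal C$ optimal. The point is that the naive intersection estimate only yields the generic value $|\mathcal F|-m(\delta-1)$, which falls short of $\sum_{i=0}^{n-\delta}\gamma_i$ by exactly $\sum_{j=n-\delta+1}^{n-1}(m-\gamma_j)$; recovering this deficit is precisely where the hypotheses $m\ge n$ and $\gamma_j\ge n$ for the rightmost $\delta-1$ columns are used. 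The surplus of rows lets one place all evaluation points inside a common $n$-dimensional $\mathbb F_q$-subspace and scale them along the staircase $\gamma_0\le\cdots\le\gamma_{n-1}=m$ so that any $f$ meeting the left constraints is forced to satisfy $f(g_j)\in V_{\gamma_j}$ for each right column $j$; verifying that such a choice is always available is the only substantial step of the argument.
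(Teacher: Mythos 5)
Your first paragraph is correct and is the easy half: the computation $v_i=\sum_{j=0}^{n-\delta+i}\max\{0,\gamma_j-i\}\ge v_0+i(\delta-1-i)\ge v_0$ is valid, so by Lemma~\ref{lem6} optimality does reduce to constructing a code of dimension $\sum_{i=0}^{n-\delta}\gamma_i$. (Note that the paper itself gives no proof of Lemma~\ref{lem7}; it is quoted from \cite{5}, so your argument must stand on its own.) The fatal problem is the step you defer to the last sentence. It is not merely unverified; the choice you need does not exist, and with it the whole frame of realizing the optimal code as $\mathcal{C}=\mathcal{M}\cap\{A: A\ \text{supported on}\ \mathcal{F}\}$ for a Gabidulin code $\mathcal{M}$ collapses. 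Take $m=4$, $n=3$, $\delta=3$, $\mathcal{F}=[3,3,4]$, $q$ any prime power: the hypotheses of Lemma~\ref{lem7} hold (the two rightmost columns have $3$ and $4\ge n$ dots) and the claimed dimension is $\gamma_0=3$. A Gabidulin code here is $\mathcal{M}=\{(ag_0,ag_1,ag_2):a\in\mathbb{F}_{q^4}\}$ with $g_0,g_1,g_2$ linearly independent over $\mathbb{F}_q$, and your support conditions read $ag_0\in V_3$ and $ag_1\in V_3$ (the rightmost column is full, so its condition is vacuous), where $V_3$ is the span of the first three vectors of whatever expansion basis you pick. Hence $\dim_{\mathbb{F}_q}\mathcal{C}=\dim\bigl(g_0^{-1}V_3\cap g_1^{-1}V_3\bigr)$. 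If these two $3$-dimensional subspaces of $\mathbb{F}_{q^4}$ were equal, $V_3$ would be invariant under multiplication by $\lambda=g_1g_0^{-1}\notin\mathbb{F}_q$, hence a vector space over the subfield $\mathbb{F}_q(\lambda)$, whose degree over $\mathbb{F}_q$ is $2$ or $4$ and would have to divide $\dim_{\mathbb{F}_q}V_3=3$, a contradiction. So they are distinct and $\dim_{\mathbb{F}_q}\mathcal{C}=3+3-4=2<3$, for \emph{every} choice of evaluation points and of basis (equivalently, of the flag). Your construction is thus capped strictly below the bound, while the lemma is true for this $\mathcal{F}$: the multiplication maps $x\mapsto ax$, $a\in\mathbb{F}_{q^3}$, written as $3\times 3$ matrices over $\mathbb{F}_q$ and padded with a zero fourth row, form an optimal $[\mathcal{F},3,3]_q$ code.

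What your method does prove is the classical, weaker statement in which each of the rightmost $\delta-1$ columns is full, i.e.\ has $m$ dots: then the parity coordinates of a systematic Gabidulin code are unconstrained, the $n-\delta+1$ information coordinates can be restricted freely to the nested subspaces $V_{\gamma_j}$, and dimension $\sum_{i=0}^{n-\delta}\gamma_i$ is attained. The entire content of Lemma~\ref{lem7} is the relaxation from ``$m$ dots'' to ``$n$ dots'' in those columns, and that is exactly where support-restriction of a single $m$-row MRD code stops working: in the example above the optimal code lives entirely in the top $n$ rows, and one can check further that it is not a subcode of \emph{any} $[4\times 3,4,3]_q$ MRD code, so no amount of cleverness in choosing $\mathcal{M}$ rescues the intersection approach. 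A correct proof (as in \cite{5}) has to build the code differently, essentially working with MRD codes on the top $n\times n$ square and handling the dots below row $n$ by additional, carefully coupled components, rather than cutting one $m\times n$ MRD code down to the diagram; so the missing step in your outline is not a verification to be supplied but a sign that the approach itself must be replaced.
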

Similar to the definitions of Ferrers diagram and FDRMCs, bilateral Ferrers diagram and bilateral Ferrers diagram rank-metric codes were given in \cite{30}.

	A $(k_1+ k_2)\times n$ \textit{bilateral Ferrers diagram} $\mathcal{F}$ is a $(k_1 + k_2)\times n$ array of dots and empty cells such that the $k_1\times n_1$  subarray in the upper left-hand corner is a $k_1\times n_1$ Ferrers diagram, the $k_2\times n_2$  subarray in the lower right-hand corner subarray 
	is a $k_2\times n_2$ inverse Ferrers diagram. the $(k_1+ k_2)\times (n-n_1-n_2)$ subarray in the middle is a full Ferrers diagram, and that the remaining cells are empty, where $n_1+n_2 \leq n$. Let $\mathcal{F}$ be an $m \times n$  bilateral Ferrers diagram, an $[\mathcal{F}, k, \delta]_q$ \textit{bilateral Ferrers diagram rank-metric code (BFRMC)} is an $[m \times n, k, \delta]_q$ RMC in which for each $m\times n$ matrix, all entries not in $\mathcal{F}$ are zero.

	In \cite{4}, the authors gave the multilevel construction, which is a generalization of Lemma 2. Moreover, the multilevel construction was generalized to the inverse multilevel construction in \cite{19}. The following lemmas are crucial for the multilevel and inverse multilevel construction.
	
	\begin{lem}{\cite{4}}\label{lem8} Let $\mathcal{U}$, $\mathcal{V}$ $\in\mathcal{G}_q(n, k)$, $\mathcal{U}={rs}(U)$ and  $\mathcal{V}={rs}(V)$, where $U$, $V$ $\in \mathbb{F}_q^{k \times n}$ are in RREF, then
		$$
		d_S(\mathcal{U}, \mathcal{V}) \geq d_H(v(\mathcal{U}), v(\mathcal{V})).
		$$
		Furthermore, if $v(\mathcal{U})=v(\mathcal{V})$, then
		$$
		d_S(\mathcal{U}, \mathcal{V})=2 d_R\left(C_U, C_V\right),
		$$
		where $C_U$ and $C_V$ denote the submatrices of $U$ and $V$ without the columns of their pivots, respectively.
	\end{lem}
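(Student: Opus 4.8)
The plan is to turn the entire statement into a single rank computation on the stacked generator matrix. First I would record the identity
$d_S(\mathcal{U},\mathcal{V}) = 2\bigl(\dim(\mathcal{U}+\mathcal{V}) - k\bigr)$,
which follows immediately from the definition $d_S = \dim\mathcal{U} + \dim\mathcal{V} - 2\dim(\mathcal{U}\cap\mathcal{V})$ together with the Grassmann dimension formula $\dim(\mathcal{U}+\mathcal{V}) = \dim\mathcal{U} + \dim\mathcal{V} - \dim(\mathcal{U}\cap\mathcal{V})$ and $\dim\mathcal{U} = \dim\mathcal{V} = k$. Since the rows of $U$ and $V$ together span $\mathcal{U}+\mathcal{V}$, one has $\dim(\mathcal{U}+\mathcal{V}) = \operatorname{rank}\begin{pmatrix} U \\ V \end{pmatrix}$, so both assertions reduce to understanding this rank.

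For the inequality, the key fact I would isolate is that the support of the identifying vector is exactly the set of pivot columns of the RREF, and that this pivot set is monotone under subspace inclusion: if $\mathcal{W}' \subseteq \mathcal{W}$, then every leading (first-nonzero) coordinate position realized by a vector of $\mathcal{W}'$ is also realized in $\mathcal{W}$, since that vector lies in $\mathcal{W}$ as well; hence the pivot set of $\mathcal{W}'$ is contained in that of $\mathcal{W}$. Applying this with $\mathcal{W} = \mathcal{U}+\mathcal{V}$ and $\mathcal{W}' = \mathcal{U}$, and then $\mathcal{W}' = \mathcal{V}$, shows that the pivot set of $\mathcal{U}+\mathcal{V}$ contains the union of the supports of $v(\mathcal{U})$ and $v(\mathcal{V})$. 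Because the number of pivots equals the dimension, I obtain $\dim(\mathcal{U}+\mathcal{V}) \geq \bigl|\operatorname{supp} v(\mathcal{U}) \cup \operatorname{supp} v(\mathcal{V})\bigr| = 2k - c$, where $c$ denotes the number of common pivots. Combining this with the displayed identity and with $d_H(v(\mathcal{U}),v(\mathcal{V})) = 2(k-c)$ (the two weight-$k$ vectors differ exactly on the symmetric difference of their supports) yields $d_S(\mathcal{U},\mathcal{V}) \geq 2(k-c) = d_H(v(\mathcal{U}),v(\mathcal{V}))$.

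For the equality statement I would exploit that $v(\mathcal{U}) = v(\mathcal{V})$ forces $U$ and $V$ to have identical pivot columns: in RREF the $i$-th pivot column read left to right is the standard unit vector $e_i$, so equal pivot positions give literally equal pivot columns. Deleting the $k$ pivot columns leaves $C_U$ and $C_V$, and the block row operation replacing the bottom block $V$ by $V-U$ (subtracting each row of $U$ from the corresponding row of $V$) produces a matrix with top block $U$ and bottom block supported entirely on the non-pivot columns with entries $C_V - C_U$. Since no nonzero vector of $\mathcal{U}$ can have its leading coordinate off the pivot set, the row spaces of the two blocks meet trivially and the rank is additive, giving $\dim(\mathcal{U}+\mathcal{V}) = k + \operatorname{rank}(C_V - C_U) = k + d_R(C_U,C_V)$. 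The displayed identity then produces $d_S(\mathcal{U},\mathcal{V}) = 2\,d_R(C_U,C_V)$.

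The step I expect to be the main obstacle is the monotonicity of pivot sets under inclusion invoked in the inequality; the rest is bookkeeping with the dimension formula and one block row reduction. The monotonicity is not technically hard once phrased through leading coordinate positions, but it is the conceptual heart of the bound, as it is precisely what links the purely combinatorial Hamming distance of the identifying vectors to the geometric subspace distance.
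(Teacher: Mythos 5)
Your proposal is correct, and all three of its ingredients are sound: the identity $d_S(\mathcal{U},\mathcal{V})=2\bigl(\dim(\mathcal{U}+\mathcal{V})-k\bigr)$, the monotonicity of pivot sets under inclusion (which holds because the pivot positions of a subspace are exactly the leading-coordinate positions realized by its nonzero vectors), and the block row reduction with trivial intersection of row spaces in the equal-identifying-vector case. Note that the paper itself offers no proof of this lemma --- it is quoted verbatim from reference [4] (Etzion--Silberstein) --- and your argument is essentially the standard one from that source, so there is no substantive divergence to report.
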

	
	\begin{lem}{\cite{19}}\label{lem9} Let $\mathcal{U}$, $\mathcal{V}$ $\in\mathcal{G}_q(n, k)$, $\mathcal{U}={rs}(U)$ and  $\mathcal{V}={rs}(V)$, where $U$, $V$ $\in \mathbb{F}_q^{k \times n}$ are in RRIEF, then
		$$
		d_S(\mathcal{U}, \mathcal{V}) \geq d_H(\hat{v}(\mathcal{U}), \hat{v}(\mathcal{V})).
		$$
		Furthermore, if $\hat{v}(\mathcal{U})=\hat{v}(\mathcal{V})$, then
		$$
		d_S(\mathcal{U}, \mathcal{V})=2 d_R\left(C_U, C_V\right),
		$$
		where $C_U$ and $C_V$ denote the submatrices of $U$ and $V$ without the columns of their pivots, respectively.
	\end{lem}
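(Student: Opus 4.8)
The plan is to deduce Lemma~\ref{lem9} from Lemma~\ref{lem8} by exploiting the coordinate-reversal symmetry that interchanges RREF and RRIEF while leaving every relevant metric invariant. Let $J$ denote the $n\times n$ anti-diagonal permutation matrix and let $\sigma\colon\mathbb{F}_q^n\to\mathbb{F}_q^n$, $(x_1,\dots,x_n)\mapsto(x_n,\dots,x_1)$, be right multiplication by $J$. For any subspace $\mathcal{W}$ with generator matrix $W$, the matrix $WJ$, whose columns are those of $W$ in reverse order, generates $\sigma(\mathcal{W})$. First I would record that $\sigma$ is an $\mathbb{F}_q$-linear isomorphism, so that $\operatorname{dim}\sigma(\mathcal{W})=\operatorname{dim}\mathcal{W}$ and $\sigma(\mathcal{U})\cap\sigma(\mathcal{V})=\sigma(\mathcal{U}\cap\mathcal{V})$; substituting these into the definition of $d_S$ yields the invariance $d_S(\sigma\mathcal{U},\sigma\mathcal{V})=d_S(\mathcal{U},\mathcal{V})$.

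The geometric heart of the argument is that column reversal carries RRIEF to RREF. If $U$ is in RRIEF then, reading down the rows, its pivots (each being the rightmost nonzero entry of its row) occupy strictly decreasing positions $p_1>p_2>\cdots>p_k$; after reversal these land at $n+1-p_1<n+1-p_2<\cdots<n+1-p_k$, which is precisely the staircase condition of RREF, while the requirements that the leading entries be ones and be the unique nonzero entries of their columns are purely columnwise and hence untouched by a column permutation. Thus $UJ$ is the unique RREF generator matrix of $\sigma\mathcal{U}$, and its identifying vector is the reversal of the inverse identifying vector of $\mathcal{U}$, namely $v(\sigma\mathcal{U})=\hat{v}(\mathcal{U})J$, and likewise for $V$. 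Since coordinate reversal preserves Hamming weight and distance, $d_H(v(\sigma\mathcal{U}),v(\sigma\mathcal{V}))=d_H(\hat{v}(\mathcal{U}),\hat{v}(\mathcal{V}))$.

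It then remains only to feed $UJ$ and $VJ$ into Lemma~\ref{lem8}. Its inequality $d_S(\sigma\mathcal{U},\sigma\mathcal{V})\ge d_H(v(\sigma\mathcal{U}),v(\sigma\mathcal{V}))$, combined with the two invariances above, gives $d_S(\mathcal{U},\mathcal{V})\ge d_H(\hat{v}(\mathcal{U}),\hat{v}(\mathcal{V}))$. For the equality clause, when $\hat{v}(\mathcal{U})=\hat{v}(\mathcal{V})$ we have $v(\sigma\mathcal{U})=v(\sigma\mathcal{V})$, so Lemma~\ref{lem8} gives $d_S(\sigma\mathcal{U},\sigma\mathcal{V})=2d_R(C_{UJ},C_{VJ})$, where $C_{UJ}$ and $C_{VJ}$ are the submatrices of $UJ$ and $VJ$ obtained by deleting their pivot columns. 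Because $\mathcal{U}$ and $\mathcal{V}$ share pivot positions, deletion followed by reversal acts identically on both, so $C_{UJ}-C_{VJ}$ is a fixed column permutation of $C_U-C_V$ and therefore has the same rank; together with the $d_S$-invariance this delivers $d_S(\mathcal{U},\mathcal{V})=2d_R(C_U,C_V)$.

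The only step demanding genuine care is the shape bookkeeping in the middle paragraph --- checking that column reversal sends the RRIEF profile exactly onto the RREF profile and correctly matches $v(\sigma\mathcal{U})$ with the reversal of $\hat{v}(\mathcal{U})$. Everything else is invariance of $d_S$, $d_R$, and $d_H$ under the linear bijection $\sigma$, so once the profile correspondence is pinned down the result follows from Lemma~\ref{lem8} with no further computation.
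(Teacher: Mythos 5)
Your proposal is correct, but there is nothing in this paper to compare it against: Lemma~\ref{lem9} is imported verbatim from \cite{19} as a preliminary and is never proved here, so your argument has to stand on its own --- and it does. The three pillars all check out against the paper's definitions: (i) column reversal turns the RRIEF conditions of Definition~\ref{definition2} into the RREF conditions, since the pivot of a row in RRIEF (its rightmost nonzero entry, cf.\ Example~\ref{example1}) sits at strictly decreasing positions down the rows, and reversal converts these into strictly increasing leftmost nonzero entries, while the ``all leading coefficients are ones'' and ``only nonzero entry in its column'' conditions are purely columnwise and survive any column permutation; (ii) uniqueness of the RREF generator matrix then forces $E(\sigma\mathcal{U})=UJ$, hence $v(\sigma\mathcal{U})=\hat{v}(\mathcal{U})J$, and Hamming distance is reversal-invariant; (iii) $d_S$ is invariant under the linear bijection $\sigma$ because dimensions and intersections are preserved, and in the equality case $U$ and $V$ share pivot positions, so the pivot-deleted submatrices of $UJ$ and $VJ$ are one common fixed column permutation of $C_U$ and $C_V$, leaving $d_R$ unchanged. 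By contrast, the statement could be (and, following the model of Lemma~\ref{lem8} in \cite{4}, typically is) proved from scratch by a direct computation, writing $d_S(\mathcal{U},\mathcal{V})=2\,\mathrm{rank}\left[\begin{smallmatrix}U\\ V\end{smallmatrix}\right]-2k$ and exploiting the pivot structure to bound or evaluate that rank. Your mirror-symmetry reduction buys a genuinely shorter path: it recycles Lemma~\ref{lem8} wholesale instead of redoing the rank analysis, and the same conjugation-by-$J$ device transfers any RREF-based statement to its RRIEF counterpart --- it is essentially the unstated reason the paper can assert Corollary~\ref{coro1} ``by Lemma~\ref{lem11}'' without further work.
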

	
	In \cite{19}, the double multilevel construction was given by combining the multilevel and inverse multilevel construction.
	
	\begin{lem}{\cite{19}}\label{lem10} (\textbf{Double multilevel construction}) Let $n \geq 2 k \geq 2 \delta$. Suppose that $\mathcal{C}_1$ is
		an $\left(n, M_1, 2 \delta, k\right)_q$-CDC constructed by the multilevel construction with the identifying vector set $\mathcal{A}$, and $\mathcal{C}_2$ is an $\left(n, M_2, 2 \delta, k\right)_q$-CDC constructed by the inverse multilevel construction with the inverse identifying vector set $\hat{\mathcal{A}}$. If each inverse identifying vector $\hat{v} \in \hat{\mathcal{A}}$ satisfies $d_{\mathrm{H}}(\hat{v}, v) \geq 2\left(s_{\hat{v}}+\delta\right)$ for any identifying vector $v \in \mathcal{A}$, where $s_{\hat{v}}$ is an integer related to $\hat{v}$, then there exists an $\left(n, M_1+M_2, 2 \delta, k\right)$-CDC.
	\end{lem}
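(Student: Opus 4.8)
The plan is to show that the union $\mathcal{C} := \mathcal{C}_1 \cup \mathcal{C}_2$ is the desired $(n, M_1 + M_2, 2\delta, k)_q$-CDC. Since $\mathcal{C}_1$ and $\mathcal{C}_2$ are each CDCs built by the (inverse) multilevel construction, every pair of codewords lying in the same $\mathcal{C}_i$ already has subspace distance at least $2\delta$, so it only remains to control the cross distances $d_S(\mathcal{U}, \mathcal{W})$ with $\mathcal{U} \in \mathcal{C}_1$ and $\mathcal{W} \in \mathcal{C}_2$. Once each such cross distance is shown to be at least $2\delta$ (in particular positive, as $\delta \geq 1$), the two families are automatically disjoint, so $|\mathcal{C}| = M_1 + M_2$ and $\mathcal{C}$ has minimum subspace distance $2\delta$.

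For the cross distances, fix $\mathcal{U} = rs(U) \in \mathcal{C}_1$ with identifying vector $v(\mathcal{U}) = v \in \mathcal{A}$ (so $U$ is in RREF), and $\mathcal{W} \in \mathcal{C}_2$ with inverse identifying vector $\hat{v}(\mathcal{W}) = \hat{v} \in \hat{\mathcal{A}}$ (so the chosen generator matrix of $\mathcal{W}$ is in RRIEF). The first step is to bring both subspaces into a common frame: I would row-reduce the RRIEF generator matrix of $\mathcal{W}$ to its RREF and read off its ordinary identifying vector $v(\mathcal{W})$. Lemma \ref{lem8}, applied to $\mathcal{U}$ and $\mathcal{W}$ now both in RREF, then gives $d_S(\mathcal{U}, \mathcal{W}) \geq d_H\bigl(v, v(\mathcal{W})\bigr)$, reducing the whole problem to a purely combinatorial estimate on binary vectors.

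The second step is to relate the ordinary identifying vector $v(\mathcal{W})$ of $\mathcal{W}$ to its inverse identifying vector $\hat{v}$. The role of the integer $s_{\hat{v}}$ is precisely to bound this discrepancy: passing from the inverse echelon Ferrers form to the reduced row echelon form can only move a pivot leftward into one of the dot columns lying to its left, and $s_{\hat{v}}$ measures a bound on how many rows can undergo such a shift. Concretely, I would verify the structural inequality $d_H\bigl(\hat{v}, v(\mathcal{W})\bigr) \leq 2 s_{\hat{v}}$ for every codeword $\mathcal{W}$ sharing the inverse identifying vector $\hat{v}$. Combining this with the hypothesis $d_H(\hat{v}, v) \geq 2(s_{\hat{v}} + \delta)$ through the triangle inequality for the Hamming metric yields
$$ d_H\bigl(v, v(\mathcal{W})\bigr) \geq d_H(\hat{v}, v) - d_H\bigl(\hat{v}, v(\mathcal{W})\bigr) \geq 2(s_{\hat{v}} + \delta) - 2 s_{\hat{v}} = 2\delta, $$
and hence $d_S(\mathcal{U}, \mathcal{W}) \geq 2\delta$, completing the cross-distance argument.

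The main obstacle is the structural estimate $d_H\bigl(\hat{v}, v(\mathcal{W})\bigr) \leq 2 s_{\hat{v}}$ of the third step: unlike the within-family distances, the ordinary identifying vector $v(\mathcal{W})$ is \emph{not} constant across the codewords $\mathcal{W}$ with a fixed inverse identifying vector, since the RREF pivots depend on the actual Ferrers-diagram entries — when all free dots vanish one recovers $v(\mathcal{W}) = \hat{v}$, whereas nonzero dots can push pivots to the left. The delicate part is therefore to track, row by row, how far each pivot of the RRIEF can migrate when the matrix is brought to RREF, and to check that the total number of changed coordinates is uniformly bounded by $2 s_{\hat{v}}$ irrespective of the choice of inverse-multilevel codeword; this is exactly what makes the constant $s_{\hat{v}}$ in the hypothesis the correct one. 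Everything else — disjointness, the within-family bounds, and the concluding triangle inequality — is routine once this estimate is established.
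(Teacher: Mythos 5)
First, a caveat on the comparison itself: this paper never proves Lemma~\ref{lem10} --- it is quoted from \cite{19} --- so your attempt can only be measured against the argument of that source and against how the lemma is actually used here (Theorem~\ref{theo3}, Corollary~\ref{coro3}). Your architecture is the correct one and matches \cite{19}: form the union $\mathcal{C}_1\cup\mathcal{C}_2$, observe that within-family distances are already at least $2\delta$, and handle cross pairs by passing the inverse-multilevel codeword $\mathcal{W}$ to its RREF, invoking Lemma~\ref{lem8}, and finishing with the triangle inequality once one knows $d_H\bigl(\hat{v},v(\mathcal{W})\bigr)\le 2s_{\hat{v}}$.

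That estimate, however, is the entire mathematical content of the lemma, and your treatment of it has a genuine gap. You read $s_{\hat{v}}$ as a combinatorial parameter of the vector $\hat{v}$ alone (``how many rows can undergo a shift''), to be verified by tracking pivot migration row by row. This cannot work: for the inverse identifying vectors to which the lemma is actually applied (e.g.\ $\hat{v}=(0\cdots 0\,1\cdots 1)$ in the proof of Corollary~\ref{coro3}), the dot region of $\widehat{EF}(\hat{v})$ is a full $k\times(n-k)$ block, every one of the $k$ pivots can migrate, and any shape-based bound forces $s_{\hat{v}}=k$; but then the hypothesis $d_H(\hat{v},v)\ge 2(k+\delta)$ is unsatisfiable, because two binary vectors of weight $k$ have Hamming distance at most $2k$, so the lemma would be vacuous. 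In \cite{19}, $s_{\hat{v}}$ is instead a property of the \emph{code} attached to $\hat{v}$: it is an upper bound on the rank of the matrices filling the dot region, which is precisely why rank-restricted codes (GRMCs, as in Lemma~\ref{lem5}) enter the inverse multilevel construction --- cf.\ the restriction $\mathrm{rank}(A)\le k-\delta$ in Corollary~\ref{coro3}, which gives $s_{\hat{v}}=k-\delta$ and makes the hypothesis hold with equality there. With that definition the estimate has a short, uniform proof that avoids row-by-row tracking (which in any case is unreliable, since row reduction mixes rows): row operations preserve linear dependencies among columns; the RREF pivot columns of $\mathcal{W}$ are linearly independent; and every RREF pivot position at which $\hat{v}$ has a zero indexes a column of the dot submatrix, so the number of such positions is at most the rank of that submatrix, hence at most $s_{\hat{v}}$. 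Since $v(\mathcal{W})$ and $\hat{v}$ both have weight $k$, this yields $d_H\bigl(\hat{v},v(\mathcal{W})\bigr)\le 2s_{\hat{v}}$, after which your concluding triangle-inequality step is exactly right.
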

	
	Likewise Lemma 8 and Lemma 9, the authors provided the following lemma in \cite{30}, which is beneficial for the bilateral multilevel construction.
	
	\begin{lem} \cite{30} \label{lem11}Let $\mathcal{U}$, $\mathcal{V}$ $\in \mathcal{G}_q(n, k)$, $\mathcal{U}={rs}(U)$ and $\mathcal{V}={rs}(V)$, where $U$, $V$ $\in \mathbb{F}_q^{k \times n}$ are in RRBEF. If two bilateral identifying vectors $\bar{v}(\mathcal{U})$, $\bar{v}(\mathcal{V})$ of $U$ and $V$ have the same type $(n_1, n-n_1-n_2, n_2)$ with $n_1+n_2 \leq n$, then
		$$
		d_S(\mathcal{U}, \mathcal{V}) \geq d_H( \bar{v}(\mathcal{U}), \bar{v}(\mathcal{V})).
		$$
		Furthermore, if $\bar{v}(\mathcal{V})=\bar{v}(\mathcal{U})$, then
		$$
		d_S(\mathcal{U}, \mathcal{V})=2 d_R\left(C_U, C_V\right),
		$$
		where $C_U$ and $C_V$ denote the submatrices of $U$ and $V$ without the columns of their pivots, respectively.
	\end{lem}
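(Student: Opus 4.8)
The plan is to convert the subspace distance into a rank computation through the identity $d_S(\mathcal{U},\mathcal{V})=2\dim(\mathcal{U}+\mathcal{V})-2k$, where $\dim(\mathcal{U}+\mathcal{V})$ equals the rank of the $2k\times n$ matrix $M$ obtained by stacking $U$ above $V$. Write the three column blocks fixed by the common type $(n_1,n-n_1-n_2,n_2)$ as $\mathrm{L}=\{1,\dots,n_1\}$, $\mathrm{M}=\{n_1+1,\dots,n-n_2\}$ and $\mathrm{R}=\{n-n_2+1,\dots,n\}$. By Definition \ref{definition3}, the rows of $U$ (and of $V$) split into $m_1$ \emph{upper} rows, supported in $\mathrm{L}\cup\mathrm{M}$ and carrying an RREF pivot pattern $P^{\mathrm L}_U\subseteq\mathrm L$, and $m_2$ \emph{lower} rows, supported in $\mathrm{M}\cup\mathrm{R}$ and carrying an RRIEF (trailing) pivot pattern $P^{\mathrm R}_U\subseteq\mathrm R$; the ones of $\bar v(\mathcal U)$ sit exactly at $P^{\mathrm L}_U\cup P^{\mathrm R}_U$, and likewise for $V$. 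The whole argument rests on the monotonicity of pivots already implicit in Lemmas \ref{lem8} and \ref{lem9}: for nested subspaces $\mathcal W\subseteq\mathcal W'$, every leading (first-nonzero) position attainable in $\mathcal W$ is attainable in $\mathcal W'$, and the same holds for trailing (last-nonzero) positions.

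For the inequality, set $S_{\mathrm L}=P^{\mathrm L}_U\cup P^{\mathrm L}_V\subseteq\mathrm L$ and $S_{\mathrm R}=P^{\mathrm R}_U\cup P^{\mathrm R}_V\subseteq\mathrm R$; these are disjoint, and $|S_{\mathrm L}|+|S_{\mathrm R}|=|(\,\mathrm{supp}\,\bar v(\mathcal U))\cup(\mathrm{supp}\,\bar v(\mathcal V))|$. First I would show $\dim(\mathcal U+\mathcal V)\ge |S_{\mathrm L}|+|S_{\mathrm R}|$. Applying pivot monotonicity to the span of the upper rows of $U$ together with the upper rows of $V$ produces, for each $\ell\in S_{\mathrm L}$, a vector $u_\ell\in\mathcal U+\mathcal V$ whose leading position is exactly $\ell$; since it is a combination of upper rows, $u_\ell$ is supported in $\mathrm L\cup\mathrm M$ and vanishes on $\mathrm R$. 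Symmetrically, the lower rows yield, for each $r\in S_{\mathrm R}$, a vector $w_r\in\mathcal U+\mathcal V$ with trailing position exactly $r$ and support in $\mathrm M\cup\mathrm R$, vanishing on $\mathrm L$. To see that $\{u_\ell\}\cup\{w_r\}$ is linearly independent, suppose a dependence holds; restricting it to the columns $\mathrm L$ annihilates every $w_r$ and leaves the $u_\ell$, which have pairwise distinct leading positions in $\mathrm L$ and are therefore independent, so all their coefficients vanish; the remaining combination of the $w_r$, having pairwise distinct trailing positions in $\mathrm R$, then forces the rest to vanish. This gives $\dim(\mathcal U+\mathcal V)\ge|S_{\mathrm L}|+|S_{\mathrm R}|$, and since both identifying vectors have weight $k$ a short count turns this into $d_S(\mathcal U,\mathcal V)=2\dim(\mathcal U+\mathcal V)-2k\ge d_H(\bar v(\mathcal U),\bar v(\mathcal V))$.

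For the equality, assume $\bar v(\mathcal U)=\bar v(\mathcal V)$, so $U$ and $V$ have identical pivot data: the same left pivot columns in the same (increasing) rows and the same right pivot columns in the same rows. Consequently every pivot column of $U$ and of $V$ is the same unit vector, so subtracting matching rows makes $V-U$ vanish on all $k$ pivot columns, its only possibly nonzero entries lying in the non-pivot columns, which form $C_V-C_U$. Replacing the $V$-block of $M$ by $V-U$ leaves the rank unchanged; since the $k$ pivot columns of $U$ already form a permuted identity, any linear dependence among the rows of this reduced matrix must have zero coefficients on the rows of $U$ and hence reduces to a dependence among the rows of $V-U$. Therefore $\dim(\mathcal U+\mathcal V)=\operatorname{rank}M=k+\operatorname{rank}(C_V-C_U)=k+d_R(C_U,C_V)$, whence $d_S(\mathcal U,\mathcal V)=2d_R(C_U,C_V)$.

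The step I expect to be the genuine obstacle is the joint independence of the leading-position vectors $u_\ell$ and the trailing-position vectors $w_r$ in the inequality part: unlike the one-sided Lemmas \ref{lem8} and \ref{lem9}, here I must combine an RREF-type (leading) analysis on the left with an RRIEF-type (trailing) analysis on the right. What makes the decoupling go through is precisely the RRBEF shape — the vanishing lower-left and upper-right blocks forced by Definition \ref{definition3}, together with the common type, which guarantees that the $u_\ell$ die on $\mathrm R$ and the $w_r$ die on $\mathrm L$, so that projecting onto $\mathrm L$ and then onto $\mathrm R$ separates the two families. Verifying that the relevant pivot columns really are unit vectors (so that the common-type hypothesis can be exploited) and that the middle block never creates spurious leading or trailing positions are the routine points left to check.
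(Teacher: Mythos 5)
Your proof is correct, and since this paper only quotes Lemma \ref{lem11} from \cite{30} without reproducing its proof, the right comparison is with the source: your argument is essentially the same as the one there, namely the natural bilateral merger of the proofs of Lemmas \ref{lem8} and \ref{lem9} — write $d_S(\mathcal{U},\mathcal{V})=2\,\mathrm{rank}\left[\begin{smallmatrix}U\\ V\end{smallmatrix}\right]-2k$, lower-bound the rank by the number of distinct leading pivots in the left block plus distinct trailing pivots in the right block (the forced zero lower-left and upper-right blocks of the RRBEF decoupling the two families exactly as you say), and, when the bilateral identifying vectors coincide, cancel the common unit pivot columns to reduce the rank to $k+\mathrm{rank}(C_V-C_U)$. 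No gaps worth flagging; the points you defer (pivot columns being unit vectors, the middle block creating no new leading/trailing positions) are indeed routine under the common-type hypothesis.
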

	
Based on Lemma 11, the authors provided the bilateral multilevel construction in \cite{30} as follows.
	
	\begin{lem}{\cite{30}}\label{lem12} (\textbf{Bilateral multilevel construction}) Let $\mathcal{B}$ be a set of bilateral identifying vectors, which is a binary constant weight code of length $n$, weight $k$ and minimum Hamming distance $2 \delta$. Suppose that all bilateral identifying vectors have the same type. For each codeword $\bar{v} \in \mathcal{B}$, its bilateral echelon Ferrers form is $\overline{E F}(\bar{v})$, all dots in $\overline{E F}(\bar{v})$ produce a bilateral Ferrers diagram $\mathcal{F}_{\bar{v}}$. If there exists an $\left(\mathcal{F}_{\bar{v}}, M_{\bar{v}}, \delta\right)_q$ BFDRMC $\mathcal{D}_{\bar{v}}$ for each $\bar{v} \in \mathcal{B}$, and $\mathbb{D}_{\bar{v}}$ represents the lifted BFDRM code of $\mathcal{D}_{\bar{v}}$. Then $\mathcal{C}_3=\bigcup_{\bar{v} \in \mathcal{B}} \mathbb{D}_{\bar{v}}$ is an $(n, 2 \delta, k)_q$-CDC.
	\end{lem}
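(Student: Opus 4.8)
The plan is to verify directly the two defining requirements of an $(n,2\delta,k)_q$-CDC for $\mathcal{C}_3$: that every codeword is a $k$-dimensional subspace of $\mathbb{F}_q^n$, and that any two distinct codewords have subspace distance at least $2\delta$. The dimension requirement is immediate from the lifting procedure. For each $\bar{v}\in\mathcal{B}$ and each matrix of $\mathcal{D}_{\bar{v}}$, filling the dots of $\overline{EF}(\bar{v})$ with that matrix produces a $k\times n$ matrix in RRBEF whose $k$ pivot columns are exactly the $k$ positions of the ones of $\bar{v}$; its rows are therefore linearly independent and its row space lies in $\mathcal{G}_q(n,k)$. Hence $\mathbb{D}_{\bar{v}}\subseteq\mathcal{G}_q(n,k)$ and so $\mathcal{C}_3\subseteq\mathcal{G}_q(n,k)$. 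Moreover, every subspace in $\mathbb{D}_{\bar{v}}$ has bilateral identifying vector exactly $\bar{v}$, since the pivots of its unique RRBEF generator matrix are prescribed by $\bar{v}$.

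For the distance bound I would take two distinct codewords $\mathcal{U}={rs}(U)$ and $\mathcal{V}={rs}(V)$ of $\mathcal{C}_3$, with $U,V$ their RRBEF generator matrices, and split into two cases according to their bilateral identifying vectors. If $\mathcal{U},\mathcal{V}$ lie in the same $\mathbb{D}_{\bar{v}}$, then $\bar{v}(\mathcal{U})=\bar{v}(\mathcal{V})=\bar{v}$, and the equality clause of Lemma~\ref{lem11} gives $d_S(\mathcal{U},\mathcal{V})=2d_R(C_U,C_V)$, where $C_U,C_V$ are precisely the matrices of $\mathcal{D}_{\bar{v}}$ used to fill the dots; since $\mathcal{D}_{\bar{v}}$ has minimum rank distance $\delta$ and $\mathcal{U}\neq\mathcal{V}$, this is at least $2\delta$. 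If instead $\mathcal{U}\in\mathbb{D}_{\bar{v}}$ and $\mathcal{V}\in\mathbb{D}_{\bar{w}}$ with $\bar{v}\neq\bar{w}$, then because all vectors of $\mathcal{B}$ share the same type, the inequality clause of Lemma~\ref{lem11} applies and yields $d_S(\mathcal{U},\mathcal{V})\geq d_H(\bar{v},\bar{w})\geq 2\delta$, the last step using that $\mathcal{B}$ is a constant-weight code of minimum Hamming distance $2\delta$. In both cases $d_S(\mathcal{U},\mathcal{V})\geq 2\delta$, as required.

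The step I expect to carry the real weight is confirming that Lemma~\ref{lem11} is genuinely applicable in each case, and in particular in the cross-vector case. The bound $d_S\geq d_H$ there is asserted only when both generator matrices are in RRBEF and share a common type $(n_1,n-n_1-n_2,n_2)$; the standing hypothesis that every element of $\mathcal{B}$ has the same type is exactly what licenses this, and it is the one place where the argument would collapse without that assumption. It then remains to record that the union is disjoint, so that $\mathcal{C}_3$ has size $\sum_{\bar{v}\in\mathcal{B}}M_{\bar{v}}$: within a fixed $\mathbb{D}_{\bar{v}}$ distinct filling matrices give distinct subspaces because the dots are the free entries of a unique RRBEF, while across different $\bar{v}$ the subspaces are separated by their distinct identifying vectors, so no codeword is counted twice.
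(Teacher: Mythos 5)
Your proof is correct and takes essentially the same route as the source: the paper states this lemma as a citation of \cite{30}, and both that proof and the paper's own proof of the analogous Theorem~\ref{theo1} reduce everything to the distance lemma (Lemma~\ref{lem11}, resp.\ Corollary~\ref{coro1}), splitting into the same-identifying-vector case (equality clause, rank distance of the Ferrers diagram code) and the distinct-vector case (Hamming distance bound, enabled by the common-type hypothesis). Your additional remarks on dimension and disjointness of the union are consistent with, and slightly more explicit than, what the paper records.
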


	The following lemma is useful for calculating the size of some CDCs.
	
	\begin{lem}{\cite{18}}\label{lem13} Let $n \geq 2 k+\delta,~k \geq 2 \delta$ and
		$$
		M_1=q^{(n-k)(k-\delta+1)} \frac{1-q^{-\lfloor\frac{k}{\delta}\rfloor\delta^2}}{1-q^{-\delta^2}}+q^{(n-k-\delta)(k-\delta+1)} .
		$$
		Then there exists an $\left(n, M_1, 2 \delta, k\right)_q$-CDC constructed via the multilevel construction with the identifying vector set $\mathcal{A}$, and the CDC contains a lifted MRD code $\left(n, q^{(n-k)(k-\delta+1)}, 2 \delta, k\right)_q$-CDC as a subset, where
		$$
		\mathcal{A}=\{(\underbrace{1 \cdots 1}_k 0 \cdots 0)\} \cup\{(\underbrace{1 \cdots 1}_{k-i \delta} \underbrace{0 \cdots 0}_\delta \underbrace{1 \cdots 1}_{i \delta} 0 \cdots 0) \mid 1 \leq i \leq\lfloor\frac{k}{\delta}\rfloor\} .
		$$
	\end{lem}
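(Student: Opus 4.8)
The plan is to realize the asserted code as an instance of the multilevel construction of \cite{4} applied to the identifying vector set $\mathcal{A}$, and then to compute the size of each constituent lifted FDRMC and sum them. First I would verify that $\mathcal{A}$ is admissible, i.e. that it is a binary constant weight code of length $n$ and weight $k$ whose minimum Hamming distance is exactly $2\delta$. Each $v_i$ with $1\le i\le\lfloor k/\delta\rfloor$ consists of a block of $k-i\delta$ ones, a gap of $\delta$ zeros, a block of $i\delta$ ones, and a final run of zeros, so the only feature varying with $i$ is the location of the $\delta$-wide gap, which slides to the left as $i$ grows. Comparing $v_i$ with $v_j$ for $i<j$, the two gaps are disjoint and each contributes precisely $\delta$ coordinates on which the two vectors differ, so $d_H(v_i,v_j)=2\delta$; a direct count likewise gives $d_H(v_0,v_i)=2\delta$. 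Hence $\mathcal{A}$ has minimum Hamming distance $2\delta$, and by Lemma~\ref{lem8} the union of the lifted FDRMCs over $\mathcal{A}$ is an $(n,2\delta,k)_q$-CDC: codewords sharing an identifying vector are separated by twice the rank distance $\delta$ of their FDRMC, while codewords with distinct identifying vectors are separated by the Hamming distance $\ge 2\delta$. The hypothesis $k\ge 2\delta$ ensures $\lfloor k/\delta\rfloor\ge 2$, so that $\mathcal{A}$ is nontrivial.

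Next I would read off each FDRMC from its echelon Ferrers form. For $v_0=(1^k0^{n-k})$ the form has a full $k\times(n-k)$ Ferrers diagram, whose optimal rank-metric code is exactly the lifted MRD code of size $q^{(n-k)(k-\delta+1)}$ by Lemma~\ref{lem2} (using $n\ge 2k$, so that $\max\{k,n-k\}=n-k$); this is the promised lifted MRD subcode. For $v_i$ the echelon Ferrers form yields a Ferrers diagram $\mathcal{F}_{v_i}$ with $k-i\delta$ rows of length $n-k$ on top followed by $i\delta$ rows of length $n-k-\delta$; equivalently, its column weights are $k-i\delta$ in the $\delta$ columns lying under the gap and $k$ in the remaining $n-k-\delta$ columns. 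Since $\mathcal{F}_{v_i}$ has fewer rows than columns, I would pass to the transpose $\mathcal{F}_{v_i}^t$, which is $(n-k)\times k$; here $n-k\ge k$, and the hypothesis $n\ge 2k+\delta$ forces every column of $\mathcal{F}_{v_i}^t$ to contain at least $n-k-\delta\ge k$ dots. Thus Lemma~\ref{lem7} applies and yields an optimal FDRMC of dimension $\sum_{j=0}^{k-\delta}\gamma_j^t$, where $\gamma_j^t$ equals $n-k-\delta$ for $j<i\delta$ and $n-k$ otherwise.

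Evaluating this sum splits into two regimes. For $1\le i\le\lfloor k/\delta\rfloor-1$ one has $i\delta\le k-\delta$, so the summed columns include both weights and the sum simplifies to $(n-k)(k-\delta+1)-i\delta^2$, giving $M_{v_i}=q^{(n-k)(k-\delta+1)-i\delta^2}$. For the boundary index $i=\lfloor k/\delta\rfloor$ one instead has $k-\delta<i\delta$, so every one of the $k-\delta+1$ summed columns has weight $n-k-\delta$, giving $M_{v_i}=q^{(n-k-\delta)(k-\delta+1)}$. Together with the contribution of $v_0$, the sizes for $i=0,\dots,\lfloor k/\delta\rfloor-1$ assemble into the geometric sum $q^{(n-k)(k-\delta+1)}\sum_{i=0}^{\lfloor k/\delta\rfloor-1}q^{-i\delta^2}=q^{(n-k)(k-\delta+1)}\frac{1-q^{-\lfloor k/\delta\rfloor\delta^2}}{1-q^{-\delta^2}}$, while the term from $i=\lfloor k/\delta\rfloor$ supplies the remaining summand $q^{(n-k-\delta)(k-\delta+1)}$, so the total equals $M_1$.

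The main obstacle is the FDRMC size computation rather than the distance bookkeeping. One must extract $\mathcal{F}_{v_i}$ correctly from the echelon Ferrers form, transpose it so that Lemma~\ref{lem7} becomes applicable (the lemma needs at least as many rows as columns), and check that its dot-count hypothesis holds for \emph{every} relevant column; this is exactly where $n\ge 2k+\delta$, guaranteeing $n-k-\delta\ge k$, is indispensable. Equally delicate is the boundary index $i=\lfloor k/\delta\rfloor$, at which the gap has slid far enough left that the summation changes form and produces the lone extra summand $q^{(n-k-\delta)(k-\delta+1)}$ rather than a further geometric term.
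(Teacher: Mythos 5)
The paper offers no proof of this statement: Lemma~\ref{lem13} is quoted directly from \cite{18}, so there is no internal argument to compare against. Your reconstruction is correct and complete, and it is the natural (indeed the standard) route to this result. The distance bookkeeping is right: the gaps of $v_i$ and $v_j$ ($i<j$) occupy disjoint $\delta$-blocks inside the first $k+\delta$ coordinates, so $d_H(v_i,v_j)=2\delta$, and Lemma~\ref{lem8} then handles both the equal and distinct identifying-vector cases. The size computation is also right: the Ferrers diagram of $v_i$ has $k-i\delta$ rows of length $n-k$ atop $i\delta$ rows of length $n-k-\delta$; passing to the transpose makes Lemma~\ref{lem7} applicable (it needs at least as many rows as columns), and the hypothesis $n\ge 2k+\delta$ gives $n-k-\delta\ge k$, so every column of $\mathcal{F}_{v_i}^t$ meets the dot-count condition. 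Your case split at the boundary index $i=\lfloor k/\delta\rfloor$ is exactly where the lone summand $q^{(n-k-\delta)(k-\delta+1)}$ arises (for $i\le\lfloor k/\delta\rfloor-1$ one has $i\delta\le k-\delta$, giving the geometric terms $q^{(n-k)(k-\delta+1)-i\delta^2}$; for $i=\lfloor k/\delta\rfloor$ one has $k-\delta<i\delta$, collapsing all summed columns to weight $n-k-\delta$), and summing the geometric series reproduces $M_1$ exactly, with the $v_0$-component being precisely the lifted MRD code of Lemma~\ref{lem2}.
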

	
	The subcode construction is important for improving the lower bounds of size for CDCs under our construction as we provide below.
	\begin{lem} \cite{cos} \label{lem14}(\textbf{Subcode Construction}) Let $\mathcal{C}$ be a linear $[m \times n, k, \delta]_q$ RMC, $\mathcal{M}$ a linear $\left[m \times n, k^{\prime}, \delta^{\prime}\right]_q$ subcode of $\mathcal{C}$ where $\delta^{\prime} \geq \delta$. Then $s=q^{k-k^{\prime}}$ RMCs satisfying the following conditions can be constructed:\\
		(1) $\mathcal{M}_j\left(m \times n, k^{\prime}, \delta^{\prime}\right)_q$ is RMC for all $1 \leq j \leq s$;
		\\(2) For $M \in \mathcal{M}_j$ and $M^{\prime} \in \mathcal{M}_{j^{\prime}}$, satisfy $M \neq M^{\prime}$ and ${rank}\left(M-M^{\prime}\right) \geq \delta$ for all $1 \leq j<j^{\prime} \leq s.$ 
	\end{lem}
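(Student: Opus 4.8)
The plan is to realize the $s$ codes as the cosets of $\mathcal{M}$ inside $\mathcal{C}$. Since $\mathcal{C}$ is a $k$-dimensional $\mathbb{F}_q$-subspace of $\mathbb{F}_q^{m\times n}$ and $\mathcal{M}$ is a $k'$-dimensional subspace of $\mathcal{C}$, the quotient group $\mathcal{C}/\mathcal{M}$ has exactly $q^{k-k'}=s$ elements. I would fix coset representatives $c_1,\dots,c_s\in\mathcal{C}$ (taking $c_1=0$, say, so that $\mathcal{M}_1=\mathcal{M}$) and define $\mathcal{M}_j=c_j+\mathcal{M}$ for $1\le j\le s$, so that $\mathcal{C}=\bigsqcup_{j=1}^{s}\mathcal{M}_j$ is a disjoint union.

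For condition (1), the key observation is that rank distance is translation invariant: for $m_1,m_2\in\mathcal{M}$ one has $d_R(c_j+m_1,c_j+m_2)=\operatorname{rank}((c_j+m_1)-(c_j+m_2))=\operatorname{rank}(m_1-m_2)$. Hence the map $m\mapsto c_j+m$ is a rank-distance-preserving bijection between $\mathcal{M}$ and $\mathcal{M}_j$, so $|\mathcal{M}_j|=q^{k'}$ and the minimum rank distance of $\mathcal{M}_j$ equals that of $\mathcal{M}$, namely $\delta'$. Thus each $\mathcal{M}_j$ is an $(m\times n, q^{k'}, \delta')_q$ RMC, which is exactly (1). (Note that $\mathcal{M}_j$ need not be linear for $j\neq 1$, but linearity is not demanded in (1).)

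For condition (2), I would take $M\in\mathcal{M}_j$ and $M'\in\mathcal{M}_{j'}$ with $j\neq j'$, and write $M=c_j+m$, $M'=c_{j'}+m'$ with $m,m'\in\mathcal{M}$. Then $M-M'=(c_j-c_{j'})+(m-m')\in\mathcal{C}$ because $\mathcal{C}$ is linear. The crucial point is that $M-M'\neq 0$: were it zero, we would get $c_j-c_{j'}=m'-m\in\mathcal{M}$, forcing $c_j$ and $c_{j'}$ into the same coset and contradicting $j\neq j'$. Consequently $M-M'$ is a nonzero codeword of $\mathcal{C}$, and since the minimum rank distance of the linear code $\mathcal{C}$ is $\delta$, we conclude $\operatorname{rank}(M-M')\ge\delta$, which establishes (2).

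The argument is elementary once the coset decomposition is in place, so I do not anticipate any serious obstacle. The only subtlety worth recording carefully is the disjointness of distinct cosets: this is precisely what guarantees $M-M'\neq 0$ in condition (2), and it is what allows us to invoke the (possibly smaller) minimum distance $\delta$ of the ambient code $\mathcal{C}$ for cross-coset pairs, while retaining the larger minimum distance $\delta'$ inside each individual $\mathcal{M}_j$.
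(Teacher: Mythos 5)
Your proposal is correct: the coset decomposition $\mathcal{C}=\bigsqcup_{j=1}^{s}(c_j+\mathcal{M})$, translation invariance of the rank distance within each coset (giving cardinality $q^{k'}$ and minimum distance $\delta'$ for every $\mathcal{M}_j$), and linearity of $\mathcal{C}$ applied to nonzero cross-coset differences are exactly the ingredients of the standard argument. The paper itself states this lemma without proof, citing \cite{cos}, and your proof coincides with the one given in that reference, including the correct reading of condition (1) as a cardinality-$q^{k'}$ (generally nonlinear) code.
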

	
	For convenience, we  adopt the following notations in this paper.
	
	\begin{itemize}
		\item $rs(A)$ denote the subspace in $\mathbb{F}_q^{n}$ spanned by the rows of
		a matrix $ A \in \mathbb{F}_q^{k \times n}. $
		\item $E(\mathcal{U})$ denote a generator matrix in RREF of subspace $\mathcal{U}$.
		\item $\widehat{E}(\mathcal{U})$ denote a generator matrix in RRIEF of subspace $\mathcal{U}$.
		\item $\overline{E}(\mathcal{U})$ denote a generator matrix in RRBEF of subspace $\mathcal{U}$.
		
		\item $v(\mathcal{U})$ denote the identifying vector of a generator matrix in RREF of subspace $\mathcal{U}$.
		\item $\hat{v}(\mathcal{U})$ denote the inverse identifying vector   of a generator matrix in RRIEF of subspace $\mathcal{U}$.
		\item $\bar{v}(\mathcal{U})$ denote the bilateral identifying vector of a generator matrix in RRBEF of subspace $\mathcal{U}$.
		\item $EF(v)$ denote the echelon Ferrers form of  identifying vector $v$.
		\item $\widehat {EF}(\hat{v} )$ denote the inverse echelon Ferrers form of  inverse identifying vector $\hat{v}$.
		\item $\overline{EF}(\bar{v})$ denote the bilateral echelon Ferrers form of  bilateral identifying vector $\bar{v}$.
		
	\end{itemize}

	\section{Inverse bilateral multilevel construction}
	In this section, we propose the inverse bilateral multilevel construction by introducing inverse bilateral identifying vectors and inverse bilateral Ferrers diagram rank-metric codes.	
	
We denote by $\hat{I}_k$ the $k\times k$ matrix defined as follow:	
$$	
\left[\begin{array}{ccccc}
	0 & 0 & \cdots & 0 & 1\\
	0 & 0 & \cdots & 1 & 0\\
	\vdots & \vdots &\ddots&\vdots& \vdots\\
	1& 0& \cdots & 0 & 0
\end{array}\right].
$$

	An $m\times n$ matrix is said to be in \textit{reduced row inverse bilateral echelon form} if its $m_1\times n_1$ submatrix in the lower left-hand corner is in RREF, its  $m_2\times n_2$ submatrix in the upper right-hand corner is in RRIEF, and the middle part is an $(m_1+m_2)\times (n-n_1-n_2)$ submatrix, where  $m_1, m_2, n_1, n_2,m,n$ are positive integers, $m_1+m_2=m$ and $n_1+n_2 \leq n.$

To differentiate between an \textit{inverse bilateral identifying vector} and a bilateral identifying vector, we adopt the notation $\hat{\bar{v}} =$$ (\tilde{x}_1,\dots,\tilde{x}_{n_1}, 
	\bar{\tilde{x}}_{n_1+1},\dots,\bar{\tilde{x}}_{n-n_2},\hat{\tilde{x}}_{n-n_2+1},\dots,\hat{\tilde{x}}_n)$ to represent an inverse bilateral identifying vector of lengh $n$ and weight $k$, where $n_1,n_2,k,n$ are positive integers and  $n_1+n_2\leq n$. The ordered triple $(n_1,n-n_1-n_2,n_2)$
	is called the type of this inverse bilateral identifying vector.
	The \textit{inverse bilateral echelon Ferrers form} of the inverse bilateral identifying vector $\hat{\bar{v}}$, $\widehat{\overline{EF}}(\hat{\bar{v}})$, is the $k \times n$ matrix in redeced row inverse bilateral echelon form such that the $\delta_1 \times n_1$ submatrix in the lower left corner is the echelon Ferrers form of $\tilde{v}_1=(\tilde{x}_1,\dots,\tilde{x}_{n_1}) 
	$, the $\delta_2 \times n_2$ submatrix in the upper right corner is the inverse echelon Ferrers form of $\hat{\tilde{v}}_2=$$(\hat{\tilde{x}}_{n-n_2+1},\dots,\hat{\tilde{x}}_n)$, the $k \times (n-n_1-n_2)$ in the middle is filled with dots and the rest is 0, where $\delta_1,\delta_2$ are positive integers and $\delta_1+\delta_2=k.$	
	
	Assume $\hat{\bar{v}}$ is an inverse bilateral identifying vector as follows:
	$$
	\begin{gathered}
		\hat{\bar{v}}=(\overbrace{\underbrace{\tilde{0}\cdot\cdot\cdot\tilde{0}}_{m_1}\underbrace{\tilde{1}\cdots\tilde{1}}_{\delta_1}\underbrace{\tilde{0}\cdots\tilde{0}}_{m_2}}^{n_1} \overbrace{\bar{\tilde{0}}\cdots\bar{\tilde{0}}}^{n-n_1-n_2} \overbrace{\underbrace{\hat{\tilde{0}}\cdots \hat{\tilde{0}}}_{m_3}\cdots\underbrace{\hat{\tilde{1}}\cdots\hat{\tilde{1}}}_{\delta_2} \underbrace{\hat{\tilde{0}}\cdots\hat{\tilde{0}}}_{m_4}}^{n_2}),
	\end{gathered}
	$$
	where $n_1+n_2\leq n$, $m_1$, $m_2$, $m_3$, $m_4$ are nonnegetive integers and $\delta_1+\delta_2=k.$ The {inverse bilateral echelon Ferrers form} of $\hat{\bar{v}}$ is as follows:
	$$
	\widehat{\overline{EF}}(\hat{\bar{v}})=\left[\begin{array}{ccccccc}
		0_1 & 0_2 & 0_3 & \mathcal{F}_3&\mathcal{F}_4 &\hat{I}_{\delta_2}& 0_4\\
		0_5& {I}_{\delta_1}&  \mathcal{F}_1&  \mathcal{F}_2& 0_6& 0_7& 0_8
		
	\end{array}\right],
	$$
	where $\mathcal{F}_1$ is a $\delta_1\times m_2$ full Ferrers diagram, $\mathcal{F}_2$ is a $\delta_1\times (n-n_1-n_2)$ full Ferrers diagram, $\mathcal{F}_3$ is a $\delta_2\times (n-n_1-n_2)$ full Ferrers diagram diagram and  $\mathcal{F}_4$
	is a $\delta_2\times m_3$ full Ferrers diagram, $0_1,\dots,0_8$ are zero matrices.

	\begin{example}\label{example4}
		Consider the subspace $\mathcal{U} \in \mathcal{G}_2(15,6)$ and assume that a generator matrix in reduced row inverse bilateral echelon form of $\mathcal{U}$ is as follows:
		$$
		\widehat{\overline{E}}(\mathcal{U})=\left[\begin{array}{lllllllllllllll}
			0 & 0 & 0 & 0 & 0 & 0 & 0 & 1 & 0 &1 & 0 & 0 & 0 & 0 &\mathbf{1} \\
			0 & 0 & 0 & 0 & 0 & 0 & 0 & 1 & 1 &  1 & 1 & 0 & \mathbf{1} & 0 & 0 \\
			0 & 0 & 0 & 0 & 0 & 0 & 0 & 0 & 0 & 0 & 1 &\mathbf{1}& 0 & 0 & 0\\
			\mathbf{1}& 0 & 0 & 0 & 1 & 1 & 0 & 0 & 1 &  0 & 0 & 0 & 0 & 0 & 0\\
			0 & 0 & \mathbf{1} & 0 & 1 & 0 & 1 & 1 & 0 &  0 & 0 & 0 & 0 & 0 & 0\\
			0 & 0 & 0 & \mathbf{1} & 0 & 1 & 1 & 1 & 1 & 0 & 0 & 0 & 0 & 0 & 0
		\end{array}\right].
		$$
		Let the inverse bilateral identifying vector of   $\widehat{\overline{E}}(\mathcal{U})$ be $\hat{\bar{v}}$, then $\hat{\bar{v}}=(\tilde{1}\tilde{0}\tilde{1}\tilde{1}\tilde{0}\tilde{0}\tilde{0}\bar{\tilde{0}}  \bar{\tilde{0}} \hat{\tilde{0}} \hat{\tilde{0}} \hat{\tilde{1}} \hat{\tilde{1}} \hat{\tilde{0}} \hat{\tilde{1}})$ and the inverse bilateral echelon Ferrers form of $\hat{\bar{v}}$ is
		
		$$
		\widehat{\overline{EF}}(\hat{\bar{v}})=\left[\begin{array}{ccccccccccccccc}
			0 & 0 & 0 & 0 & 0 & 0 & 0 & \bullet & \bullet & \bullet & \bullet & 0 & 0 & \bullet & \mathbf{1} \\
			0 & 0 & 0 & 0 & 0 & 0 & 0 & \bullet & \bullet & \bullet &\bullet & 0 &\mathbf{1} & 0 & 0 \\
			0 & 0 & 0 & 0 & 0 & 0 & 0 & \bullet & \bullet & \bullet & \bullet & \mathbf{1} & 0 & 0 & 0\\
			\mathbf{1}& \bullet & 0 & 0 & \bullet & \bullet & \bullet & \bullet & \bullet & 0 & 0 & 0 & 0 & 0 & 0 \\
			0 & 0 &\mathbf{1} & 0 & \bullet & \bullet & \bullet & \bullet & \bullet & 0 & 0 & 0 & 0 & 0 & 0 \\
			0 & 0 & 0 & \mathbf{1}& \bullet & \bullet & \bullet & \bullet & \bullet & 0 & 0 & 0 & 0 & 0 & 0 \\
		\end{array}\right],
		$$
		and the inverse bilateral Ferrers diagram is 
		
		$$
		\mathcal{F}=\begin{matrix}
			& & & & &\bullet&\bullet&\bullet& \bullet&\bullet\\
			& & & & & \bullet&\bullet&\bullet& \bullet\\
			& & & & &\bullet&\bullet&\bullet& \bullet\\
			&\bullet& \bullet& \bullet&\bullet&\bullet&\bullet\\
			& &\bullet&\bullet&\bullet&\bullet&\bullet\\
			& &\bullet&\bullet&\bullet&\bullet&\bullet
		\end{matrix}.$$

	\end{example}

	We can give the following corollary by Lemma 11, which is crucial for controlling the subspace diatance within the inverse bilateral multilevel construction.
	
	\begin{coro} \label{coro1}Let $\mathcal{U}$, $\mathcal{V}$ $\in \mathcal{G}_q(n, k)$, $\mathcal{U}={rs}(U)$ and $\mathcal{V}={rs}(V)$, where $U$, $V$ $\in \mathbb{F}_q^{k \times n}$ are in reduced row inverse bilateral echelon forms. If two inverse bilateral identifying vectors $\hat{\bar{v}}(\mathcal{U})$, $\hat{\bar{v}}(\mathcal{V})$ of $U$ and $V$ have the same type $\left(n_1, n-n_1-n_2, n_2\right)$ with $n_1+n_2 \leq n$, then
		$$
		d_S(\mathcal{U}, \mathcal{V}) \geq d_H(\hat{\bar{v}}(\mathcal{U}),\hat{\bar{v}}(\mathcal{V})).
		$$
		Furthermore, if $\hat{\bar{v}}(\mathcal{U})=\hat{\bar{v}}(\mathcal{V})$, then 
		$$
		d_S(\mathcal{U}, \mathcal{V})=2 d_R\left(C_U, C_V\right),
		$$
		where $C_U$ and $C_V$ denote the submatrices of $U$ and $V$ without the columns of their pivots, respectively.
		
	\end{coro}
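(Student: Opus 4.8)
The plan is to reduce Corollary \ref{coro1} to Lemma \ref{lem11} by a single column-reversal, exploiting the fact that reversing the order of columns interchanges the RREF and RRIEF conditions. Let $J$ be the $n\times n$ anti-diagonal permutation matrix, so that right-multiplication by $J$ reverses the order of the columns of any matrix.

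The first step is to establish the structural correspondence. If $M$ is in RREF then $MJ$ is in RRIEF, and conversely: reversing the columns turns the leftmost pivot of each row into the rightmost one, reverses the left-to-right ordering of the pivots, and keeps each pivot the unique nonzero entry of its column. Applying this blockwise to a matrix $U$ in reduced row inverse bilateral echelon form of type $(n_1,n-n_1-n_2,n_2)$, the lower-left RREF block is carried by $J$ to the lower-right position and becomes RRIEF, the upper-right RRIEF block is carried to the upper-left position and becomes RREF, and the full middle block remains full. Hence $UJ$ is in RRBEF of type $(n_2,n-n_1-n_2,n_1)$, and likewise for $VJ$; moreover the bilateral identifying vector of ${rs}(UJ)$ is exactly the coordinate-reversal of $\hat{\bar{v}}(\mathcal{U})$, and similarly for $V$. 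Since $U$ and $V$ share the same inverse bilateral type, $UJ$ and $VJ$ share the same bilateral type, so the hypotheses of Lemma \ref{lem11} are satisfied.

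The second step is to transport the metrics. Because $J$ is invertible, $x\mapsto xJ$ is a linear automorphism of $\mathbb{F}_q^n$ sending $\mathcal{U}$ to ${rs}(UJ)$ and $\mathcal{V}$ to ${rs}(VJ)$ while preserving all subspace and intersection dimensions, whence $d_S({rs}(UJ),{rs}(VJ))=d_S(\mathcal{U},\mathcal{V})$; and coordinate-reversal is a Hamming isometry, so $d_H$ of the reversed identifying vectors equals $d_H(\hat{\bar{v}}(\mathcal{U}),\hat{\bar{v}}(\mathcal{V}))$. Applying Lemma \ref{lem11} to $UJ,VJ$ and combining these two invariances yields $d_S(\mathcal{U},\mathcal{V})\ge d_H(\hat{\bar{v}}(\mathcal{U}),\hat{\bar{v}}(\mathcal{V}))$, the first assertion.

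For the equality, suppose $\hat{\bar{v}}(\mathcal{U})=\hat{\bar{v}}(\mathcal{V})$; then the reversed vectors coincide, so ${rs}(UJ)$ and ${rs}(VJ)$ have the same bilateral identifying vector and Lemma \ref{lem11} gives $d_S({rs}(UJ),{rs}(VJ))=2\,d_R(C_{UJ},C_{VJ})$. Here $U$ and $V$ share the same pivot columns, hence the same non-pivot columns, and $J$ permutes the non-pivot columns among themselves; thus $C_{UJ}$ and $C_{VJ}$ are obtained from $C_U$ and $C_V$ by one common column permutation, and rank distance is invariant under such a permutation, giving $d_R(C_{UJ},C_{VJ})=d_R(C_U,C_V)$. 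Together with the $d_S$-invariance this yields $d_S(\mathcal{U},\mathcal{V})=2\,d_R(C_U,C_V)$. The step I expect to be the main obstacle is the purely combinatorial first one: verifying carefully that $J$ sends the reduced row inverse bilateral echelon form to a genuine RRBEF of the swapped type and that pivot columns map exactly to pivot columns, so that the non-pivot submatrices correspond correctly. Once this bookkeeping is settled, everything else follows formally from Lemma \ref{lem11} and the isometry properties of $J$.
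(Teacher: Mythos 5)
Your proof is correct and follows the route the paper intends: the paper states this corollary with no written proof, merely asserting that it follows from Lemma \ref{lem11}, and your column-reversal (mirror) argument is precisely the reduction that assertion presupposes. In fact, your write-up supplies the bookkeeping the paper omits — that column reversal swaps RREF and RRIEF blockwise, preserves $d_S$ and $d_H$, and carries non-pivot columns to non-pivot columns so that $d_R(C_U,C_V)$ is unchanged.
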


	\begin{theo}\label{theo1} (\textbf{Inverse bilateral multilevel construction}) Let $\hat{\mathcal{B}}$ be a set of inverse bilateral identifying vectors, which is a binary constant weight code of length $n$, weight $k$ and minimum Hamming distance $2 \delta$. Suppose that all inverse bilateral identifying vectors have the same type. For each codeword $\hat{\bar{v}} \in \hat{\mathcal{B}}$, its inverse bilateral echelon Ferrers form is $\widehat{\overline{E F}}(\hat{\bar{v}})$, all dots in $\widehat{\overline{E F}}(\hat{\bar{v}})$ produce an inverse bilateral Ferrers diagram $\mathcal{F}_{\hat{\bar{v}}}$. If there exists an $\left(\mathcal{F}_{\hat{\bar{v}}}, M_{\hat{\bar{v}}}, \delta\right)_q$ inverse bilateral Ferrers diagram rank-metric code  $\mathcal{D}_{\hat{\bar{v}}}$ for each $\hat{\bar{v}} \in \hat{\mathcal{B}}$, the row spaces of the matrices in $\bigcup_{\hat{\bar{v}} \in \hat{\mathcal{B}}} \mathcal{D}_{\hat{\bar{v}}}$ form an $(n, 2 \delta, k)_q$-CDC.
	\end{theo}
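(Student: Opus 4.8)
The plan is to mirror the proof of the bilateral multilevel construction (Lemma \ref{lem12}), substituting its distance estimate (Lemma \ref{lem11}) with the inverse analogue provided by Corollary \ref{coro1}. Each element of the proposed code is the row space of a $k \times n$ matrix $U$ obtained by inserting a codeword of $\mathcal{D}_{\hat{\bar{v}}}$ into the dot positions of $\widehat{\overline{EF}}(\hat{\bar{v}})$ for some $\hat{\bar{v}} \in \hat{\mathcal{B}}$; since the $k$ pivots of $\widehat{\overline{EF}}(\hat{\bar{v}})$ are preserved, every such $U$ has rank $k$ and is in reduced row inverse bilateral echelon form, so its row space lies in $\mathcal{G}_q(n, k)$. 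It therefore suffices to prove that any two distinct codewords $\mathcal{U} = rs(U)$ and $\mathcal{V} = rs(V)$ satisfy $d_S(\mathcal{U}, \mathcal{V}) \geq 2\delta$.

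First I would split the estimate according to whether $U$ and $V$ carry the same inverse bilateral identifying vector. In the case $\hat{\bar{v}}(\mathcal{U}) = \hat{\bar{v}}(\mathcal{V})$, the matrices $U$ and $V$ are distinct liftings of two distinct codewords of the same code $\mathcal{D}_{\hat{\bar{v}}}$. The second assertion of Corollary \ref{coro1} then gives $d_S(\mathcal{U}, \mathcal{V}) = 2 d_R(C_U, C_V)$, where $C_U$ and $C_V$ are obtained by deleting the $k$ pivot columns. Because deleting the pivot columns returns $U$ and $V$ to the underlying codewords of $\mathcal{D}_{\hat{\bar{v}}}$ (up to the forced zero columns, which do not affect rank distance), the quantity $d_R(C_U, C_V)$ equals the rank distance of two distinct codewords of the inverse bilateral Ferrers diagram rank-metric code, which is at least $\delta$. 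Hence $d_S(\mathcal{U}, \mathcal{V}) \geq 2\delta$.

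In the complementary case $\hat{\bar{v}}(\mathcal{U}) \neq \hat{\bar{v}}(\mathcal{V})$, both identifying vectors belong to $\hat{\mathcal{B}}$ and, by hypothesis, share the same type $(n_1, n - n_1 - n_2, n_2)$; thus the first assertion of Corollary \ref{coro1} applies and yields $d_S(\mathcal{U}, \mathcal{V}) \geq d_H(\hat{\bar{v}}(\mathcal{U}), \hat{\bar{v}}(\mathcal{V}))$. Since $\hat{\mathcal{B}}$ is a binary constant weight code of minimum Hamming distance $2\delta$, the right-hand side is at least $2\delta$. Combining the two cases shows that any two distinct codewords are at subspace distance at least $2\delta$; in particular distinct matrices yield distinct subspaces, so $\bigcup_{\hat{\bar{v}} \in \hat{\mathcal{B}}} \mathcal{D}_{\hat{\bar{v}}}$ produces an $(n, 2\delta, k)_q$-CDC.

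The step I expect to demand the most care is the identification used in the equal-vector case: one must check that deleting the pivot columns of a lifted matrix recovers exactly the corresponding inverse bilateral Ferrers diagram rank-metric codeword, so that the minimum rank distance $\delta$ of $\mathcal{D}_{\hat{\bar{v}}}$ transfers to the bound $d_R(C_U, C_V) \geq \delta$. Concretely, this requires verifying that the $k$ columns removed are precisely the pivot columns indicated by $\widehat{\overline{EF}}(\hat{\bar{v}})$ and that every remaining nonzero entry occupies a dot position of the diagram $\mathcal{F}_{\hat{\bar{v}}}$, so that no information beyond the rank-metric codeword survives the deletion. Once this bookkeeping is settled, the rest follows formally from Corollary \ref{coro1} together with the constant-weight and minimum-distance hypotheses on $\hat{\mathcal{B}}$.
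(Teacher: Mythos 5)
Your proposal is correct and follows exactly the route the paper takes: the paper's own (very terse) proof simply invokes Corollary \ref{coro1}, and your two-case split---same inverse bilateral identifying vector, handled by the equality $d_S(\mathcal{U},\mathcal{V})=2d_R(C_U,C_V)\geq 2\delta$ from the minimum rank distance of $\mathcal{D}_{\hat{\bar{v}}}$, versus distinct identifying vectors of the same type, handled by $d_S(\mathcal{U},\mathcal{V})\geq d_H(\hat{\bar{v}}(\mathcal{U}),\hat{\bar{v}}(\mathcal{V}))\geq 2\delta$---is precisely the argument that invocation compresses. Your expanded write-up, including the bookkeeping that deleting pivot columns recovers the rank-metric codeword, is a faithful and more complete rendering of the same proof.
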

	\begin{proof}
		Clearly, it is sufficient to prove the subspace distance. It is easy to see from Corollary 1 that for any $\mathcal{U},$ $\mathcal{V}$ in the CDC mentioned above and $\mathcal{U}\neq\mathcal{V}$, we have $d_S(\mathcal{U}, \mathcal{V}) \geq 2\delta$.
	\end{proof}
	
	\section{Multilevel inserting construction}
	In this section, based on subcode construction and inverse bilateral multilevel construction, we expand the cardinality of some CDCs described in \cite{30}. Furthermore, by introducing a new set of bilateral identifying vectors, we improve the lower bounds of $\bar{A}_q(17,6,6)$, $\bar{A}_q(18,6,7)$ and $\bar{A}_q(19,6,7)$. 
	
	Below, based on the idea of Lemma 14, we give the following lemmas.
	\begin{lem}\label{lem15} Let $n_1$, $n_2$, $n_3$, $\delta_1$, $\delta_2$, $b_1$, $b_2$, $k_1$, $k_2$ be positive integers satisfying $n_1+n_2+n_3=n$, $\delta_1+\delta_2=k$, $b_1+b_2 \geq \delta$ and $1<b_i<\delta$ for $i=1,2$. Presume that $\mathcal{D}_1$ is a linear $\left[ \delta_1\times\left(n_1-\delta_1\right), k_1,b_1\right]_q$ RMC, $\mathcal{M}_1$ is a linear $\left[ \delta_1 \times\left(n_1-\delta_1\right), {k}_1^{\prime},\delta\right]_q$ subcode of $\mathcal{D}_1$. Assume that  $\mathcal{D}_2$ is a linear  $\left[ \delta_2 \times n_3, k_2, b_2\right] _q$ RMC, $\mathcal{M}_2$ is a linear $\left[ \delta_2 \times n_3, {k}_2^{\prime},\delta\right] _q$ subcode of $\mathcal{D}_2$. Then $s=\min\{{q}^{k_1-{k}_1^{\prime}},{q}^{k_2-{k}_2^{\prime}}\}$ RMCs satisfying the following conditions can be constructed:
	\begin{itemize}
		\item $\mathcal{M}_1^r \left(\delta_1 \times\left(n_1-\delta_1\right),\left|\mathcal{M}_1^r\right|, \delta\right)_q$ and 
		$\mathcal{M}_2^r \left(\delta_2 \times n_3,\left|\mathcal{M}_2^r\right|, \delta\right)_q$ are RMCs for all  $1 \leq r\leq s$;
		\item For $M \in \mathcal{M}_i^r$ and $M^{\prime} \in \mathcal{M}_i^{r^{\prime}}$ satisfy $M \neq M^{\prime}$ and ${rank}\left(M-M^{\prime}\right) \geq b_i$ for all $1\leq i\leq 2$, $1 \leq r<r^{\prime} \leq s $. 
	\end{itemize}
	Let $\mathcal{M}_3$ is a $\left(\delta_1 \times n_3,\left|\mathcal{M}_3\right|, \delta\right)_q$ RMC. 
		  Then $\mathcal{C}_4=\bigcup_{r=1}^s \mathcal{C}_r$ is an $(n, *, 2 \delta, k)_q$-CDC, where 
		$$
		\begin{aligned}
		\mathcal{C}_r =	\left\{{ rs }\left[\begin{array}{ccccc}
				0_{\delta_2 \times \delta_1} & 0_{\delta_2 \times\left(n_1 - \delta_1\right)} & M_2 & \hat{I}_{\delta_2} & 0_{\delta_2 \times\left(n_2-\delta_2\right)}\\
				I_{\delta_1} & M_1 & M_3 & 0_{\delta_1 \times \delta_2} & 0_{\delta_1 \times\left(n_2-\delta_2\right)} 
			\end{array}\right]:
			M_1 \in \mathcal{M}_1^r, M_2 \in \mathcal{M}_2^r, M_3 \in \mathcal{M}_3\right\}.
		\end{aligned}
		$$
	\end{lem}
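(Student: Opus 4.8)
The plan is to show two things: first, that each $\mathcal{C}_r$ is an $(n, *, 2\delta, k)_q$-CDC of constant dimension $k$, and second, that codewords drawn from distinct components $\mathcal{C}_r$ and $\mathcal{C}_{r'}$ also stay at subspace distance at least $2\delta$. Since the union $\mathcal{C}_4=\bigcup_{r=1}^s \mathcal{C}_r$ is the whole code, these two facts together give the claim.

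First I would verify that each generating matrix has full row rank $k=\delta_1+\delta_2$, so every element of $\mathcal{C}_r$ is a genuine $k$-dimensional subspace. This is immediate: the $I_{\delta_1}$ block pins down the bottom $\delta_1$ rows as pivot-bearing in the leftmost columns, while the $\hat{I}_{\delta_2}$ block forces the top $\delta_2$ rows to have pivots in the right-hand columns (which are zero in the bottom rows). Thus the matrix is already in reduced row inverse bilateral echelon form, and its inverse bilateral identifying vector $\hat{\bar v}$ is the \emph{same} for every codeword in $\mathcal{C}_r$ (it is determined only by the positions of $I_{\delta_1}$ and $\hat{I}_{\delta_2}$, not by the free blocks $M_1,M_2,M_3$). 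Consequently, by the ``furthermore'' clause of Corollary~1, for two codewords $\mathcal{U},\mathcal{V}\in\mathcal{C}_r$ we have $d_S(\mathcal{U},\mathcal{V})=2\,d_R(C_U,C_V)$, where $C_U,C_V$ are the submatrices obtained by deleting the pivot columns. Deleting the pivot columns leaves precisely the blocks $M_1,M_2,M_3$ assembled into a single matrix, so $d_R(C_U,C_V)=\mathrm{rank}(M_1-M_1',M_2-M_2',M_3-M_3')\geq \delta$ because $\mathcal{M}_1^r,\mathcal{M}_2^r,\mathcal{M}_3$ all have minimum rank distance at least $\delta$ (a block-diagonal-type argument: the rank of the whole difference is at least the rank of any single block difference). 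Hence $d_S\geq 2\delta$ within each $\mathcal{C}_r$.

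The main obstacle is the cross-component distance. For $\mathcal{U}\in\mathcal{C}_r$ and $\mathcal{V}\in\mathcal{C}_{r'}$ with $r\neq r'$, the two codewords share the same type and the same pivot positions, so again $\hat{\bar v}(\mathcal{U})=\hat{\bar v}(\mathcal{V})$ and Corollary~1 gives $d_S(\mathcal{U},\mathcal{V})=2\,d_R(C_U,C_V)$. Writing $\mathcal{U}$ with free blocks $(M_1,M_2,M_3)$ where $M_i\in\mathcal{M}_i^r$, and $\mathcal{V}$ with $(M_1',M_2',M_3')$ where $M_i'\in\mathcal{M}_i^{r'}$, the distance is $\mathrm{rank}\big(M_1-M_1',\,M_2-M_2',\,M_3-M_3'\big)$. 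Here I would invoke Lemma~15's second bullet: the construction (following Lemma~14) guarantees $\mathrm{rank}(M_1-M_1')\geq b_1$ and $\mathrm{rank}(M_2-M_2')\geq b_2$ across distinct cosets $r\neq r'$. Since $M_1-M_1'$ occupies a disjoint set of rows and columns from $M_2-M_2'$ inside the combined matrix $C_U-C_V$ (the $\delta_1$ bottom rows carry $M_1,M_3$ and the $\delta_2$ top rows carry $M_2$, and these blocks sit in disjoint column ranges except for the shared middle block $M_3$), the total rank is bounded below by $\mathrm{rank}(M_1-M_1')+\mathrm{rank}(M_2-M_2')\geq b_1+b_2\geq\delta$. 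The hypothesis $b_1+b_2\geq\delta$ is exactly what makes this superposition argument close. Therefore $d_S(\mathcal{U},\mathcal{V})\geq 2\delta$ in the cross-component case as well, completing the proof that $\mathcal{C}_4$ is an $(n,2\delta,k)_q$-CDC.

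I expect the delicate point to be the rank-superposition step: one must check that the row- and column-supports of the $M_1$-block and the $M_2$-block are genuinely disjoint inside $C_U-C_V$, so that their rank contributions add. The block layout in the matrix defining $\mathcal{C}_r$ is arranged precisely for this—$M_2$ lives in the top $\delta_2$ rows and the middle/right columns, while $M_1$ lives in the bottom $\delta_1$ rows and the left-middle columns—so after deleting the pivot columns the two difference blocks overlap in neither rows nor columns, and the additive lower bound on rank is valid.
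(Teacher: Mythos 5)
Your proof is correct and takes essentially the same route as the paper's: both reduce $d_S(\mathcal{U}_1,\mathcal{U}_2)$ to twice the rank of the block matrix $\bigl[\begin{smallmatrix} 0 & M_2'-M_2 \\ M_1'-M_1 & M_3'-M_3 \end{smallmatrix}\bigr]$, then use the minimum rank distance $\delta$ of $\mathcal{M}_1^r$, $\mathcal{M}_2^r$, $\mathcal{M}_3$ in the same-component case and the anti-triangular superposition bound $\mathrm{rank}\geq \mathrm{rank}(M_1'-M_1)+\mathrm{rank}(M_2'-M_2)\geq b_1+b_2\geq\delta$ in the cross-component case. The only cosmetic differences are that you reach this block matrix by citing Corollary 1 (all codewords share one inverse bilateral identifying vector) where the paper row-reduces the stacked generator matrices explicitly, and that the paper splits cases on whether $M_3=M_3'$ rather than on $r=r'$; both organizations close because the superposition bound (which needs, and here has, the zero block in the corner sharing rows with $M_2'-M_2$ and columns with $M_1'-M_1$) holds regardless of the $M_3'-M_3$ block.
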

	
	\begin{proof}Clearly, $\mathcal{U}$ is a $k$-dimensional subspace of $\mathbb{F}_q^n$ for any $\mathcal{U} \in \mathcal{C}_4$. For the distance analysis, let $\mathcal{U}_1 \in \mathcal{C}_r$, $\mathcal{U}_2 \in \mathcal{C}_{r^{\prime}}$ be $k$-dimensional subspace for $1 \leq r \leq r^{\prime} \leq s$ and $\mathcal{U}_i$ be spanned by the rows of matrix $G_i$ for $i=1,2$,
		$$
		\begin{aligned}
			G_1 & =\left[\begin{array}{ccccc}
				0_{\delta_2 \times \delta_1} & 0_{\delta_2 \times\left(n_1 - \delta_1\right)} & M_2 & \hat{I}_{\delta_2} & 0_{\delta_2 \times\left(n_2-\delta_2\right)}\\
				I_{\delta_1} & M_1 & M_3 & 0_{\delta_1 \times \delta_2} & 0_{\delta_1 \times\left(n_2-\delta_2\right)}
			\end{array}\right], \\
			G_2 & =\left[\begin{array}{ccccc}
				0_{\delta_2 \times \delta_1} & 0_{\delta_2 \times\left(n_1 -\delta_1\right)} & M_2^{\prime} & \hat{I}_{\delta_2} & 0_{\delta_2 \times\left(n_2-\delta_2\right)}\\
				I_{\delta_1} & M_1^{\prime} & M_3^{\prime} & 0_{\delta_1 \times \delta_2} & 0_{\delta_1 \times\left(n_2-\delta_2\right)} 
			\end{array}\right],
		\end{aligned}
		$$
		where $M_1 \in \mathcal{M}_1^r$, $M_2 \in \mathcal{M}_2^r$, $M^{\prime}_1 \in \mathcal{M}_1^{r^{\prime}}$, $M^{\prime}_2 \in \mathcal{M}_2^{r^{\prime}}$ and $M_3$, $M^{\prime}_3 \in \mathcal{M}_3$. Then we have
		
		\begin{align*}
			&d_S\left(\mathcal{U}_1, \mathcal{U}_2\right)\\
			&=2\mathrm{rank}\left[\begin{array}{ccccc}
				0_{\delta_2 \times \delta_1} & 0_{\delta_2 \times\left(n_1 -\delta_1\right)} & M_2 & \hat{I}_{\delta_2} & 0_{\delta_2 \times\left(n_2-\delta_2\right)} \\
			I_{\delta_1} & M_1 & M_3 & 0_{\delta_1 \times \delta_2} & 0_{\delta_1 \times\left(n_2-\delta_2\right)} \\
				0_{\delta_2 \times \delta_1} & 0_{\delta_2 \times\left(n_1 - \delta_1\right)} & M_2^{\prime} & \hat{I}_{\delta_2} & 0_{\delta_2 \times\left(n_2-\delta_2\right)}\\
				I_{\delta_1} & M_1^{\prime} & M_3^{\prime} & 0_{\delta_1 \times \delta_2} & 0_{\delta_1 \times\left(n_2-\delta_2\right)} 
			\end{array}\right]-2 k\\
			&=2\mathrm{rank}\left[\begin{array}{ccccc}
				0_{\delta_2 \times \delta_1} & 0_{\delta_2 \times\left(n_1 -\delta_1\right)} & M_2 & \hat{I}_{\delta_2} & 0_{\delta_2 \times\left(n_2-\delta_2\right)} \\
				I_{\delta_1} & M_1 & M_3 & 0_{\delta_1 \times \delta_2} & 0_{\delta_1 \times\left(n_2-\delta_2\right)} \\
				0_{\delta_2 \times \delta_1} & 0_{\delta_2 \times\left(n_1 -\delta_1\right)} & M_2^{\prime}-M_2 & 0_{\delta_2 \times \delta_2} & 0_{\delta_2 \times\left(n_2-\delta_2\right)}\\
				0_{\delta_1 \times \delta_1} & M_1^{\prime}-M_1 & M_3^{\prime}-M_3 & 0_{\delta_1 \times \delta_2} & 0_{\delta_1 \times\left(n_2-\delta_2\right)} 
			\end{array}\right]-2 k\\
			&=2\mathrm{rank}\left[\begin{array}{ccccc}
				0_{\delta_2 \times \delta_1} & 0_{\delta_2 \times\left(n_1-\delta_1\right)} & 0_{\delta_2 \times n_3} & \hat{I}_{\delta_2} & 0_{\delta_2 \times\left(n_2-\delta_2\right)} \\
				I_{\delta_1} & 0_{\delta_1 \times\left(n_1-\delta_1\right)} & 0_{\delta_1 \times n_3} & 0_{\delta_1 \times \delta_2} & 0_{\delta_1 \times\left(n_2-\delta_2\right)} \\
				0_{\delta_2 \times \delta_1} & 0_{\delta_2 \times\left(n_1-\delta_1\right)} & M^{\prime}_2-M_2 & 0_{\delta_2 \times \delta_2} & 0_{\delta_2 \times\left(n_2-\delta_2\right)}\\
				0_{\delta_1 \times \delta_1} & M^{\prime}_1-M_1 & M^{\prime}_3-M_3 & 0_{\delta_1 \times \delta_2} & 0_{\delta_1 \times\left(n_2-\delta_2\right)} 
			\end{array}\right]-2 k \\
			&=2 \mathrm{rank}\left[\begin{array}{cc}
				0_{\delta_2 \times\left(n_1-\delta_1\right)} & M^{\prime}_2-M_2\\
				M^{\prime}_1-M_1 & M^{\prime}_3-M_3 
			\end{array}\right] .
		\end{align*}
		
		If $M^{\prime}_3\neq M_3$, then  $d_S\left(\mathcal{U}_1, \mathcal{U}_2\right) \geq 2 \mathrm{rank}\left(M^{\prime}_3-M_3\right) \geq 2 \delta$. If $M^{\prime}_3=M_3$ and $r=r^{\prime}$, then we have $M^{\prime}_1 \neq M_1$ or $M^{\prime}_2 \neq M_2$. Clearly, $d_S\left(\mathcal{U}_1, \mathcal{U}_2\right) \geq 2 \mathrm{rank}\left(M^{\prime}_1-M_1\right) \geq 2 \delta$ or $d_S\left(\mathcal{U}_1, \mathcal{U}_2\right) \geq 2\mathrm{rank} \left(M^{\prime}_2-M_2\right) \geq$ $2 \delta$. If $M^{\prime}_3=M_3$ and $r \neq r^{\prime}$, then we have $d_S\left(\mathcal{U}_1, \mathcal{U}_2\right) \geq 2 \left(\mathrm{rank}\left(M^{\prime}_1-M_1\right)+\mathrm{rank}\left(M^{\prime}_2-M_2\right)\right) \geq$ $2\left(b_1+b_2\right) \geq 2 \delta$. The conclusion is proved.
	\end{proof}
	
	\begin{lem}\label{lem16}  Let $n_1$, $n_2$, $n_3$, $\delta_1$, $\delta_2$, $b_1$, $b_2$, $k_3$, $k_4$ be positive integers satisfying $n_1+n_2+n_3=n$, $\delta_1+\delta_2=k$, $b_1+b_2 \geq \delta$ and $1<b_i<\delta$ for $i=1,2$. 
		Presume that	$\mathcal{D}_3$ is a linear $\left[ \delta_1 \times n_3, k_3,b_1\right]_q$ RMC, $\mathcal{M}_3$ is a linear $\left[ \delta_1 \times n_3, {k}_3^{\prime}, \delta\right]_q$ subcode of $\mathcal{D}_3$.
	Assume that  $\mathcal{D}_4$ is a linear  $\left[ \delta_2 \times (n_2-\delta_2),k_4, b_2\right] _q$ RMC, $\mathcal{M}_4$ is a linear $\left[ \delta_2 \times (n_2-\delta_2), {k}_4^{\prime},\delta\right] _q$ subcode of $\mathcal{D}_4$.
	Then $s=\min\{{q}^{k_3-{k}_3^{\prime}},{q}^{k_4-{k}_4^{\prime}}\}$ RMCs satisfying the following conditions can be constructed:
	\begin{itemize}
		\item $\mathcal{M}_1^u \left(\delta_1 \times n_3,\left|\mathcal{M}_1^u\right|, \delta\right)_q$ and 
		$\mathcal{M}_2^u \left(\delta_2 \times (n_2-\delta_2),\left|\mathcal{M}_2^u\right|, \delta\right)_q$ are RMCs for all  $1 \leq u\leq s$;
		\item For $M \in \mathcal{M}_i^u$ and $M^{\prime} \in \mathcal{M}_i^{u^{\prime}}$ satisfy $M \neq M^{\prime}$ and ${rank}\left(M-M^{\prime}\right) \geq b_i$ for all $1\leq i\leq 2$, $1 \leq u<u^{\prime} \leq s $. 
	\end{itemize}	
		Let $\mathcal{M}_3$ is a $\left(\delta_2 \times n_3,\left|\mathcal{M}_3\right|, \delta\right)_q$ RMC. 
		 Then $\mathcal{C}_5=\bigcup_{u=1}^s \mathcal{C}_u$ is an $(n, *, 2 \delta, k)_q$-CDC, where 
		$$
		\mathcal{C}_u =\left\{{rs}\left[\begin{array}{ccccc}
			0_{\delta_2 \times\left(n_1 - \delta_1\right)} &0_{\delta_2 \times \delta_1} & M_3 & M_2 & \hat{I}_{\delta_2}\\
			0_{\delta_1 \times\left(n_1 - \delta_1\right)}& I_{\delta_1} & M_1 & 0_{\delta_1 \times\left(n_2-\delta_2\right)}& 0_{\delta_1 \times \delta_2} 
		\end{array}\right]: M_1 \in \mathcal{M}_1^u, M_2 \in \mathcal{M}_2^u, M_3 \in \mathcal{M}_3\right\}.
		$$
	\end{lem}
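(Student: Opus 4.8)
The plan is to follow the proof of Lemma~\ref{lem15} almost line for line, since the generator matrices here are obtained from those of Lemma~\ref{lem15} by reflecting the column blocks and interchanging the roles of the plain identity block and the anti-identity block $\hat I$. First I would verify that every matrix spanning a codeword of $\mathcal C_5$ has full row rank $k$: the blocks $I_{\delta_1}$ and $\hat I_{\delta_2}$ lie in disjoint column ranges (the $\delta_1$ columns of the second block and the last $\delta_2$ columns), neither of which meets the columns occupied by $M_1,M_2,M_3$, so the $\delta_1+\delta_2=k$ rows are independent and each $\mathcal U\in\mathcal C_5$ is genuinely $k$-dimensional.

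For the distance I would pick $\mathcal U_1\in\mathcal C_u$ and $\mathcal U_2\in\mathcal C_{u'}$ with $1\le u\le u'\le s$, spanned by generator matrices $G_1,G_2$, and use $d_S(\mathcal U_1,\mathcal U_2)=2\,\mathrm{rank}\begin{bmatrix}G_1\\ G_2\end{bmatrix}-2k$. Subtracting the two rows of $G_1$ from the corresponding rows of $G_2$ turns the lower copy into the differences $(M_3'-M_3,\,M_2'-M_2,\,0)$ and $(M_1'-M_1,\,0,\,0)$, while the surviving $I_{\delta_1}$ and $\hat I_{\delta_2}$ in the upper two row blocks allow the remaining occurrences of $M_1,M_2,M_3$ to be cleared by column operations without altering the rank. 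This isolates a full-rank $k\times k$ identity part and leaves
\[
d_S(\mathcal U_1,\mathcal U_2)=2\,\mathrm{rank}\begin{bmatrix} M_3'-M_3 & M_2'-M_2\\ M_1'-M_1 & 0\end{bmatrix}.
\]

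The proof then closes with the same three-case analysis as Lemma~\ref{lem15}. If $M_3'\neq M_3$, then $M_3,M_3'$ are distinct elements of the minimum-rank-distance-$\delta$ code $\mathcal M_3$, so $\mathrm{rank}(M_3'-M_3)\ge\delta$, and this being a block of the displayed matrix forces $d_S\ge2\delta$. If $M_3'=M_3$ and $u=u'$, distinctness of the two codewords forces $M_1'\neq M_1$ or $M_2'\neq M_2$; since each $\mathcal M_i^u$ has minimum rank distance $\delta$, the corresponding block has rank at least $\delta$ and again $d_S\ge2\delta$. If $M_3'=M_3$ and $u\neq u'$, the matrix becomes block anti-diagonal, whence its rank equals $\mathrm{rank}(M_1'-M_1)+\mathrm{rank}(M_2'-M_2)\ge b_1+b_2\ge\delta$, so $d_S\ge2\delta$.

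I expect the one step genuinely requiring care to be the rank reduction in the second paragraph: I must confirm that the mirrored block layout still keeps $I_{\delta_1}$ and $\hat I_{\delta_2}$ supported on columns disjoint from every $M_i,M_i'$, so that the column-clearing is legitimate, and that in the case $u\neq u'$ the zero corner block is precisely what yields the clean anti-diagonal decomposition behind the additive estimate $b_1+b_2\ge\delta$. Everything else is a direct transcription of the argument for Lemma~\ref{lem15}.
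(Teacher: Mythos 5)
Your proposal is correct and is exactly what the paper intends: the paper omits this proof with the remark that it is ``similar to Lemma~\ref{lem15},'' and your argument is precisely that adaptation, correctly accounting for the mirrored block layout (in particular, the zero block now sits in the lower-right corner of the reduced matrix $\begin{bmatrix} M_3'-M_3 & M_2'-M_2\\ M_1'-M_1 & 0\end{bmatrix}$, with the same three-case analysis going through verbatim).
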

	\begin{proof}
		The proof is similar to the Lemma 15 and we omit it here.
	\end{proof}
	
Based on Lemma 15 and Lemma 16, we insert the CDCs constructed by inverse bilateral multilevel construction into $\mathcal{C}_1 \cup \mathcal{C}_2 \cup \mathcal{C}_3$ by the double multilevel construction and bilateral multilevel construction, obtaining some CDCs with larger size than the known codes.
	
	\begin{theo}\label{theo2} For any positive integer $\delta$, it holds that
		$$
		\begin{aligned}
			\bar{A}_q(6 \delta, 2 \delta, 3 \delta)
			\geq q^{6 \delta^2+3 \delta}+1+\sum_{i=\delta}^{2 \delta} a(q, 3 \delta, 3 \delta, \delta, i)+q^{2 \delta^2+4 \delta+\delta\lceil\frac{\delta}{2}\rceil}+q^{\delta^2+5 \delta} 
			+q^{5\delta+\delta\lceil\frac{\delta}{2}\rceil}+q^{4 \delta+\delta\lceil\frac{\delta}{2}\rceil}.
		\end{aligned}
		$$
	\end{theo}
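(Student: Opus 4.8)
The plan is to realise the bound by exhibiting one explicit $(6\delta,2\delta,3\delta)_q$-CDC of the form $\mathcal{C}_1\cup\mathcal{C}_2\cup\mathcal{C}_3\cup\mathcal{C}_4\cup\mathcal{C}_5$ that contains a lifted MRD code, and then adding up the five contributions. Here $\mathcal{C}_1$ is the lifted $[3\delta\times 3\delta,\delta]_q$-MRD code attached to the identifying vector $(1^{3\delta}0^{3\delta})$, contributing $q^{6\delta^2+3\delta}$ by Lemma~\ref{lem2} (note $n-k=k=3\delta$); $\mathcal{C}_2$ is the parallel family attached to $(0^{3\delta}1^{3\delta})$ whose free block is filled by a $(3\delta\times 3\delta,\ast,\delta,[0,2\delta])_q$-GRMC, which by Lemmas~\ref{lem3} and~\ref{lem4} contributes $1+\sum_{i=\delta}^{2\delta}a(q,3\delta,3\delta,\delta,i)$ (the summand $1$ being the zero matrix, since an MRD code of distance $\delta$ has no matrices of rank $1,\dots,\delta-1$). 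The remaining four terms come from one bilateral-multilevel family $\mathcal{C}_3$ (Lemma~\ref{lem12}) and the two inserting families $\mathcal{C}_4,\mathcal{C}_5$ produced by Lemmas~\ref{lem15} and~\ref{lem16}, with the splitting $k=\delta_1+\delta_2=3\delta$ and a threshold pair $b_1,b_2$ obeying $b_1+b_2\ge\delta$ and $1<b_i<\delta$.

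The within-family distances are essentially free: $\mathcal{C}_1\cup\mathcal{C}_2$ has minimum distance $2\delta$ by the parallel construction (Lemma~\ref{lem5}), $\mathcal{C}_3$ by Lemma~\ref{lem12}, and $\mathcal{C}_4,\mathcal{C}_5$ are each already $(6\delta,\ast,2\delta,3\delta)_q$-CDCs by Lemmas~\ref{lem15} and~\ref{lem16}; in each case the reduction to twice the rank distance of the underlying (bilateral/inverse-bilateral) Ferrers-diagram code is supplied by Lemmas~\ref{lem8},~\ref{lem9},~\ref{lem11} and Corollary~\ref{coro1}. The sizes are then read off from the optimal FDRMC construction of Lemma~\ref{lem7} (filling the rightmost columns with a subcode of an MRD code) together with the rank distribution of Lemma~\ref{lem3}; choosing the pivot blocks so that the free triangular diagrams carry $\delta\lceil\frac{\delta}{2}\rceil$ dots is what produces the exponents $2\delta^2+4\delta+\delta\lceil\frac{\delta}{2}\rceil$, $5\delta+\delta\lceil\frac{\delta}{2}\rceil$ and $4\delta+\delta\lceil\frac{\delta}{2}\rceil$, while a full (rectangular) middle block gives the clean exponent $\delta^2+5\delta$.

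The substantive part is the cross-family distance, i.e. every pair drawn from two distinct $\mathcal{C}_j$. When the two codewords carry (inverse/bilateral) identifying vectors of the same type, their subspace distance dominates the Hamming distance of the vectors by Lemmas~\ref{lem8},~\ref{lem9},~\ref{lem11} and Corollary~\ref{coro1}, so I would arrange the vector set of each family, and the vectors across families of the same type, to keep pairwise Hamming distance at least $2\delta$. The genuinely new work is the mixed-type pairs, above all those straddling the parallel block $\mathcal{C}_1\cup\mathcal{C}_2$ and the inserted bilateral/inverse-bilateral blocks $\mathcal{C}_3,\mathcal{C}_4,\mathcal{C}_5$. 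For these I would imitate the block row-reduction used in the proofs of Lemmas~\ref{lem15} and~\ref{lem16}: stack the two generator matrices, subtract the common pivot blocks $I_{\delta_1}$ and $\hat I_{\delta_2}$, and read off a residual whose surviving diagonal blocks are differences of the constituent rank-metric codewords, whence $\tfrac12 d_S$ is at least $b_1+b_2\ge\delta$ (or at least $\delta$ from a single $\mathcal{M}_3$-block). The interaction with $\mathcal{C}_1\cup\mathcal{C}_2$ is then governed by the double-multilevel compatibility condition $d_H(\hat v,v)\ge 2(s_{\hat v}+\delta)$ of Lemma~\ref{lem10}, which I would verify directly for the chosen vectors.

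I expect the main obstacle to be exactly this cross-family bookkeeping: the pivot positions of the bilateral and inverse-bilateral vectors must be chosen so that every mixed pair, after the reduction above, either exposes a full $I_{\delta_1}$ or $\hat I_{\delta_2}$ pivot block (which forces large rank automatically) or leaves a residual controlled by the $b_1+b_2\ge\delta$ split, while simultaneously keeping all same-type Hamming distances at $2\delta$, keeping the five families disjoint, and keeping the splitting $1<b_i<\delta$ feasible for all admissible $\delta$ (small $\delta$ requiring the degenerate single-block version of Lemmas~\ref{lem15} and~\ref{lem16}). Once every mixed pair is checked, $\mathcal{C}_1\cup\cdots\cup\mathcal{C}_5$ is a $(6\delta,2\delta,3\delta)_q$-CDC containing the lifted MRD code $\mathcal{C}_1$, and summing the contributions computed above yields the stated lower bound on $\bar A_q(6\delta,2\delta,3\delta)$.
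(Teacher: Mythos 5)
Your proposal follows essentially the same route as the paper's proof: the identical five-family decomposition (the lifted MRD code on $(1^{3\delta}0^{3\delta})$, the rank-restricted family on $(0^{3\delta}1^{3\delta})$, a bilateral multilevel family from Lemma 12 / \cite{30}, and the two inserted inverse-bilateral families built from Lemmas 15 and 16 with the split $b_1=\lceil\frac{\delta}{2}\rceil$, $b_2=\lfloor\frac{\delta}{2}\rfloor$), with every cross-family pair handled by exactly the stack-and-row-reduce rank argument you describe. The only inaccuracy is in the size bookkeeping: the exponents $5\delta+\delta\lceil\frac{\delta}{2}\rceil$ and $4\delta+\delta\lceil\frac{\delta}{2}\rceil$ do not arise from dot counts of triangular Ferrers diagrams via Lemma 7, but from the subcode-construction multiplier $s=q^{\delta\lceil\frac{\delta}{2}\rceil}$ of Lemma 14 times the sizes of the three constituent MRD blocks, namely $s\cdot q^{2\delta}\cdot q^{2\delta}\cdot q^{\delta}$ and $s\cdot q^{2\delta}\cdot q^{\delta}\cdot q^{\delta}$.
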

	
	\begin{proof} We choose an identifying vector of length $6 \delta$ as follows: $v=(\underbrace{1 \cdots 1}_{3 \delta} \underbrace{0 \cdots 0}_{3 \delta})$
		and inverse identifying vector of length $6 \delta$ as follows: $\hat{v}=(\underbrace{0 \cdots 0}_{3 \delta} \underbrace{1 \cdots 1}_{3 \delta}).$ We also choose bilateral identifying vectors of length $6 \delta$ as follows:
		$$
		\begin{gathered}
			\bar{v}_1=(\underbrace{1 \cdots 1}_{2 \delta} \underbrace{0 \cdots 0}_\delta \underbrace{\bar{0} \cdots \bar{0}}_\delta \underbrace{\hat{0} \cdots \hat{0}}_\delta \underbrace{\hat{1} \cdots \hat{1}}_\delta),
		\end{gathered}
		$$
		
		$$
		\begin{gathered}
			\bar{v}_2=(\underbrace{1 \cdots 1}_\delta \underbrace{0 \cdots 0}_\delta \underbrace{1 \cdots 1}_\delta \underbrace{\bar{0} \cdots \bar{0}}_\delta \underbrace{\hat{0} \cdots \hat{0}}_\delta \underbrace{\hat{1} \cdots \hat{1}}_\delta),
		\end{gathered}
		$$
		and inverse bilateral identifying vectors of length $6 \delta$ as follows:
		$$
		\begin{aligned}
			& \hat{\bar{v}}_1=(\underbrace{\tilde{1} \cdots \tilde{1}}_{2 \delta} \underbrace{\tilde{0} \cdots \tilde{0}}_\delta \underbrace{\bar{\tilde{0}} \cdots \bar{\tilde{0}}}_\delta \underbrace{\hat{\tilde{1}} \cdots \hat{\tilde{1}}}_\delta \underbrace{\hat{\tilde{0}} \cdots \hat{\tilde{0}}}_\delta), 
		\end{aligned}
		$$
		$$
		\begin{aligned}
			& \hat{\bar{v}}_2=(\underbrace{\tilde{0} \cdots \tilde{0}}_\delta \underbrace{\tilde{1} \cdots \tilde{1}}_{2\delta} \underbrace{\bar{\tilde{0}} \cdots \bar{\tilde{0}}}_\delta  \underbrace{\hat{\tilde{0}} \cdots \hat{\tilde{0}}}_\delta \underbrace{\hat{\tilde{1}} \cdots \hat{\tilde{1}}}_\delta) . 
			&
		\end{aligned}
		$$
		
		Let $\mathcal{C}_1$ be a $(6\delta, 2\delta, 3\delta)_q$-CDC with identifying vector $v$, $\mathcal{C}_2$ be a $(6\delta, 2\delta, 3\delta)_q$-CDC with inverse identifying vector $\hat{v}$ and $\mathcal{C}_3$ be a $(6\delta, 2\delta, 3\delta)_q$-CDC with  bilateral identifying vectors $\bar{v}_1$ and $\bar{v}_2$. By \cite{30} we have
		$\mathcal{C}_1\cup\mathcal{C}_2\cup\mathcal{C}_3$ is a $(6\delta, 2\delta, 3\delta)_q$-CDC, where 
		$|\mathcal{C}_1\cup\mathcal{C}_2\cup\mathcal{C}_3|\geq q^{6 \delta^2+3 \delta}+1+\sum_{i=\delta}^{2 \delta} a(q, 3 \delta, 3 \delta, \delta, i)+q^{2 \delta^2+4 \delta+\delta\lceil\frac{\delta}{2}\rceil}+q^{\delta^2+5 \delta}$.
		
		The inverse bilateral echelon Ferrers form of $\hat{\bar{v}}_1$ is
		$$
		\widehat{\overline{EF}}\left(\hat{\bar{v}}_1\right)=\left[\begin{array}{ccccc}
			0 & 0 & \mathcal{F}_2 & \hat{I}_\delta & 0\\
			I_{2 \delta} & \mathcal{F}_1 & \mathcal{F}_3 & 0 & 0 
		\end{array}\right],
		$$
		where $\mathcal{F}_1$, $\mathcal{F}_3$ are $2 \delta \times \delta$ full Ferrers diagrams and $\mathcal{F}_2$ is a $\delta \times \delta$ full Ferrers diagram. By Lemma 15, we assume that $n_1=3 \delta$, $n_2=2 \delta$, $n_3=\delta$, $\delta_1=2 \delta$, $\delta_2=\delta$, $b_1=\lceil\frac{\delta}{2}\rceil$, $b_2=\lfloor\frac{\delta}{2}\rfloor$ and
		$$
		\widehat{\overline{\mathcal{C}}}_r=\left\{\mathrm{ rs }\left[\begin{array}{ccccc}
			0 & 0 & {M}_2 & \hat{I}_\delta & 0\\
			I_{2 \delta} & {M}_1 & {M}_3 & 0 & 0 
		\end{array}\right]: {M}_1 \in \mathcal{M}_1^r, M_2 \in \mathcal{M}_2^r, M_3 \in \mathcal{M}_3\right\}, 
		$$
		where $\mathcal{M}_1^r$ is a $[2 \delta \times \delta, \delta]_q$-MRD subcode of a $\left[2 \delta \times \delta, b_1\right]_q$-MRD code,  $\mathcal{M}_2^r$ is a $[\delta \times \delta, \delta]_q$-MRD subcode of a $\left[\delta \times \delta, b_2\right]_q$-MRD code for all $1 \leq r \leq s=$ $q^{\delta\lceil\frac{\delta}{2}\rceil}$ and $\mathcal{M}_3$ is a $[2 \delta \times \delta, \delta]_q$-MRD code. It can be seen from Lemma 14 that $\mathcal{M}_1^r$ and $\mathcal{M}_2^r$ satisfy the conditions in Lemma 15. Then by Lemma 15, we have $|\mathcal{C}_4|=|\bigcup_{r=1}^s \widehat{\overline{\mathcal{C}}}_r|=s \cdot q^{2 \delta} \cdot q^{2 \delta} \cdot q^\delta=q^{5 \delta+\delta\lceil\frac{\delta}{2}\rceil}$.

		The inverse bilateral echelon Ferrers form of $\hat{\bar{v}}_2$ is given by
		$$
		\widehat{\overline{EF}}\left(\hat{\bar{v}}_2\right)=\left[\begin{array}{cccccc}
			0 & 0 & \mathcal{F}_3 & \mathcal{F}_2 & \hat{I}_\delta\\
			0 &I_{2 \delta} & \mathcal{F}_1 & 0 & 0 
		\end{array}\right],
		$$
		where $\mathcal{F}_1$ is a $2\delta \times \delta$ full Ferrers diagram and $\mathcal{F}_2$, $\mathcal{F}_3$ are $ \delta \times \delta$ full Ferrers diagrams. By Lemma 16, we assume that $n_1=3 \delta$, $n_2=2 \delta$, $n_3=\delta$, $\delta_1=2\delta$, $\delta_2=\delta$, $b_1=\lceil\frac{\delta}{2}\rceil$, $b_2=$ $\lfloor\frac{\delta}{2}\rfloor$ and
		
		$$\widehat{\overline{\mathcal{C}}}_u=\left\{\mathrm{rs} \left[\begin{array}{cccccc}
			0 &0 & M_3 & M_2 & \hat{I}_{\delta}\\
			0& I_{2\delta} & M_1 & 0& 0
		\end{array}\right]: {M}_1 \in \mathcal{M}_1^u, {M}_2 \in \mathcal{M}_2^u, M_3 \in \mathcal{M}_3\right\},
		$$where $\mathcal{M}_1^u$ is a $[2\delta \times \delta, \delta]_q$-MRD subcode of a $\left[2\delta \times \delta, b_1\right]_q$-MRD code, $\mathcal{M}_2^u$ is a $[\delta \times \delta, \delta]_q$-MRD subcode of a $\left[\delta \times \delta, b_2\right]_q$-MRD code for all $1 \leq u \leq$ $s=q^{\delta\lceil\frac{\delta}{2}\rceil}$ and $\mathcal{M}_3$ is a $[\delta \times \delta, \delta]_q$-MRD code. It can be seen from Lemma 14 that $\mathcal{M}_1^u$ and $\mathcal{M}_2^u$ satisfy the conditions in Lemma 16. Then by Lemma 16, we have $|\mathcal{C}_5|=|\bigcup_{u=1}^s \widehat{\overline{\mathcal{C}}}_u|=s \cdot q^{2 \delta} \cdot q^\delta \cdot q^\delta=q^{4 \delta+\delta\lceil\frac{\delta}{2}\rceil}$. 
		
		Let $\mathcal{C}=\bigcup_{i=1}^5 \mathcal{C}_i$, then 
		it is sufficient to examine the subspace distance of $\mathcal{C}$. It is easy to see that $\mathcal{C}_4\cup \mathcal{C}_5$ is a $(6\delta,q^{5 \delta+\delta\lceil\frac{\delta}{2}\rceil}+q^{4 \delta+\delta\lceil\frac{\delta}{2}\rceil}, 2\delta, 3\delta)_q$-CDC constructed by inverse bilateral multilevel construction. We just need to verify $d_S(\mathcal{C}_i, \mathcal{C}_4) \geq 2\delta $ and $d_S(\mathcal{C}_i, \mathcal{C}_5) \geq 2\delta $ for $i=1, 2, 3$.
		
		For any $$\left\{\mathrm{ rs }\left[\begin{array}{ccccc}
			I_{2\delta} & 0 & {A}_{11} & A_{12} & A_{13}\\
			0 & I_\delta & A_{21} & A_{22} & A_{23}
		\end{array}\right] \right\} \in \mathcal{C}_1\  \mathrm{and} \ 
		\left\{\mathrm{ rs }\left[\begin{array}{ccccc}
			0 & 0 & M_2 & \hat{I}_\delta& 0\\
			I_{2\delta} & M_1& M_3 & 0 & 0
		\end{array}\right] \right\} \in \mathcal{C}_4, $$ 
		where $ \left[\begin{array}{ccc}
			 {A}_{11} & A_{12} & A_{13}\\
			 A_{21} & A_{22} & A_{23}
		\end{array}\right]$ is a codeword of a $[3\delta \times 3\delta, \delta]_q$-MRD code,
		then we have
		
		\begin{align*}
			d_S(\mathcal{C}_1, \mathcal{C}_4)
			&=2\mathrm{rank}\left[\begin{array}{ccccc}
				I_{2\delta} & 0 & {A}_{11} & A_{12} & A_{13}\\
				0 & I_\delta & A_{21} & A_{22} & A_{23}\\
				0 & 0 & M_2 & \hat{I}_\delta& 0\\
				I_{2\delta} & M_1& M_3 & 0 & 0
			\end{array}\right]-6\delta\\
			& =2\mathrm{rank}\left[\begin{array}{ccccc}
				I_{2\delta} & 0 & 0 & 0 & 0\\
				0 & I_\delta & A_{21} & A_{22} & A_{23}\\
				0 & 0 & M_2 & \hat{I}_\delta& 0\\
				0& M_1& M_3-A_{11} & -A_{12} & -A_{13}
			\end{array}\right]-6\delta \\
			&=2\mathrm{rank}
			\left[\begin{array}{ccccc}
				I_{2\delta} & 0 & 0 & 0 & 0\\
				0 & I_\delta & A_{22} & A_{21} & A_{23}\\
				0& M_1& -A_{12}& M_3-A_{11}  & -A_{13}\\
				0 & 0 & \hat{I}_\delta& M_2 & 0
			\end{array}\right]-6\delta \geq 2\delta.
		\end{align*} 
		
		Similarly, it is easy to see that $d_S(\mathcal{C}_i, \mathcal{C}_4) \geq 2\delta$, $i=2, 3$ and $d_S(\mathcal{C}_i, \mathcal{C}_5) \geq 2\delta$ for $i=1, 2, 3.$ Then
		$\mathcal{C}$ is a $(6 \delta, M, 2 \delta, 3 \delta)_q$-CDC, where $$M=q^{6 \delta^2+3 \delta}+1+\sum_{i=\delta}^{2 \delta} a(q, 3 \delta, 3 \delta, \delta, i)+q^{2 \delta^2+4 \delta+\delta\lceil\frac{\delta}{2}\rceil}+q^{\delta^2+5 \delta}+q^{5 \delta+\delta\lceil\frac{\delta}{2}\rceil}+q^{4 \delta+\delta\lceil\frac{\delta}{2}\rceil}.$$
	\end{proof}
	
	\begin{coro}\label{coro2} Set $\delta=3$ in Theorem 2, we have
		$$
		\bar{A}_q(18,6,9) \geq q^{63}+\sum_{i=3}^6 a(q, 9,9,3, i)+q^{36}+q^{24}+q^{21}+q^{18}+1,
		$$
		which is larger than the lower bound of $q^{63}+\sum_{i=3}^6 a(q, 9,9,3, i)+q^{36}+q^{24}+1$ (see \cite{30}).
	\end{coro}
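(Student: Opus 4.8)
The plan is to derive the statement as a direct specialization of Theorem 2, taking $\delta = 3$. Under this substitution the parameter triple $(6\delta, 2\delta, 3\delta)$ becomes $(18, 6, 9)$, so the left-hand side of the bound in Theorem 2 is exactly $\bar{A}_q(18, 6, 9)$, and it remains only to evaluate the seven summands on the right.

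The one arithmetic input that is not completely mechanical is the ceiling term: for $\delta = 3$ we have $\lceil \tfrac{\delta}{2} \rceil = 2$, hence $\delta \lceil \tfrac{\delta}{2} \rceil = 6$. With this in hand I would evaluate each exponent: $6\delta^2 + 3\delta = 63$, then $2\delta^2 + 4\delta + \delta\lceil \tfrac{\delta}{2}\rceil = 18 + 12 + 6 = 36$, then $\delta^2 + 5\delta = 24$, then $5\delta + \delta\lceil \tfrac{\delta}{2}\rceil = 15 + 6 = 21$, and finally $4\delta + \delta\lceil \tfrac{\delta}{2}\rceil = 12 + 6 = 18$. The summation range $\delta \le i \le 2\delta$ becomes $3 \le i \le 6$ and the rank-distribution terms become $a(q, 9, 9, 3, i)$. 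Assembling these values reproduces precisely the claimed bound $q^{63} + \sum_{i=3}^{6} a(q, 9, 9, 3, i) + q^{36} + q^{24} + q^{21} + q^{18} + 1$.

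For the comparison with the literature, I would note that the first five summands $q^{63} + \sum_{i=3}^{6} a(q, 9, 9, 3, i) + q^{36} + q^{24} + 1$ are exactly the lower bound recorded in \cite{30}, while the two extra summands $q^{21} + q^{18}$ come from the codes $\mathcal{C}_4$ and $\mathcal{C}_5$ furnished by the inverse bilateral multilevel construction inside the proof of Theorem 2. Since $q^{21} + q^{18} > 0$, the new bound strictly improves the old one. Because the whole argument is a verbatim specialization of Theorem 2, there is no real obstacle; the only place demanding attention is the correct evaluation of the ceiling function, which governs the exponents $36$, $21$, and $18$ and is therefore the natural spot for an off-by-one slip.
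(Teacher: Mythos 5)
Your proposal is correct and follows exactly the route the paper intends: Corollary 2 is a direct specialization of Theorem 2 at $\delta=3$, and your evaluation of the exponents (in particular $\lceil 3/2\rceil = 2$, giving $36$, $21$, and $18$) matches the stated bound, with the extra terms $q^{21}+q^{18}$ coming from the codes $\mathcal{C}_4$ and $\mathcal{C}_5$ of the inverse bilateral multilevel construction, on top of the bound $q^{63}+\sum_{i=3}^{6}a(q,9,9,3,i)+q^{36}+q^{24}+1$ from the bilateral multilevel construction of \cite{30}. Nothing further is needed.
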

	
	Applying Theorem 2 with Corollary 2, we improved the lower bounds of the size for some CDCs by inserting the inverse bilateral multilevel construction into the double multilevel construction and bilateral multilevel construction. Below, by introducing a new set of bilateral identifying vectors, we aim to further improve the lower bounds for CDCs with other parameters.
	
	\begin{theo}\label{theo3} Let $n \geq 2 k \geq 2 \delta$. Suppose that $\mathcal{C}_1$ and $\mathcal{C}_2$ are CDCs in Lemma 10 with the identifying vector set $\mathcal{A}$ and inverse identifying vector set $\hat{\mathcal{A}}$, respectively. Assume that
		$\mathcal{C}_3$ is a CDC in Lemma 12 with the bialteral identifying vector set $\mathcal{B}$ and $\mathcal{C}_6$ is an $\left(n, |\mathcal{C}_6|, 2 \delta, k\right)_q$-CDC
		 constructed by the bilateral multilevel construction with bilateral identifying vector set $\mathcal{B}_1$. If the identifying vector $u \in \mathcal{A}$,  inverse identifying vector $\hat{u} \in \hat{\mathcal{A}}$,
		 bilateral identifying vector $\bar{v}=$ $(\overbrace{v_1}^{n_1}|\overbrace{\bar{v}_3}^{n-n_1-n_2}| \overbrace{\hat{v}_2}^{n_2}) \in \mathcal{B}$
		and bilateral identifying vector ${\bar{v}^{\prime}}=$ $(\overbrace{v_1^{\prime}}^{n_1}|\overbrace{\bar{v}_3^{\prime}}^{n-n_1-n_2}| \overbrace{\hat{v}_2^{\prime}}^{n_2})\in \mathcal{B}_1$ satisfy the following conditions:\\
		(1) the inequality in the double multilevel construction holds;\\
		(2) $d_H\left(v_1, u_1\right)+\left|w t\left(v_1\right)-w t\left(u_1\right)\right| \geq 2 \delta$ and
		$d_H\left(\hat{v}_2, \hat{u}_2\right)+\left|w t\left(\hat{v}_2\right)-w t\left(\hat{u}_2\right)\right| \geq 2 \delta$;\\
		(3) $d_H\left(v^{\prime}_1, u_1\right)+\left|w t\left(v^{\prime}_1\right)-w t\left(u_1\right)\right| \geq 2 \delta$ and
		$d_H\left(\hat{v}^{\prime}_2, \hat{u}_2\right)+\left|w t\left(\hat{v}^{\prime}_2\right)-w t\left(\hat{u}_2\right)\right| \geq 2 \delta$;\\
		(4) $d_H\left(v_1, v_1^{\prime}\right)+d_H\left(\hat{v}_2, \hat{v}_2^{\prime}\right) \geq 2 \delta$,\\ where $u_1$ is a subvector consisting of the first $n_1$ coordinates of $u$ and $\hat{u}_2$ is a subvector consisting of the last $n_2$ coordinates of $\hat{u}$, then $\mathcal{C}_1 \cup \mathcal{C}_2 \cup \mathcal{C}_3 \cup \mathcal{C}_6$ is an $\left(n, M_1+M_2+|\mathcal{C}_3|+|\mathcal{C}_6|, 2 \delta, k\right)_q$-CDC.
	\end{theo}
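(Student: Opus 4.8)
The plan is to treat the four families as already-built constant dimension codes and to reduce the statement to the six cross-distance inequalities between them. By Lemma 10 (under condition (1)) the union $\mathcal{C}_1\cup\mathcal{C}_2$ has minimum distance $2\delta$, while $\mathcal{C}_3$ and $\mathcal{C}_6$ are $(n,2\delta,k)_q$-CDCs by Lemma 12. Since every codeword has dimension $k$ and $d_S(\mathcal{U},\mathcal{V})=2\dim(\mathcal{U}+\mathcal{V})-2k=2\operatorname{rank}\left[\begin{smallmatrix}U\\ V\end{smallmatrix}\right]-2k$, it remains to show $d_S(\mathcal{U},\mathcal{V})\geq 2\delta$ whenever $\mathcal{U},\mathcal{V}$ come from two distinct families: the pairs $(\mathcal{C}_1,\mathcal{C}_3),(\mathcal{C}_2,\mathcal{C}_3)$ are governed by condition (2), the pairs $(\mathcal{C}_1,\mathcal{C}_6),(\mathcal{C}_2,\mathcal{C}_6)$ by condition (3), and $(\mathcal{C}_3,\mathcal{C}_6)$ by condition (4). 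Because each such inequality forces $\mathcal{U}\neq\mathcal{V}$, the four families are automatically disjoint and the cardinality is exactly $M_1+M_2+|\mathcal{C}_3|+|\mathcal{C}_6|$.

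The tool I would use for every cross-distance is the leading-position principle: in the row space of an RREF matrix the leftmost nonzero coordinate of any nonzero vector sits at a pivot column, and dually for RRIEF and the rightmost coordinate. Consider $\mathcal{U}\in\mathcal{C}_1$ (in RREF with identifying vector $u$) and $\mathcal{V}\in\mathcal{C}_3$ (in RRBEF with bilateral identifying vector $\bar v=(v_1\mid\bar v_3\mid\hat v_2)$), and set $a=\#\{j\leq n_1:v_1[j]=1,u_1[j]=0\}$, $b=\#\{j\leq n_1:u_1[j]=1,v_1[j]=0\}$. The top block $T$ of the generator of $\mathcal{V}$ is itself in RREF with pivot set $v_1$, so any vector of $\mathcal{U}\cap T$ has leftmost coordinate in $v_1\cap u_1$; an echelon basis then gives $\dim(\mathcal{U}\cap T)\leq|v_1\cap u_1|$, hence $\dim(\mathcal{U}+\mathcal{V})-k\geq\dim(\mathcal{U}+T)-k\geq\dim T-|v_1\cap u_1|=a$. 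Symmetrically, the span of the $b$ rows of $U$ whose pivots lie in $\{j\leq n_1:u_1[j]=1,v_1[j]=0\}$ meets $\mathcal{V}$ trivially (such a vector would have a leftmost coordinate at a non-pivot of $v_1$), giving $\dim(\mathcal{U}+\mathcal{V})-k\geq b$. Combining, $d_S(\mathcal{U},\mathcal{V})\geq 2\max\{a,b\}=d_H(v_1,u_1)+|\operatorname{wt}(v_1)-\operatorname{wt}(u_1)|\geq 2\delta$ by condition (2).

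The remaining pairs are variations on this. For $(\mathcal{C}_2,\mathcal{C}_3)$ the codeword of $\mathcal{C}_2$ is in RRIEF, so I would run the mirror argument on the last $n_2$ coordinates using rightmost positions and the bottom block, controlled by $d_H(\hat v_2,\hat u_2)+|\operatorname{wt}(\hat v_2)-\operatorname{wt}(\hat u_2)|$; the pairs with $\mathcal{C}_6$ are identical with $\bar v'$ replacing $\bar v$ under condition (3). For $(\mathcal{C}_3,\mathcal{C}_6)$, both generators are in RRBEF of the same type, and I would exhibit inside one of them the subspace spanned by the top rows with pivots in $\{v_1'=1,v_1=0\}$ together with the bottom rows with pivots in $\{\hat v_2'=1,\hat v_2=0\}$; the leading-position principle applied at both ends shows this subspace meets the other codeword trivially, so $\dim(\mathcal{U}+\mathcal{V})-k\geq b_1+b_2$, and symmetrically $\geq a_1+a_2$, where $a_i,b_i$ are the asymmetric-coordinate counts on the two ends. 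Since $\max\{x,y\}\geq(x+y)/2$, this yields $d_S\geq d_H(v_1,v_1')+d_H(\hat v_2,\hat v_2')\geq 2\delta$ by condition (4).

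The main obstacle is precisely these mixed-form estimates: Lemmas 8, 9, 11 and Corollary 1 compare only two matrices presented in the same reduced form, whereas here an RREF (or RRIEF) matrix must be measured against an RRBEF matrix, and two RRBEF matrices whose dot patterns need not coincide globally. The resolution rests on two elementary points that must be checked with care, namely the leading-position principle for the RREF top block and the RRIEF bottom block of an RRBEF generator, and the identity $d_H(x,y)+|\operatorname{wt}(x)-\operatorname{wt}(y)|=2\max\{a,b\}$ for binary vectors $x,y$ with one-sided coordinate counts $a,b$. Once these are in place, conditions (2)--(4) are exactly the inequalities needed to push $\dim(\mathcal{U}+\mathcal{V})$ up to at least $k+\delta$, and condition (1) supplies the last pair through Lemma 10.
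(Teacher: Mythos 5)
Your proof is correct, and its skeleton is the same as the paper's: reduce everything to pairwise cross-distances, with condition (1) handling $\mathcal{C}_1$ versus $\mathcal{C}_2$ via Lemma 10, condition (2) the pairs with $\mathcal{C}_3$, condition (3) the pairs with $\mathcal{C}_6$, and condition (4) the pair $(\mathcal{C}_3,\mathcal{C}_6)$. Where you genuinely differ is in how those cross-distance inequalities are justified. The paper's proof is citation-based: it quotes \cite{30} wholesale for the statement that $\mathcal{C}_1\cup\mathcal{C}_2\cup\mathcal{C}_3$ is a CDC under (1)--(2), dismisses $(\mathcal{C}_1,\mathcal{C}_6)$ and $(\mathcal{C}_2,\mathcal{C}_6)$ with ``clearly, by condition (3)'', and settles $(\mathcal{C}_3,\mathcal{C}_6)$ by Lemma 11, which applies since both bilateral identifying vectors have the same type and gives $d_S(\mathcal{U},\mathcal{V})\geq d_H(\bar v,\bar v')=d_H(v_1,v_1')+d_H(\hat v_2,\hat v_2')\geq 2\delta$. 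You instead derive all of these estimates from first principles with the pivot (leading/trailing position) argument: the bound $\dim(\mathcal{U}\cap T)\leq |v_1\cap u_1|$ plus the complementary subspace $W$ with $W\cap\mathcal{V}=0$ yields $d_S\geq 2\max\{a,b\}=d_H(v_1,u_1)+\left|\operatorname{wt}(v_1)-\operatorname{wt}(u_1)\right|$, and the two-ended version of the same argument re-proves Lemma 11. This buys self-containedness at the cost of length: the mixed comparison of an RREF (or RRIEF) generator against an RRBEF generator, which the paper's ``clearly'' glosses over, is not covered by any of Lemmas 8, 9, 11 or Corollary 1 as stated (those all compare two matrices in the same reduced form), and your leading-position computation is exactly the missing content; it also explains structurally why conditions (2)--(3) take the form $d_H+\left|\operatorname{wt}-\operatorname{wt}\right|$, namely as $2\max\{a,b\}$. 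Conversely, the paper's route is far shorter and exhibits the theorem as a formal gluing of the known results of \cite{19} and \cite{30}.
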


	\begin{proof}
		
		It is easy to obtain that $\mathcal{C}_1 \cup \mathcal{C}_2 \cup \mathcal{C}_3$ is an $\left(n, M_1+M_2+|\mathcal{C}_3|, 2 \delta, k\right)_q$-CDC by \cite{30}. We only need to prove that $d_S(\mathcal{C}_i, \mathcal{C}_6)\geq 2\delta$ for $i=1,2,3$.
		
		Clearly, for any $\mathcal{U} \in \mathcal{C}_1$ and $\mathcal{V} \in \mathcal{C}_6$,  by condition (3) we have
		$
		d_S(\mathcal{U}, \mathcal{V}) \geq 2 \delta.
		$
		Similarly, for any $\mathcal{U} \in \mathcal{C}_2$ and $\mathcal{V} \in \mathcal{C}_6$, we have $d_S(\mathcal{U}, \mathcal{V}) \geq 2 \delta$.
		
For any $\mathcal{U} \in \mathcal{C}_3$ and $\mathcal{V} \in \mathcal{C}_6$, suppose the bilateral identifying 
vector of $\mathcal{U}$ is   $\bar{v}=(\overbrace{v_1}^{n_1}|\overbrace{\bar{v}_3}^{n-n_1-n_2}| \\ \overbrace{\hat{v}_2}^{n_2}) \in \mathcal{B}$ and bilateral identifying vector of $\mathcal{V}$ is ${\bar{v}^{\prime}}=(\overbrace{v_1^{\prime}}^{n_1}|\overbrace{\bar{v}_3^{\prime}}^{n-n_1-n_2}| \overbrace{\hat{v}_2^{\prime}}^{n_2}) \in {\mathcal{B}_1}$. It can be seen that they have the same type, then we have $d_S(\mathcal{U}, \mathcal{V}) \geq d_H(\bar{v}, {\bar{v}^{\prime}}) = d_H\left(v_1, v_1^{\prime}\right)+d_H\left(\hat{v}_2, \hat{v}_2^{\prime}\right) \geq 2 \delta$ by condition (4).
	\end{proof}
	In Theorem 3, the subspace distance only relies on the Hamming distance of identifying vectors, inverse identifying vectors and bilateral identifying vectors. Therefore, we have the following result.
	
	\begin{coro}\label{coro3}Let $n, k, \delta$ be positive integers such that $n \geq 2k+\delta$ with $\lfloor\frac{k}{2}\rfloor \geq$ $2\lceil\frac{\delta}{2}\rceil$ and $ t= min\{\lfloor\frac{\lceil\frac{k}{2}\rceil}{\lceil\frac{\delta}{2}\rceil}\rfloor, \lfloor\frac{\lfloor\frac{k}{2}\rfloor}{\lfloor\frac{\delta}{2}\rfloor}  \rfloor \}$.
		Then there exists an $(n, M^{\prime}, 2 \delta, k)_q$-CDC, where
		\begin{align*}
			&M' =  q^{(n-k)(k-\delta+1)} \frac{1-q^{-\lfloor\frac{k}{\delta}\rfloor \delta^2}}{1-q^{-\delta^2}}+q^{(n-k-\delta)(k-\delta+1)}+1+\sum_{i=\delta}^{k-\delta} a(q, k, n-k, \delta, i) \\
			&~~~~~~+t q^{(n-k-2 \delta+\lfloor\frac{\delta}{2}\rfloor)(\lceil\frac{k}{2}\rceil-\delta+1)}+
			(t-1)q^{(n-k-2\delta+\lfloor\frac{\delta}{2}\rfloor-\lceil\frac{\delta}{2}\rceil)(\lceil\frac{k}{2}\rceil-\delta+1)}.
		\end{align*}
		In particular, the CDC contains a lifted MRD code as a subcode.
	\end{coro}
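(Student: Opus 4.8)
The plan is to instantiate Theorem~\ref{theo3} with four explicit vector sets and then add up the four blocks of codewords. First I would let $\mathcal{C}_1$ be the multilevel code of Lemma~\ref{lem13} with its identifying vector set $\mathcal{A}$, so that $|\mathcal{C}_1|=M_1=q^{(n-k)(k-\delta+1)}\frac{1-q^{-\lfloor k/\delta\rfloor\delta^2}}{1-q^{-\delta^2}}+q^{(n-k-\delta)(k-\delta+1)}$; this already produces the lifted MRD subcode and hence settles the final assertion. For $\mathcal{C}_2$ I would take the inverse multilevel code of the double multilevel setup (Lemma~\ref{lem10}), realized from a GRMC with ranks in $\{0\}\cup[\delta,k-\delta]$: the rank-$0$ codeword contributes the summand $1$, and Lemma~\ref{lem4} bounds the remaining part by $\sum_{i=\delta}^{k-\delta}a(q,k,n-k,\delta,i)$, so that $|\mathcal{C}_2|=M_2=1+\sum_{i=\delta}^{k-\delta}a(q,k,n-k,\delta,i)$.

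For the two bilateral blocks I would construct $\mathcal{B}$ and $\mathcal{B}_1$ of a common type by a sliding-block scheme: each vector carries $\lceil k/2\rceil$ ones on the left governed by an echelon Ferrers form and $\lfloor k/2\rfloor$ ones on the right governed by an inverse echelon Ferrers form, and successive vectors are obtained by shifting the left block by $\lceil\delta/2\rceil$ and the right block by $\lfloor\delta/2\rfloor$. The hypothesis $\lfloor k/2\rfloor\geq2\lceil\delta/2\rceil$ together with $t=\min\{\lfloor\lceil k/2\rceil/\lceil\delta/2\rceil\rfloor,\lfloor\lfloor k/2\rfloor/\lfloor\delta/2\rfloor\rfloor\}$ guarantees that $t$ admissible shifts exist, yielding $|\mathcal{B}|=t$, while a half-step-offset family yields $|\mathcal{B}_1|=t-1$ vectors of the same type. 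For each $\bar v$ the bilateral echelon Ferrers form contains a full middle block whose dimension I would evaluate with Lemmas~\ref{lem6} and~\ref{lem7}; the placement is arranged so that this optimal BFDRMC dimension is uniform across each family, equal to $(\lceil k/2\rceil-\delta+1)(n-k-2\delta+\lfloor\delta/2\rfloor)$ on $\mathcal{B}$ and to the same expression with the column count reduced by $\lceil\delta/2\rceil$ on $\mathcal{B}_1$. Summing over the two families gives $|\mathcal{C}_3|=t\,q^{(n-k-2\delta+\lfloor\delta/2\rfloor)(\lceil k/2\rceil-\delta+1)}$ and $|\mathcal{C}_6|=(t-1)\,q^{(n-k-2\delta+\lfloor\delta/2\rfloor-\lceil\delta/2\rceil)(\lceil k/2\rceil-\delta+1)}$, whence the four sizes sum to $M'$.

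The substance of the proof lies in verifying the four hypotheses of Theorem~\ref{theo3}. Condition (1) is the double multilevel inequality between $\mathcal{A}$ and $\hat{\mathcal{A}}$, which holds as in \cite{19}. Conditions (2) and (3) restrict to the first $n_1$ and last $n_2$ coordinates and become inequalities $d_H(v_1,u_1)+|wt(v_1)-wt(u_1)|\geq2\delta$ (and the symmetric one on the right flank) relating the slid blocks of $\mathcal{B}$, respectively $\mathcal{B}_1$, to the prefixes and suffixes of the multilevel vectors; I would verify these by tracking the shift amounts, the step sizes $\lceil\delta/2\rceil$ and $\lfloor\delta/2\rfloor$ being chosen precisely so that the Hamming part and the weight-gap part together reach $2\delta$. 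Condition (4), $d_H(v_1,v_1')+d_H(\hat v_2,\hat v_2')\geq2\delta$, follows because a $\mathcal{B}$-vector and a $\mathcal{B}_1$-vector are offset by $\lceil\delta/2\rceil$ on the left and $\lfloor\delta/2\rfloor$ on the right, contributing Hamming distances $2\lceil\delta/2\rceil$ and $2\lfloor\delta/2\rfloor$ whose sum is $2\delta$. I would also confirm that $\mathcal{B}$ and $\mathcal{B}_1$ are each constant-weight codes of minimum Hamming distance $2\delta$, so that Lemma~\ref{lem12} applies to each.

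Granting these verifications, Theorem~\ref{theo3} yields an $(n,M_1+M_2+|\mathcal{C}_3|+|\mathcal{C}_6|,2\delta,k)_q$-CDC equal to the stated $M'$, with a lifted MRD subcode inherited from $\mathcal{C}_1$. The main obstacle I anticipate is the simultaneous fulfilment of conditions (2)--(4): one must fix the common type and the exact shift pattern so that every cross-distance among $\mathcal{A}$, $\hat{\mathcal{A}}$, $\mathcal{B}$ and $\mathcal{B}_1$ is at least $2\delta$ while keeping $|\mathcal{B}|=t$ and $|\mathcal{B}_1|=t-1$ as large as possible, and to handle the odd-$\delta$ case where $\lceil\delta/2\rceil\neq\lfloor\delta/2\rfloor$ so that the asymmetric step sizes still make each inequality tight at $2\delta$.
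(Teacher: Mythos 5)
Your proposal matches the paper's proof in all essentials: the same four blocks ($\mathcal{C}_1$ from Lemma~13, $\mathcal{C}_2$ as a lifted GRMC of size $1+\sum_{i=\delta}^{k-\delta}a(q,k,n-k,\delta,i)$, a family $\mathcal{B}$ of $t$ bilateral identifying vectors as in \cite{30}, and a shifted family $\mathcal{B}_1$ of $t-1$ vectors of the same type whose middle Ferrers block is filled with an MRD code), glued by Theorem~3 with exactly the weight-gap arguments for conditions (2)--(3) (using $\lfloor k/2\rfloor\geq 2\lceil\delta/2\rceil$) and the $2\lceil\delta/2\rceil+2\lfloor\delta/2\rfloor=2\delta$ offset argument for condition (4). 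The paper simply writes out the shifted vector families and the resulting bilateral echelon Ferrers form explicitly, which is what your sketch defers to verification.
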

	
	\begin{proof}
		Let $\mathcal{C}_1$ be an $\left(n, M_1, 2 \delta, k\right)_q$-CDC constructed via the multilevel construction, whose set of identifying vectors is below:
		$$
		\mathcal{A}=\{(\underbrace{1 \cdots 1}_k 0 \cdots 0)\}\cup\{(\underbrace{1 \cdots 1}_{k-i \delta} \underbrace{0 \cdots 0}_\delta \underbrace{1 \cdots 1}_{i \delta} 0 \cdots 0) \mid 1 \leq i \leq\lfloor\frac{k}{\delta}\rfloor\} .
		$$
		Then $M_1=q^{(n-k)(k-\delta+1)} \frac{1-q^{-\lfloor\frac{k}{\delta}\rfloor \delta^2}}{1-q^{-\delta^2}}+q^{(n-k-\delta)(k-\delta+1)}$ by Lemma 13.
		
		Assume $\mathcal{C}_2=\left\{\operatorname{rs}\left(A \mid I_k\right): A \in \mathcal{M}, \operatorname{rank}(A) \leq k-\delta\right\}$, where $\mathcal{M}$ is a $[k \times(n-$ $k), \delta]_q$-MRD code. By Lemma 5 and Lemma 3, $\mathcal{C}_2$ is an $\left(n, M_2, 2 \delta, k\right)_q$-CDC, where $$M_2 \geq A_q^G(k \times(n-k), \delta,[0, k-\delta]) \geq 1+\sum_{i=\delta}^{k-\delta} a(q, k, n-k, \delta, i).$$
		
		Suppose $\mathcal{C}_3$ is an $\left(n, |\mathcal{C}_3|, 2 \delta, k\right)_q$-CDC constructed via the bilateral multilevel construction, whose set of bilateral identifying vectors is below:
		\begin{align*}
			\mathcal{B}=\{\bar{v}_j=(\overbrace{\underbrace{1\ldots1}_{\lceil\frac{k}{2}\rceil-j\lceil\frac{\delta}{2}\rceil} \underbrace{0\ldots0}_{\lceil\frac{\delta}{2}\rceil}\underbrace{1\ldots1}_{j\lceil\frac{\delta}{2}\rceil}0\ldots0}^{k+\delta}\bar{0}\ldots\bar{0}
			\underbrace{\hat{0}\ldots\hat{0}}_{\lceil\frac{\delta}{2}\rceil} \underbrace{\hat{1}\ldots\hat{1}}_{j\lfloor\frac{\delta}{2}\rfloor} \underbrace{\hat{0}\ldots\hat{0}}_{\lfloor\frac{\delta}{2}\rfloor} \underbrace{\hat{1}\ldots\hat{1}}_{\lfloor\frac{k}{2}\rfloor-j\lfloor\frac{\delta}{2}\rfloor}):0\leq j< t\}.
		\end{align*}
		According to Corollary 1 in \cite{30}, we have $|\mathcal{C}_3| \geq t q^{(n-k-2 \delta+\lfloor\frac{\delta}{2}\rfloor)( \lceil\frac{k}{2}\rceil-\delta+1)}$.
		
		Let $\mathcal{C}_6$ be an $\left(n, |\mathcal{C}_6|, 2 \delta, k\right)_q$-CDC constructed via the bilateral multilevel construction, whose set of bilateral identifying vectors is below:
		\begin{align*}
			\mathcal{B}_1=\{{\bar{v}'}_l=(\overbrace{\underbrace{0\ldots0}_{\lceil\frac{\delta}{2}\rceil}\underbrace{1\ldots1}_{\lceil\frac{k}{2}\rceil-l\lceil\frac{\delta}{2}\rceil} \underbrace{0\ldots0}_{\lceil\frac{\delta}{2}\rceil}\underbrace{1\ldots1}_{l\lceil\frac{\delta}{2}\rceil}0\ldots0}^{k+\delta}\bar{0}\ldots\bar{0}
			\underbrace{\hat{1}\ldots\hat{1}}_{l\lfloor\frac{\delta}{2}\rfloor} \underbrace{\hat{0}\ldots\hat{0}}_{\lfloor\frac{\delta}{2}\rfloor} \underbrace{\hat{1}\ldots\hat{1}}_{\lfloor\frac{k}{2}\rfloor-l\lfloor\frac{\delta}{2}\rfloor}\underbrace{\hat{0}\ldots\hat{0}}_{\lceil\frac{\delta}{2}\rceil} ):1\leq l< t\}.
		\end{align*}

		For any bilateral identifying vector ${\bar{v}'}_l$, its bilateral echelon Ferrers form is
		$$
		\left[\begin{array}{cccccccccc}
			0 & I_{\lceil\frac{k}{2}\rceil-l\lceil\frac{\delta}{2}\rceil} & \mathcal{F}_1 & 0 & \mathcal{F}_2 & \mathcal{F}_3 & 0 & 0 & 0 & 0 \\
			0 & 0 & 0 & I_{l\lceil\frac{\delta}{2}\rceil} & \mathcal{F}_4 & \mathcal{F}_5 & 0 & 0 & 0 & 0 \\
			0 & 0 & 0 & 0 & 0 & \mathcal{F}_6 & 0 & \mathcal{F}_7 & I_{\lfloor\frac{k}{2}\rfloor-l\lfloor\frac{\delta}{2}\rfloor} & 0 \\
			0 & 0 & 0 & 0 & 0 & \mathcal{F}_8 & I_{l\lfloor\frac{\delta}{2}\rfloor} & 0 & 0 & 0
		\end{array}\right],
		$$
		where $\mathcal{F}_1$ is a $(\lceil\frac{k}{2}\rceil-l\lceil\frac{\delta}{2}\rceil)\times \lceil\frac{\delta}{2}\rceil$ full Ferrers diagram,
		$\mathcal{F}_2$ is a $(\lceil\frac{k}{2}\rceil-l\lceil\frac{\delta}{2}\rceil)\times(\lfloor\frac{k}{2}\rfloor+\lfloor\frac{\delta}{2}\rfloor-\lceil\frac{\delta}{2}\rceil) $ full Ferrers diagram,
		$\mathcal{F}_3$ is a $ (\lceil\frac{k}{2}\rceil-l\lceil\frac{\delta}{2}\rceil)\times(n-k-2\delta-\lfloor\frac{k}{2}\rfloor)$ full Ferrers diagram,
		$\mathcal{F}_4$ is a $ l\lceil\frac{\delta}{2}\rceil\times(\lfloor\frac{k}{2}\rfloor+\lfloor\frac{\delta}{2}\rfloor-\lceil\frac{\delta}{2}\rceil)$ full Ferrers diagram,
		$\mathcal{F}_5$ is a $ l\lceil\frac{\delta}{2}\rceil\times(n-k-2\delta-\lfloor\frac{k}{2}\rfloor)$ full Ferrers diagram,
		$\mathcal{F}_6$ is a $ (\lfloor\frac{k}{2}\rfloor-l\lfloor\frac{\delta}{2}\rfloor)\times (n-k-2\delta-\lfloor\frac{k}{2}\rfloor)  $ full Ferrers diagram,
		$\mathcal{F}_7$ is a $(\lfloor\frac{k}{2}\rfloor-l\lfloor\frac{\delta}{2}\rfloor)\times\lfloor\frac{\delta}{2}\rfloor $ full Ferrers diagram and
		$\mathcal{F}_8$ is a $ l\lfloor\frac{\delta}{2}\rfloor\times(n-k-2\delta-\lfloor\frac{k}{2}\rfloor) $ full Ferrers diagram.
		
		Denote
		$$
		\mathcal{F}=\begin{array}{cc}
			\mathcal{F}_2 & \mathcal{F}_3 \\
			\mathcal{F}_4 & \mathcal{F}_5
		\end{array},
		$$
		then $\mathcal{F}$ is a $ \lceil\frac{k}{2}\rceil\times(n-k-2\delta+\lfloor\frac{\delta}{2}\rfloor-\lceil\frac{\delta}{2}\rceil) $ full Ferrers diagram.
		By $\lfloor\frac{k}{2}\rfloor \geq 2\lceil\frac{\delta}{2}\rceil$ , we have $ n-k-2\delta+\lfloor\frac{\delta}{2}\rfloor-\lceil\frac{\delta}{2}\rceil\geq k-2\lceil\frac{\delta}{2}\rceil\geq \lceil\frac{k}{2}\rceil $, we can fill in $\mathcal{F}$ with a $[\lceil\frac{k}{2}\rceil\times(n-k-2\delta+\lfloor\frac{\delta}{2}\rfloor-\lceil\frac{\delta}{2}\rceil), \delta]_q$-MRD code.
		Thus $|\mathcal{C}_6|\geq(t-1)q^{(n-k-2\delta+\lfloor\frac{\delta}{2}\rfloor-\lceil\frac{\delta}{2}\rceil)\times(\lceil\frac{k}{2}\rceil-\delta+1)}$.
		
		Let $\mathcal{C}'=\mathcal{C}_1 \cup \mathcal{C}_2 \cup \mathcal{C}_3 \cup \mathcal{C}_6$. We claim that $\mathcal{C}'$ is an $(n, M', 2 \delta, k)_q$-CDC. It suffices to examine the subspace distance of $\mathcal{C}'$.
		
		Note that 
		$\operatorname{wt}\left(v_1^{\prime}\right)=\lceil\frac{k}{2}\rceil$ and $\operatorname{wt}\left(\hat{v}_2^{\prime}\right)=\lfloor\frac{k}{2}\rfloor$ for any bilateral identifying vector $${\bar{v}'}=(\overbrace{v_1^{\prime}}^{n_1}|\overbrace{\bar{v}_3^{\prime}}^{n-n_1-n_2}|  \overbrace{\hat{v}_2^{\prime}}^{n_2}) \in {\mathcal{B}_1}.$$
		For any two bilateral identifying vectors ${\bar{u}'}$, ${\bar{v}'}$ $\in {\mathcal{B}_1}$ and ${\bar{u}'} \neq {\bar{v}'}$, we have
		$$
		d_H({\bar{u}'}, {\bar{v}'}) \geq 2\lceil\frac{\delta}{2}\rceil+2\lfloor\frac{\delta}{2}\rfloor=2 \delta .       
		$$
		For any identifying vector $u \in \mathcal{A}$, let $u_1$ be the subvector consisting of the first $n_1$ coordinates of $u$. For any bilateral identifying vector ${\bar{v}^{\prime}}=(\overbrace{\bar{v}_1^{\prime}}^{n_1}|\overbrace{\bar{v}_3^{\prime}}^{n-n_1-n_2}| \overbrace{\hat{v}_2^{\prime}}^{n_2}) \in {\mathcal{B}_1}$, where $n_1=k+\delta$ and $n_2=\lfloor\frac{k}{2}\rfloor+\delta$, it is easy to obtain that
		\begin{align*}
			d_H(v^{\prime}_1, u_1)+|\operatorname{wt}\left(v^{\prime}_1)-
			\operatorname{wt}(u_1\right)| \geq 2|\operatorname{wt}\left(v^{\prime}_1)-\operatorname{wt}(u_1\right)| \geq2 \delta.
		\end{align*}
		For the inverse identifying vector $\hat{u}=(0 \cdots 0 \underbrace{1 \cdots 1}_k)$, let $\hat{u}_2$ be the subvector consisting of the last $n_2$ coordinates of $\hat{u}$ and it holds that
		\begin{align*}
			d_H(\hat{v}^{\prime}_2, \hat{u}_2)+\left|\operatorname{w t}\left(\hat{v}^{\prime}_2\right)-\operatorname{w t}\left(\hat{u}_2\right)\right|
			\geq 2\left|\operatorname{w t}\left(\hat{v}^{\prime}_2\right)-\operatorname{w t}\left(\hat{u}_2\right)\right|
			=2 \delta.
		\end{align*}
		For any $\mathcal{U} \in \mathcal{C}_3$ and  $\mathcal{V} \in \mathcal{C}_6$, it is easy to see that 
		\begin{align*}
			d_S(\mathcal{U}, \mathcal{V}) \geq d_H(\bar{v}, {\bar{v}^{\prime}})=d_H\left(v_1, v_1^{\prime}\right)+d_H\left(\hat{v}_2, \hat{v}_2^{\prime}\right) \geq 2\lceil\frac{\delta}{2}\rceil+2\lfloor\frac{\delta}{2}\rfloor=2\delta .
		\end{align*}
		Applying Theorem 3, $\mathcal{C}'$ is the desired CDC which contains a lifted MRD code as a subset.
	\end{proof}
	
	\begin{rem}\label{rem1}
		Notice that in Corollary 2, we have dropped many dots. We can get better lower bounds in most cases. In the following corollaries, we provide better lower bounds.
	\end{rem}
	
	\begin{coro}\label{coro4}
		Let $n=17$, $k=6$, $\delta=3$. Then
		$
		\bar{A}_q(17,6,6) \geq q^{44}+q^{35}+q^{32}+(q^4+q^3+q^2+q+1)(q^3+1)(q^2+1)(q^{11}-1)+q^{14}+q^{12}+q^{8}.
		$ 
	\end{coro}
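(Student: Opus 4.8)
The plan is to realize $\bar{A}_q(17,6,6)$ through a single application of Theorem~\ref{theo3} with $n=17$, $k=6$, $\delta=3$ (so $2\delta=6$), assembling the code $\mathcal{C}_1\cup\mathcal{C}_2\cup\mathcal{C}_3\cup\mathcal{C}_6$. Note that Corollary~\ref{coro3} cannot be quoted verbatim here, since its hypothesis $\lfloor k/2\rfloor\ge 2\lceil\delta/2\rceil$ reads $3\ge 4$ and fails; so each of the four blocks must be built and counted by hand, and---following Remark~\ref{rem1}---the bilateral Ferrers diagrams must be filled keeping \emph{all} their dots rather than by the crude rectangular fill that produced the generic bound.

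First I would dispose of the two straight-line blocks. For $\mathcal{C}_1$ I take the multilevel code of Lemma~\ref{lem13}; since $n\ge 2k+\delta=15$ and $k\ge 2\delta=6$, its size is $q^{(n-k)(k-\delta+1)}\frac{1-q^{-\lfloor k/\delta\rfloor\delta^2}}{1-q^{-\delta^2}}+q^{(n-k-\delta)(k-\delta+1)}=q^{44}(1+q^{-9})+q^{32}=q^{44}+q^{35}+q^{32}$. For $\mathcal{C}_2$ I lift an MRD code as in Lemma~\ref{lem5}, whose nonzero-rank part is enumerated by Lemma~\ref{lem3}: here $\sum_{i=\delta}^{k-\delta}a(q,k,n-k,\delta,i)=a(q,6,11,3,3)=\genfrac[]{0pt}{}{6}{3}_q(q^{11}-1)$. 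Factoring the Gaussian coefficient as $\genfrac[]{0pt}{}{6}{3}_q=(q^4+q^3+q^2+q+1)(q^3+1)(q^2+1)$ reproduces the displayed product term $(q^4+q^3+q^2+q+1)(q^3+1)(q^2+1)(q^{11}-1)$.

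The substance of the argument is the bilateral part. I would fix the common type $(n_1,n-n_1-n_2,n_2)=(9,2,6)$ with the symmetric row split $k_1=k_2=3$, and then hand-pick the new bilateral identifying vector sets $\mathcal{B}$ (for $\mathcal{C}_3$) and $\mathcal{B}_1$ (for $\mathcal{C}_6$) so that each chosen vector has its front pivots pushed as far left as possible and its back pivots as far right as possible; this makes the upper-left echelon block, the full middle $6\times 2$ block, and the lower-right inverse-echelon block each carry the maximal number of dots. Filling the resulting bilateral Ferrers diagrams by the optimal construction of Lemma~\ref{lem7} (and, where a rectangular block admits it, by the subcode/inserting device of Lemmas~\ref{lem15}--\ref{lem16} exactly as in Theorem~\ref{theo2}) should yield lifted BFDRM codes of the three sizes $q^{14}$, $q^{12}$ and $q^{8}$, which are the remaining summands.

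Finally I would verify the four hypotheses of Theorem~\ref{theo3}. With $u_1$ the length-$9$ prefix of each $u\in\mathcal{A}$ (all of weight $6$) and $\hat{u}_2=(\hat 1\cdots\hat 1)$ the length-$6$ suffix of $\hat{u}\in\hat{\mathcal{A}}$, the weight-balanced inequalities in conditions (2)--(3) become automatic for the split $k_1=k_2=3$, since each reduces to an overlap of at most $3$ which is forced by $\mathrm{wt}(v_1)=\mathrm{wt}(\hat v_2)=3$; condition (4) is the genuine constraint $d_H(v_1,v_1')+d_H(\hat v_2,\hat v_2')\ge 2\delta$ relating $\mathcal{B}$ and $\mathcal{B}_1$; and condition (1) is the double-multilevel inequality inherited from Lemma~\ref{lem10}. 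The hard part will be the simultaneous optimization hidden in the previous paragraph: choosing $\mathcal{B}$ and $\mathcal{B}_1$ so that all of these Hamming and weight constraints hold while the three bilateral Ferrers diagrams stay large enough to force \emph{exactly} the exponents $14$, $12$ and $8$---that is, trading dot-count against minimum rank distance, since any pivot placement that enlarges one block tends to shrink another or to break condition~(4).
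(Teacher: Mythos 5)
Your skeleton coincides with the paper's: the same four-block decomposition $\mathcal{C}_1\cup\mathcal{C}_2\cup\mathcal{C}_3\cup\mathcal{C}_6$ glued by Theorem~\ref{theo3}, with $\mathcal{C}_1$ counted by Lemma~\ref{lem13} (giving $q^{44}+q^{35}+q^{32}$), $\mathcal{C}_2$ counted via Lemma~\ref{lem5} and the rank distribution of Lemma~\ref{lem3} (giving $\genfrac[]{0pt}{}{6}{3}_q(q^{11}-1)$, i.e.\ the displayed product), the type $(9,2,6)$ with prefix/suffix weights $3+3$, and the observation that conditions (2)--(3) of Theorem~\ref{theo3} then hold automatically because $d_H\ge|\mathrm{wt}-\mathrm{wt}|=3$ forces each sum to be at least $2\delta$. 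You are also right that Corollary~\ref{coro3} cannot be invoked since $\lfloor k/2\rfloor\ge 2\lceil\delta/2\rceil$ fails for $(k,\delta)=(6,3)$.

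The genuine gap is that everything in the bound beyond the generic terms consists of the three exponents $14$, $12$, $8$, and your proposal never produces them: you defer exactly this step as ``the hard part.'' The paper settles it by exhibiting concrete vectors, namely $\mathcal{B}=\{\bar{v}_1=(111000000\,\bar{0}\bar{0}\,\hat{0}\hat{0}\hat{0}\hat{1}\hat{1}\hat{1}),\ \bar{v}_2=(100110000\,\bar{0}\bar{0}\,\hat{0}\hat{0}\hat{1}\hat{0}\hat{1}\hat{1})\}$, whose lifted BFDRMCs have sizes $q^{14}$ and $q^{12}$ by \cite{30}, and $\mathcal{B}_1=\{\bar{v}_1'=(001001100\,\bar{0}\bar{0}\,\hat{1}\hat{0}\hat{1}\hat{1}\hat{0}\hat{0})\}$, whose bilateral echelon Ferrers form, after discarding two dots so as to obtain a genuine Ferrers diagram $\mathcal{F}_1=[1,1,3,3,6,6]$, supports an optimal $[\mathcal{F}_1,8,3]_q$ code by Lemma~\ref{lem7}, giving $q^{8}$. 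These choices also show that your guiding heuristic (pivots pushed maximally left/right to maximize dots) cannot be carried out as stated: the prefix pivots of $\bar{v}_2$ and $\bar{v}_1'$ sit at staggered positions $\{1,4,5\}$ and $\{3,6,7\}$ precisely so that the Hamming distances inside $\mathcal{B}$ and between $\mathcal{B}$ and $\mathcal{B}_1$ reach $2\delta=6$, and the price is a diagram with only $20$ dots and dimension $8$ rather than a maximal fill --- so the existence of sets achieving these exact exponents is a nontrivial combinatorial fact that must be exhibited, not assumed. Finally, your suggestion to use the subcode/inserting device of Lemmas~\ref{lem15}--\ref{lem16} is a misdirection here: that machinery appears only in Theorem~\ref{theo2} (the $(6\delta,2\delta,3\delta)$ family), whereas in Corollary~\ref{coro4} all three bilateral blocks are plain (B)FDRMC fills.
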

	\begin{proof}
		We construct a set of identifying vectors as follows:
		\begin{align*}
			\mathcal{A}=\{v_1=(111111 \underbrace{0 \cdots 0}_{11}), v_2=(111000111 \underbrace{0 \cdots 0}_8), v_3=(000111111 \underbrace{0 \cdots 0}_8)\},
		\end{align*}
		a set of inverse identifying vectors as follows:
		\begin{align*}
			\hat{\mathcal{A}}=\{(\hat{v}=(\underbrace{0 \cdots 0}_{11} 111111)\},
		\end{align*}
		and two sets of bilateral identifying vectors as follows:
		\begin{align*}
			\mathcal{B}=\{&\bar{v}_1=(111000000 \bar{0} \bar{0}  \hat{0} \hat{0} \hat{0} \hat{1} \hat{1} \hat{1}), \bar{v}_2=(100110000 \bar{0}\bar{0} \hat{0} \hat{0} \hat{1} \hat{0} \hat{1} \hat{1})\},
		\end{align*}
		\begin{align*}
			{\mathcal{B}_1}=\{\bar{v}_1^{\prime}=(001001100 \bar{0}\bar{0} \hat{1} \hat{0} \hat{1} \hat{1} \hat{0} \hat{0})\}.
		\end{align*}
		By Corollary 2, we have the sizes of CDCs corresponding to $v_1$, $v_2$, $v_3$ and $\hat{v}$ are $q^{44}$, $q^{35}$, $q^{32}$ and $1+\sum_{i=3}^3 a(q, 6,11,3, i)$, respectively. The corresponding CDCs for $\bar{v}_1$ and $\bar{v}_2$ have the sizes $q^{14}$ and $q^{12}$, respectively by \cite{30}.
		
		The bilateral echelon Ferrers form of ${\bar{v}}_1^{\prime}$ is
		$$
		\left[\begin{array}{ccccccccccccccccc}
			0&0&\mathbf{1}&\bullet&\bullet&0&0&\bullet&\bullet&\bullet&\bullet&0&0&0&0&0&0\\
			0&0&0&0&0&\mathbf{1}&0&\bullet&\bullet&\bullet&\bullet&0&0&0&0&0&0\\
			0&0&0&0&0&0&\mathbf{1}&\bullet&\bullet&\bullet&\bullet&0&0&0&0&0&0\\
			0&0&0&0&0&0&0&0&0&\bullet&\bullet&0&\bullet&0&\mathbf{1}&0&0\\
			0&0&0&0&0&0&0&0&0&\bullet&\bullet&0&\bullet&\mathbf{1}&0&0&0\\
			0&0&0&0&0&0&0&0&0&\bullet&\bullet&\mathbf{1}&0&0&0&0&0
		\end{array}\right].
		$$
		
		Denote
		$$
		\mathcal{F}_1=\begin{matrix}
			&\bullet&\bullet&\bullet&\bullet&\bullet&\bullet\\
			&&&\bullet&\bullet&\bullet&\bullet\\
			&&&\bullet&\bullet&\bullet&\bullet\\
			&&&&&\bullet&\bullet\\
			&&&&&\bullet&\bullet\\
			&&&&&\bullet&\bullet\\
		\end{matrix}.
		$$
		
		Notice that $\mathcal{F}_1=[1,1,3,3,6,6]$, then by Lemma 7 there exists an optimal $[\mathcal{F}_1, 8, 3]_q$-FDRMC. Therefore, the size of $\mathcal{D}_{{\bar{v}}_1^{\prime}}$ obtained from the bilateral multilevel construction is $q^{8}$.
	\end{proof}

	\begin{coro}\label{coro5}
		Let $n=18$, $k=7$, $\delta=3$. We have
		$
		\bar{A}_q(18,6,7) \geq q^{55}+q^{46}+q^{40}+(q^6+q^5+q^4+q^3+q^2+q+1)(q^4+q^3+q^2+q+1)(q^2-q+1)(q^{22}-(q^{11}-1)(q^3+q^2+q)-1)+q^{22}+q^{18}+q^{16}+q^{12}+q^{8}$.
	\end{coro}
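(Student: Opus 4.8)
The plan is to mirror the proof of Corollary 4: realise the claimed lower bound as the sum of the sizes of four explicit codes $\mathcal{C}_1,\mathcal{C}_2,\mathcal{C}_3,\mathcal{C}_6$ built for $(n,k,\delta)=(18,7,3)$, and then apply Theorem 3 to assemble $\mathcal{C}_1\cup\mathcal{C}_2\cup\mathcal{C}_3\cup\mathcal{C}_6$ into a single $(18,6,7)_q$-CDC. First I would take the identifying vector set $\mathcal{A}=\{(\underbrace{1\cdots1}_7 0\cdots0)\}\cup\{(\underbrace{1\cdots1}_{7-3i}\underbrace{0\cdots0}_3\underbrace{1\cdots1}_{3i}0\cdots0):1\le i\le 2\}$ from Lemma 13, so that the multilevel code $\mathcal{C}_1$ has size $q^{55}\,\frac{1-q^{-18}}{1-q^{-9}}+q^{40}=q^{55}+q^{46}+q^{40}$. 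Then I set $\hat{\mathcal{A}}=\{(0\cdots0\underbrace{1\cdots1}_7)\}$ and $\mathcal{C}_2=\{rs(A\mid I_7):A\in\mathcal{M},\ \operatorname{rank}(A)\le 4\}$ for a $[7\times 11,3]_q$-MRD code $\mathcal{M}$; Lemmas 3, 4 and 5 give $|\mathcal{C}_2|\ge\sum_{i=3}^{4}a(q,7,11,3,i)$, and a direct computation of these two rank-distribution terms collapses the sum to the displayed product $\genfrac[]{0pt}{}{7}{3}_q\bigl(q^{22}-(q^{11}-1)(q^3+q^2+q)-1\bigr)$.

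For the two bilateral pieces I would exhibit an explicit set $\mathcal{B}$ of three bilateral identifying vectors and an explicit set $\mathcal{B}_1$ of two bilateral identifying vectors, all of one common type $(n_1,n-n_1-n_2,n_2)$, in the same hand-tuned style as Corollary 4 (keeping as many dots in the middle full block as possible, cf. Remark 1). The code $\mathcal{C}_3$ obtained from $\mathcal{B}$ by the bilateral multilevel construction of Lemma 12 should have size $q^{22}+q^{18}+q^{16}$ (as in \cite{30}), while $\mathcal{C}_6$ obtained from $\mathcal{B}_1$ should have size $q^{12}+q^{8}$. For each $\bar v\in\mathcal{B}_1$ I would write down its bilateral echelon Ferrers form, read off the associated bilateral Ferrers diagram $\mathcal{F}_{\bar v}=[\gamma_0,\dots,\gamma_{m'-1}]$, and check that its last $\delta-1=2$ columns each contain enough dots so that Lemma 7 yields an optimal $[\mathcal{F}_{\bar v},\sum_{i=0}^{m'-3}\gamma_i,3]_q$-code; verifying that $\sum_{i=0}^{m'-3}\gamma_i$ equals $12$ and $8$ for the two diagrams is the numeric heart of this step.

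It then remains to verify the four hypotheses of Theorem 3 so that the union is a CDC of minimum subspace distance $6$. Condition (1) is the double multilevel inequality of Lemma 10 relating $\hat{\mathcal{A}}$ and $\mathcal{A}$; it holds because the single inverse vector $(0\cdots0\,1\cdots1)$ is supported on the last $7$ coordinates while every $v\in\mathcal{A}$ is supported in the first $10$, so the two supports are disjoint and $d_H(\hat v,v)=14$ dominates $2(s_{\hat v}+\delta)$. Conditions (2) and (3) reduce, exactly as in the proof of Corollary 3, to the elementary bound $d_H(v_1,u_1)+|\operatorname{wt}(v_1)-\operatorname{wt}(u_1)|\ge 2|\operatorname{wt}(v_1)-\operatorname{wt}(u_1)|\ge 2\delta$ (and its RRIEF-block analogue), which follows from the $\lceil k/2\rceil=4$ and $\lfloor k/2\rfloor=3$ weight splits of the bilateral vectors against the weight-$7$ prefixes/suffixes of $u$ and $\hat u$. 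Condition (4) is the mutual bound $d_H(v_1,v_1')+d_H(\hat v_2,\hat v_2')\ge 2\delta$ between $\mathcal{B}$ and $\mathcal{B}_1$, to be checked directly on the left and right blocks of the two families.

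I expect the principal difficulty to lie not in the distance verifications, which are routine once the vectors are pinned down, but in the combinatorial design of $\mathcal{B}$ and $\mathcal{B}_1$: the type $(n_1,n-n_1-n_2,n_2)$ and the placements of the pivots must be chosen so that (a) each bilateral Ferrers diagram satisfies the column hypothesis of Lemma 7 and has the exact dot count producing the exponents $22,18,16,12,8$, and simultaneously (b) the cross-family condition (4) and the prefix/suffix weight conditions (2)--(3) all hold. An incorrect choice of type, or of the split of $\delta=3$ as $\lceil\delta/2\rceil+\lfloor\delta/2\rfloor=2+1$ between the two echelon blocks, would either violate a distance condition or lower the FDRMC dimension, so the bulk of the work is finding a single consistent choice meeting all constraints at once.
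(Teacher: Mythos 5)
Your skeleton coincides with the paper's proof in every structural respect: the paper also takes $\mathcal{A}=\{(1111111\,0\cdots0),\ (1111000111\,0\cdots0),\ (1000111111\,0\cdots0)\}$ (exactly the Lemma 13 set, giving $q^{55}+q^{46}+q^{40}$), the single inverse identifying vector $(0\cdots0\,1111111)$ with the rank-restricted MRD code of size $\genfrac[]{0pt}{}{7}{3}_q\bigl(q^{22}-(q^{11}-1)(q^3+q^2+q)-1\bigr)$, a three-element set $\mathcal{B}$ contributing $q^{22}+q^{18}+q^{16}$, a two-element set $\mathcal{B}_1$ contributing $q^{12}+q^{8}$, and Theorem 3 to assemble the union. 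But your write-up has a genuine gap exactly where you locate ``the bulk of the work'': you never exhibit $\mathcal{B}$ and $\mathcal{B}_1$. That existence claim is the entire content of the corollary --- the paper's proof consists of nothing \emph{but} these explicit vectors, after which it declares the verification ``similar'' to the previous corollary and omits it. A plan that postulates five bilateral identifying vectors of a common type whose Ferrers diagrams yield dimensions $22,18,16,12,8$ while simultaneously meeting conditions (2)--(4) of Theorem 3 does not establish that such vectors exist; the failure mode you yourself describe (a type or pivot placement that breaks either a distance condition or a dimension count) is precisely what must be excluded by an explicit construction, not assumed away.

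For comparison, the paper's choice, all of type $(10,2,6)$ with prefix weight $4$ and suffix weight $3$, is $\mathcal{B}=\{(1111000000\,\bar0\bar0\,\hat0\hat0\hat0\hat1\hat1\hat1),\ (1100110000\,\bar0\bar0\,\hat0\hat0\hat1\hat0\hat1\hat1),\ (0011110000\,\bar0\bar0\,\hat0\hat0\hat1\hat1\hat0\hat1)\}$ and $\mathcal{B}_1=\{(0011001100\,\bar0\bar0\,\hat1\hat0\hat1\hat1\hat0\hat0),\ (0000111100\,\bar0\bar0\,\hat1\hat1\hat0\hat1\hat0\hat0)\}$; the prefix pivot blocks shift by $\lceil\delta/2\rceil=2$ and the suffix ones by $\lfloor\delta/2\rfloor=1$, consistent with the split you anticipated. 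With these in hand your remaining steps do go through: after discarding the dots lying to the right of the middle block (cf.\ Remark 1), the diagram of the first vector of $\mathcal{B}_1$ is the $7\times6$ Ferrers diagram $[2,2,4,4,7,7]$, so Lemma 7 gives dimension $2+2+4+4=12$, and the second gives $[4,4,7,7]$ with dimension $4+4=8$; conditions (2)--(3) reduce to the weight counts $2|4-7|=2|3-6|=6=2\delta$ against $\mathcal{A}$ and $\hat{\mathcal{A}}$, and all pairwise prefix-plus-suffix Hamming distances within and between $\mathcal{B}$ and $\mathcal{B}_1$ are at least $4+2=6$. So your reduction is the right one, but as submitted it reduces the corollary to an unproved combinatorial existence statement rather than proving it.
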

	
	\begin{proof}
		We construct a set of identifying vectors as follows:
		\begin{align*}
			\mathcal{A}=\{v_1=(1111111 \underbrace{0 \cdots 0}_{11}), v_2=(1111000111 \underbrace{0 \cdots 0}_8), v_3=(1000111111 \underbrace{0 \cdots 0}_8)\},
		\end{align*}
		a set of inverse identifying vectors as follows:
		\begin{align*}
			\hat{\mathcal{A}}=\{(\hat{v}=(\underbrace{0 \cdots 0}_{11} 1111111)\},
		\end{align*}
		and two sets of bilateral identifying vectors as follows:
		\begin{align*}
			\mathcal{B}=\{\bar{v}_1=(1111000000  \bar{0} \bar{0} \hat{0} \hat{0}\hat{0} \hat{1} \hat{1} \hat{1}), \bar{v}_2=(1100110000 \bar{0}\bar{0} \hat{0} \hat{0} \hat{1} \hat{0} \hat{1} \hat{1}),
			\bar{v}_3=(0011110000\bar{0}\bar{0}\hat{0}\hat{0}\hat{1}\hat{1}\hat{0}\hat{1})\},
		\end{align*}
		\begin{align*}
			{\mathcal{B}_1}=\{\bar{v}_1^{\prime}=(0011001100 \bar{0}\bar{0} \hat{1} \hat{0} \hat{1} \hat{1} \hat{0} \hat{0}), {\bar{v}}_2^{\prime}=(0000111100  \bar{0}\bar{0} \hat{1} \hat{1} \hat{0} \hat{1} \hat{0} \hat{0})\}.
		\end{align*}
		The following proof is similar to Corollary 3 and we omit it here.
	\end{proof}
	
	\begin{rem}\label{rem2}
		Note that in \cite{19}, the authors calculated  $\bar{A}_q(18,6,7)\geq q^{55}+q^{46}+q^{40}+(q^6+q^5+q^4+q^3+q^2+q+1)(q^4+q^3+q^2+q+1)(q^2+q+1)(q^{22}-(q^{11}-1)(q^3+q^2+q)-1)+2(q^4+q^3+q^2+q+1)(q^2-q+1)(q^2+q+1)$, but there was an error in this result. The correct result should be $\bar{A}_q(18,6,7)\geq q^{55}+q^{46}+q^{40}+(q^6+q^5+q^4+q^3+q^2+q+1)(q^4+q^3+q^2+q+1)(q^2-q+1)(q^{22}-(q^{11}-1)(q^3+q^2+q)-1)+2(q^4+q^3+q^2+q+1)(q^2-q+1)(q^2+q+1)$. Clearly, the correct result is smaller than the result under our construction.
	\end{rem}
	
	\begin{coro}\label{coro6}
		Let $n=19$, $k=7$, $\delta=3$. Then we have
		$
		\bar{A}_q(19,6,7) \geq q^{60}+q^{51}+q^{45}+(q^6+q^5+q^4+q^3+q^2+q+1)(q^4+q^3+q^2+q+1)(q^2-q+1)(q^{22}-(q^{11}-1)(q^3+q^2+q)-1)+q^{27}+2q^{23}+q^{19}+q^{15}.$ 
	\end{coro}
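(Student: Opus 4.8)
The plan is to imitate the constructions of Corollaries 4 and 5 and to apply Theorem 3 to four families of length-$19$, weight-$7$ vectors. Concretely I would take the identifying vectors $\mathcal{A}=\{v_1=(1^{7}0^{12}),\,v_2=(1^{4}0^{3}1^{3}0^{9}),\,v_3=(1\,0^{3}1^{6}0^{9})\}$, the single inverse identifying vector $\hat{\mathcal{A}}=\{\hat v=(0^{12}1^{7})\}$, and two sets $\mathcal B$, $\mathcal B_1$ of bilateral identifying vectors of one common type $(n_1,n-n_1-n_2,n_2)$, patterned on the explicit $\mathcal B$, $\mathcal B_1$ of Corollary 5 but with the central $\bar 0$-block lengthened by one coordinate so that the total length is $19$. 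Writing $\mathcal C_1,\mathcal C_2,\mathcal C_3,\mathcal C_6$ for the multilevel, inverse multilevel, and the two bilateral codes attached to these four families, the target bound is exactly $|\mathcal C_1|+|\mathcal C_2|+|\mathcal C_3|+|\mathcal C_6|$, so the whole content of the corollary is (i) evaluating these four sizes and (ii) checking that the hypotheses of Theorem 3 hold.

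For the sizes I would proceed piece by piece. The three multilevel contributions are read off from the echelon Ferrers forms $EF(v_i)$ filled by optimal Ferrers diagram rank-metric codes via Lemma 7: $v_1$ gives the lifted MRD size $q^{(n-k)(k-\delta+1)}=q^{60}$, while $v_2$ and $v_3$ give $q^{51}$ and $q^{45}$; each exceeds its $n=18$ counterpart by $k-\delta+1=5$ because the one extra trailing free column contributes $5$ dots to the optimal FDRMC dimension. The inverse multilevel piece is the code $\mathcal C_2=\{\,\mathrm{rs}(A\mid I_7):A\in\mathcal M,\ \mathrm{rank}(A)\le k-\delta\,\}$ with $\mathcal M$ a $[7\times 12,3]_q$-MRD code; by Lemmas 5, 4 and 3 its size is at least $1+\sum_{i=\delta}^{k-\delta}a(q,k,n-k,\delta,i)=1+\sum_{i=3}^{4}a(q,7,12,3,i)$, which expands, via the Gaussian coefficient identity of Lemma 3, into the displayed closed-form product of $q$-polynomial factors. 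The bilateral contributions from $\mathcal B$ follow straight from the bilateral multilevel construction (Lemma 12), as in \cite{30}; for each $\bar v'\in\mathcal B_1$ I would write out $\overline{EF}(\bar v')$, isolate its central full Ferrers diagram, and invoke Lemma 7 to fill it with an optimal bilateral FDRMC. Summing the bilateral pieces yields $q^{27}+2q^{23}+q^{19}+q^{15}$.

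Finally I would verify the gluing hypotheses (1)--(4) of Theorem 3. Condition (1) is the double multilevel inequality between $\mathcal A$ and $\hat{\mathcal A}$, which holds because the $v_i$ are left-supported and $\hat v$ is right-supported, so $d_H(\hat v,v)$ is large and an admissible $s_{\hat v}$ exists just as in Lemma 10. Conditions (2) and (3) compare the length-$n_1$ prefix and length-$n_2$ suffix of each bilateral vector against $u_1$ and $\hat u_2$; here I would use the weight-gap estimate $d_H(v_1',u_1)+|\mathrm{wt}(v_1')-\mathrm{wt}(u_1)|\ge 2|\mathrm{wt}(v_1')-\mathrm{wt}(u_1)|$ and its suffix analogue, so it is enough that the prefix and suffix weights differ from $\mathrm{wt}(u_1)$ and $\mathrm{wt}(\hat u_2)$ by at least $\delta=3$. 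Condition (4) is the combined distance $d_H(v_1,v_1')+d_H(\hat v_2,\hat v_2')\ge 2\delta$ across $\mathcal B$ and $\mathcal B_1$, plus minimum distance $2\delta$ inside each of $\mathcal B$, $\mathcal B_1$. I expect the main obstacle to be the coordinated design of the bilateral vectors: their supports must be spread so that (2)--(4) all hold simultaneously, while at the same time the central full Ferrers diagram of every $\overline{EF}(\bar v')$ stays large enough --- and has its rightmost $\delta-1$ columns full enough for Lemma 7 --- to realize the exact exponents $27,23,23,19,15$. Once suitable vectors are pinned down, the distance checks become routine Hamming-weight bookkeeping and Theorem 3 delivers the claimed $(19,6,7)_q$-CDC.
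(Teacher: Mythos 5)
Your proposal is essentially the paper's own proof: the paper takes exactly the identifying vectors, inverse identifying vector, and bilateral sets $\mathcal{B}$, $\mathcal{B}_1$ you describe (Corollary 5's vectors with the central $\bar{0}$-block lengthened by one coordinate), glues them via Theorem 3, and, like you, omits the detailed distance verification as being "similar" to the earlier corollaries. One algebraic slip worth flagging: your claim that $1+\sum_{i=3}^{4}a(q,7,12,3,i)$ "expands into the displayed closed-form product" is false as an identity, since the displayed factor $(q^{6}+\cdots+1)(q^{4}+\cdots+1)(q^{2}-q+1)\left(q^{22}-(q^{11}-1)(q^{3}+q^{2}+q)-1\right)$ equals $\sum_{i=3}^{4}a(q,7,11,3,i)$, i.e.\ the $7\times 11$ term carried over verbatim from Corollary 5, whereas your $7\times 12$ count equals $1+\genfrac[]{0pt}{}{7}{3}_q\left(q^{24}-(q^{12}-1)(q^{3}+q^{2}+q)-1\right)$. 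This does not damage your proof --- your count strictly exceeds the displayed product, so the stated bound follows a fortiori --- and indeed the Table 1 entries for $\bar{A}_3(19,6,7)$ are reproduced exactly by the $7\times 12$ value, so your accounting matches the construction's true size and the discrepancy is a typo in the corollary's displayed formula rather than an error on your side.
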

	\begin{proof}
		We construct a set of identifying vectors as follows:
		\begin{align*}
			\mathcal{A}=\{v_1=(1111111 \underbrace{0 \cdots 0}_{12}), v_2=(1111000111 \underbrace{0 \cdots 0}_9), v_3=(1000111111 \underbrace{0 \cdots 0}_9)\},
		\end{align*}
		a set of inverse identifying vectors as follows:
		\begin{align*}
			\hat{\mathcal{A}}=\{(\hat{v}=(\underbrace{0 \cdots 0}_{12} 1111111)\},
		\end{align*}
		and two sets of bilateral identifying vectors as follows:
		\begin{align*}
			\mathcal{B}=\{\bar{v}_1=(1111000000 \bar{0} \bar{0} \bar{0} \hat{0} \hat{0} \hat{0} \hat{1} \hat{1} \hat{1}), \bar{v}_2=(1100110000 \bar{0}\bar{0}\bar{0} \hat{0} \hat{0} \hat{1} \hat{0} \hat{1} \hat{1}),
			\bar{v}_3=(0011110000\bar{0}\bar{0}\bar{0}\hat{0}\hat{0}\hat{1}\hat{1}\hat{0}\hat{1})\},
		\end{align*}
		\begin{align*}
			{\mathcal{B}_1}=\{{\bar{v}}_1^{\prime}=(0011001100 \bar{0}\bar{0}\bar{0} \hat{1} \hat{0} \hat{1} \hat{1} \hat{0} \hat{0}), {\bar{v}}_2^{\prime}=(0000111100 \bar{0} \bar{0}\bar{0} \hat{1} \hat{1} \hat{0} \hat{1} \hat{0} \hat{0})\}.
		\end{align*}
		The following proof is similar to Corollary 3 and we omit it here.
	\end{proof}
	
	\begin{rem}\label{rem3} Combining Corollary 6 and Lemma 1 with $n =19$, $\delta=3$, and $k=6$, we can calculate the ratio between the lower
bound and the upper bound of the CDCs: $\frac{the~lower~bound~of~\mathcal{C}}{the~upper~bound~of~\mathcal{C}} \geq 0.94548.$
\end{rem}
	\section{Conclusions}

	Subspace codes, especially CDCs, have received widespread attention due to their applications in random network coding. From a practical standpoint, at least for applications in network coding, many open questions (notably for the design)  arose in the framework of constant-dimension lifted rank-metric codes. This paper derived several constructions for CDCs. We presented the inverse bilateral multilevel construction by introducing inverse bilateral identifying vectors and inverse bilateral Ferrers diagram rank-metric codes. Furthermore, by introducing a new set of bilateral identifying vectors, we also provided another construction for CDCs. The inverse bilateral Ferrers diagram rank-metric code plays an important role in our construction and is meaningful to research. In this context, effectively identifying vectors is still an open and challenging task, but the multilevel inserting construction helps reduce the problem.
	For our constructions, we established lower bounds on the CDCs with the parameters $(n, d, k)\in \{(18, 6, 9), (17, 6, 6), (18, 6, 7), (19, 6, 7)\}$, which are larger than the known lower bounds of CDCs. Specific numerical results were given in Table 1 using the Software system MAGMA and the results were verified by MAGMA. In the future, we will consider constructing more optimal FDRMCs, which is beneficial for improving the size of CDCs with other (larger) parameters.
	
	\begin{table}[H]
\centering
\caption{New lower bounds of $\bar{A}_q(n,2\delta,k)$}
\vskip 2mm \setlength{\tabcolsep}{6pt}
%\setlength{\tabcolsep}{6pt}
%\resizebox{\textwidth}{20mm}{
{
\begin{tabular}{|c|c|c|c|>{\centering\arraybackslash}m{2cm}|}

\cline{1-5}
$\bar{A}_q(n,2\delta,k)$&  New lower bounds &Old lower bounds&Differences & References \\
\cline{1-5}
$\bar{A}_2(18,6,9)$&92715452252\underline{90474496}&92715452252\underline{88115199}&2359297&\\
\cline{1-4}
$\bar{A}_3(18,6,9)$&\makecell[c]{1144661280188263\\3236\underline{77419096134}}&\makecell[c]{1144661280188263\\3236\underline{66571322442}}&\makecell[c]{10847773\\692}&\\
\cline{1-4}
$\bar{A}_4(18,6,9)$&\makecell[c]{8507105814618280799\\887011\underline{9931236581376}}&\makecell[c]{8507105814618280799\\887011\underline{5464470593536}}&\makecell[c]{4466765\\987840}& \\
\cline{1-4}
$\bar{A}_5(18,6,9)$&\makecell[c]{1084202899657109779\\21242760305\underline{54916880}\\\underline{8593750}}&\makecell[c]{1084202899657109779\\21242760305\underline{06851695}\\\underline{3125000}}&\makecell[c]{4796518\\55468750}& \cite{30}\\
\cline{1-4}
$\bar{A}_7(18,6,9)$&\makecell[c]{1742515033889755513\\18887574330215518\underline{56}\\\underline{7696430750492486}}& \makecell[c]{1742515033889755513\\18887574330215518\underline{00}\\\underline{7522153069298030}}&\makecell[c]{492474277\\681194456}&\\
\cline{1-4}
$\bar{A}_8(18,6,9)$&\makecell[c]{7846377237219197911\\383819591537893969\underline{8}\\\underline{8803423687298514944}}&\makecell[c]{7846377237219197911\\383819591537893969\underline{7}\\\underline{9562037251934257152}}&\makecell[c]{92413864\\35364257\\792}&\\
\cline{1-4}
$\bar{A}_9(18,6,9)$&\makecell[c]{13100205124938663392\\06870324857588258367\\\underline{421841432175735022246}}&\makecell[c]{13100205124938663392\\06870324857588258367\\\underline{312272348408925663916}}&\makecell[c]{10956908\\37668093\\58330}&\\
\cline{1-5}
					
$\bar{A}_3(17,6,6)$&98482278675490\underline{6111880}&98482278675490\underline{0790910}&5320970&\\
\cline{1-4}
$\bar{A}_4(17,6,6)$&\makecell[c]{309486208859711440\underline{565}\\\underline{256219}}&\makecell[c]{309486208859711440\underline{279}\\\underline{978012}}&285278207&\\
\cline{1-4}
$\bar{A}_5(17,6,6)$&\makecell[c]{56843448197469116506\underline{6}\\\underline{2156035194}}&\makecell[c]{56843448197469116506\underline{5}\\\underline{5807988320}}&\makecell[c]{6348046\\874}&\\
\cline{1-4}
$\bar{A}_7(17,6,6)$&\makecell[c]{152867010118656961337\\69150\underline{856284558796}}& \makecell[c]{152867010118656961337\\69150\underline{164214433946}}&\makecell[c]{69207012\\4850}& \cite{18}\\
\cline{1-4}
$\bar{A}_8(17,6,6)$&\makecell[c]{544451791137906278523\\29396\underline{94101114276215} }&\makecell[c]{ 544451791137906278523\\29396\underline{89634331511160}}&\makecell[c]{44667827\\65055}&\\
\cline{1-4}
$\bar{A}_9(17,6,6)$&\makecell[c]{969773732294112791690\\3699209\underline{76223779322734}} & \makecell[c]{969773732294112791690\\3699209\underline{53064514284572}}&\makecell[c]{23159265\\038162}&\\
\cline{1-5}
$\bar{A}_3(18,6,7)$&\makecell[c]{174458086133950\underline{601507}\\\underline{064752}}& \makecell[c]{174458086133950\underline{569695}\\\underline{021953}}&\makecell[c]{3181204\\2799}&\\
\cline{1-4}
$\bar{A}_4(18,6,7)$&\makecell[c]{12980791676032157421\\\underline{80577631319615}}& \makecell[c]{ 12980791676032157421\\\underline{62912414174601}}&\makecell[c]{17665217\\145014}&\\
\cline{1-4}
$\bar{A}_5(18,6,7)$&\makecell[c]{277555898273931997862\\55\underline{3960563246907624}}& \makecell[c]{ 277555898273931997862\\55\underline{1572409927221361}}&\makecell[c]{23881533\\19686263}& \cite{19} and\\
\cline{1-4}
$\bar{A}_7(18,6,7)$&\makecell[c]{302268027208297537838\\299541\underline{73786880863008}\\\underline{449184}}&  \makecell[c]{302268027208297537838\\299541\underline{69875398154063}\\\underline{659585}}&\makecell[c]{39114827\\08944789\\599}& Remark 2 \\
\cline{1-4}
$\bar{A}_8(18,6,7)$&	\makecell[c]{467680527430393663434\\529587833\underline{788926286522}\\\underline{91957247}}& \makecell[c]{ 467680527430393663434\\529587833\underline{050873564152}\\\underline{70259473}}&\makecell[c] {73805272\\23702170\\2774}&\\
\cline{1-4}
$\bar{A}_9(18,6,7)$&	\makecell[c]{30432527300256357008\\3080541775\underline{904303773}\\\underline{50854469863880}}& \makecell[c]{30432527300256357008\\3080541775\underline{894454545}\\\underline{00732998260801}}&\makecell[c]{98492285\\01214706\\03079}&\\
\cline{1-5}
$\bar{A}_3(19,6,7)$&	\makecell[c]{4239331492375343932\\\underline{46}\underline{93652326}} & \makecell[c]{4239331492375343932\\\underline{35}\underline{17041952}}&\makecell[c]{1176610\\374}&\\
\cline{1-4}
$\bar{A}_4(19,6,7)$&\makecell[c]{13292330676252636392\\0961\underline{2245368813119}}& \makecell[c]{ 13292330676252636392\\0961\underline{1969417164351}}&\makecell[c]{27595164\\8768}&\\
\cline{1-5}
\end{tabular}}
\end{table}

	\begin{table}[H]
	\centering
	
	\vskip 2mm \setlength{\tabcolsep}{6pt}
	%\setlength{\tabcolsep}{6pt}
	%\resizebox{\textwidth}{20mm}{
{
\begin{tabular}{|c|c|c|c|>{\centering\arraybackslash}m{2cm}|}

\cline{1-5}

$\bar{A}_5(19,6,7)$	&\makecell[c]{867362182106035125792\\0103091\underline{3141407659}\\\underline{5124}}&\makecell[c]{ 867362182106035125792\\0103091\underline{1231007268}\\\underline{8874}}&\makecell[c]{19104003\\906250}&    \\
\cline{1-4}
$\bar{A}_7(19,6,7)$&\makecell[c]{5080218733289856707\\616639748793042\underline{5524}\\\underline{4135597816466}}&\makecell[c]{5080218733289856707\\616639748793042\underline{4384}\\	\underline{0492850933380}}&\makecell[c]{114036427\\46883086}& \cite{30}\\
\cline{1-4}
$\bar{A}_8(19,6,7)$&	\makecell[c]{1532495552283913956\\149369445213372457\\\underline{846736476350095871}}&\makecell[c]{ 1532495552283913956\\149369445213372457\\\underline{702586103902151167}}&\makecell[c]{144150372\\447944704}&\\
\cline{1-4}
$\bar{A}_9(19,6,7)$&	\makecell[c]{1797010304552837624\\96487559256535156470\\\underline{8692531023075402144}}&\makecell[c]{ 1797010304552837624\\96487559256535156470\\\underline{7341473414270315406}}&\makecell[c]{13510576\\08805086\\738}&\\
\hline
\end{tabular}}
\end{table}

\section*{Acknowledgement}
G. Wang and X. Gao are supported by the National Natural Science Foundation of China (No. 12301670), the Natural Science Foundation of Tianjin (No. 23JCQNJC00050), the Scientific Research Project of Tianjin Education Commission (No. 2022KJ075), the Fundamental Research Funds for the Central Universities of China (No. 3122023QD25, 3122024PT24) and the Graduate Student Research and Innovation Fund of Civil Aviation University of China. The French Agence Nationale de la Recherche partially supported the third author's work through ANR BARRACUDA (ANR-21-CE39-0009). F.-W Fu is supported by the National Key Research and Development Program of China (Grant No. 2018YFA0704703), the National Natural Science Foundation of China (Grant No. 61971243), the Natural Science Foundation of Tianjin (20JCZDJC00610), the Fundamental Research Funds for the Central Universities of China (Nankai University), and the Nankai Zhide Foundation.

%{\footnotesize

			%}
		
\end{document}